\documentclass[11pt,a4paper,oneside,english]{article}
\usepackage[T1]{fontenc}
\usepackage[utf8]{inputenc}
\usepackage[english]{babel}
\usepackage{latexsym}
\usepackage{midpage}
\usepackage{newlfont}
\usepackage{color}
\usepackage{array}
\usepackage{amsmath} 
\usepackage{amsfonts} 
\usepackage{amsthm} 
\usepackage{amssymb}
\usepackage{setspace}
\usepackage[hidelinks]{hyperref} 
\usepackage{url}
\usepackage[normalem]{ulem}
\usepackage{cleveref}
\usepackage{graphicx} 
\usepackage{caption}
\usepackage{subcaption} 
\usepackage{mathdots}
\usepackage{longtable}
\usepackage{booktabs}
\usepackage[margin=1in,left=1.04in,right=1.04in,includefoot]{geometry}
\usepackage{rotating}
\usepackage{listings}
\usepackage{arydshln}
\usepackage{tikz}
\usepackage{color}
\allowdisplaybreaks

\definecolor{darkgreen}{rgb}{0, .5, 0}

\theoremstyle{plain} 
\newtheorem{thrm}{Theorem}[section] 
\newtheorem{cor}[thrm]{Corollary} 
\newtheorem{lem}[thrm]{Lemma} 
\newtheorem{prop}[thrm]{Proposition} 
\theoremstyle{definition} 
 
\theoremstyle{remark}

\numberwithin{equation}{section}

\pdfsuppresswarningpagegroup=1

\providecommand{\abs}[1]{\ensuremath{\left\lvert#1\right\rvert}}
\newcommand{\E}{\mathbb{E}} 
\newcommand{\N}{\mathbb{N}} 
\newcommand{\Q}{\mathbb{Q}} 
\newcommand{\I}{\mathbb{I}} 
\newcommand{\R}{\mathbb{R}} 
\newcommand{\F}{\mathcal{F}} 

\newcommand{\U}{\mathcal{U}} 
\newcommand{\W}{\mathbb{W}} 
\newcommand{\q}{\mathcal{Q}} 
\newcommand{\h}{\mathcal{H}} 
\newcommand{\D}{\mathcal{D}} 
\newcommand{\V}{\mathcal{V}}
\newcommand{\set}{\mathcal{O}}
\DeclareMathOperator*{\argmin}{argmin}
\DeclareMathOperator*{\minimize}{minimize}
\newcommand{\norm}[1]{\left\lVert#1\right\rVert}


\author{Fred Espen Benth\thanks{\textsc{Department of Mathematics, University of Oslo, 0316 Blindern, Norway}; \textbf{fredb@math.uio.no}} 
	\and Nils Detering\thanks{\textsc{Department of Statistics and Applied Probability, University of California, Santa Barbara, CA 93106, USA}; \textbf{detering@pstat.ucsb.edu}} 
	\and Silvia Lavagnini\thanks{\textsc{Department of Mathematics, University of Oslo, 0316 Blindern, Norway}; \textbf{silval@math.uio.no}}}

\title{Accuracy of Deep Learning \\in Calibrating HJM Forward Curves}

\singlespacing
\begin{document}
	\maketitle
\begin{abstract}
	We price European-style options written on forward contracts in a commodity market, which we model with an infinite-dimensional Heath-Jarrow-Morton (HJM) approach. For this purpose we introduce a new class of state-dependent volatility operators that map the square integrable noise into the Filipovi{\'{c}} space of forward curves. For calibration, we specify a fully parametrized version of our model and train a neural network to approximate the true option price as a function of the model parameters. This neural network can then be used to calibrate the HJM parameters based on observed option prices. We conduct a numerical case study based on artificially generated option prices in a deterministic volatility setting. In this setting we derive closed pricing formulas, allowing us to benchmark the neural network based calibration approach. We also study calibration in illiquid markets with a large bid-ask spread. The experiments reveal a high degree of accuracy in recovering the prices after calibration, even if the original meaning of the model parameters is partly lost in the approximation step.
\end{abstract}

\paragraph*{Keywords} Heath-Jarrow-Morton approach; Infinite dimension; Energy markets;  Option pricing; Neural networks; Model calibration.

\section{Introduction}
\label{introduction}
We follow the Heath-Jarrow-Morton (HJM) approach and model the commodity forward curve by a stochastic partial differential equation (SPDE) with values in the Filipovi{\'{c}} space and state-dependent volatility operator. In our setting, the Hilbert space valued Wiener process driving the forward curve takes values in $L^2(\set)$, $\set$ being some Borel subset of $ \R$ (possibly $\R$ itself). This requires that the volatility operator must smoothen elements in $L^2(\set)$ into elements of the Filipovi{\'{c}} space. We achieve this by constructing the volatility as an integral operator with respect to some suitably chosen kernel function. 

We then focus on the pricing of European-style options on forward contracts with delivery period, also called swaps. Typical examples are forward contracts in the electricity market, such as the ones traded at Nord Pool AS and the European Energy Exchange (EEX). In particular, when considering stochastic volatility operators, as in a state-dependent model, it is in general not possible to derive closed pricing formulas for options written on the forward curve. Hence one has to resort to time consuming numerical methods, such as Monte Carlo techniques for SPDEs (see \cite{barth12}, \cite{barth13}) or Finite Elements methods (see \cite{barth12a}, \cite{kov10}). Such costly procedures render calibration almost impossible as the pricing function has to be evaluated in each step of the optimization routine. As a result, these more accurate models are often not used in practice.

To overcome the computational burden and to make infinite dimensional models more tractable, especially from a calibration point of view, machine learning may be useful. Machine learning is in fact replacing standard techniques in many different fields of scientific computing, and recently also in finance. The first application of machine learning to option pricing is probably to be credited to \cite{Hutchinson}, where the authors propose to use a neural network for estimating the pricing formula of derivatives in a purely data-driven way without any parametric assumption on the dynamics of the underlying asset. More recently, in \cite{green}, for example, machine learning techniques are used for the evaluation of derivatives, and in \cite{teich19} for the problem of hedging a portfolio of derivatives with market frictions; in \cite{madan18} and \cite{kondratiev} machine learning tools are employed to learn the volatility surface and the term structure of forward crude oil prices, respectively. For model calibration, a first successful attempt is due to \cite{Hernandez}, who proposed to train a neural network that returns the optimal model parameters  based on market price observations. This approach has then been extended in \cite{deep2} by approximating the direct pricing map instead, and invert it afterwards in a second calibration step. Recently, \cite{christa} suggested to employ tools from generative adversarial networks to solve the calibration problem in the context of local stochastic volatility models. 
Finally, \cite{dixon} develops a neural network approach for interpolation of European vanilla option prices which also enforces arbitrage conditions.

Here we adapt the strategy presented in \cite{deep2} to the infinite dimensional HJM setting described above, and approximate the pricing function by a neural network in an off-line training step. Then, for calibration to market data, we use the trained neural network to recover the model parameters in an optimization routine. The training step is a priori cost demanding, especially if the underlying stochastic model is complex and pricing requires costly simulations, as for example in a stochastic volatility model. But even if generating the training set and training the neural network is computationally expensive, the advantage is that it is performed one-off. To ensure that the model reflects the current market situation, it is sufficient to rerun the calibration step regularly, say daily or even intra-daily. This step requires only evaluation of the neural network and is therefore fast.

In order to apply this procedure, we specify a fully parametric version of our HJM model. To access the accuracy of neural networks for pricing and calibration in the infinite dimensional HJM setting, we restrict the numerical experiments to a setting with deterministic volatility. In this setting we derive analytic pricing formulas based on a representation theorem for swaps presented in \cite{fredkru1}. We can thus benchmark the neural network based pricing map without introducing additional error due to Monte Carlo simulation. For the calibration step we consider the two different approaches presented in \cite{nn} for a stochastic volatility model in finite dimension, the pointwise and the grid-based learning approach. We also extend the framework to allow for calibration in markets with a wide bid and ask spread, where using a mid price is not feasible.  To our knowledge, this is the first application of deep neural networks in an infinite dimensional HJM setup for the purpose of model calibration. 

In the approximation step we observe a high degree of accuracy, with an average relative error for the test set in the range $0.3\%$-$3\%$. The picture is different when it comes to calibration. Here the trained neural network might fail to recover the true parameters, and we do indeed observe average relative errors reaching almost $50\%$ in some cases. In the specified model, several parameter vectors lead to similar prices for the training set, making it difficult to recover the true parameters. Moreover, the trained neural network itself can be non-injective in the input parameters, and this may cause the original meaning of the parameters to get lost in the approximation step. Nevertheless, the prices estimated after calibration have an average relative error around $5\%$. Hence the level of accuracy achieved for the prices shows that neural networks may indeed be an important and promising tool to make infinite dimensional models more tractable.

For the calibration in market environments with bid-ask spread, we propose a simple loss function which works rather well. We test it with respect to different bid-ask spread sizes, and in fact, after calibration, almost all prices lie within the bid-ask range and only few are outside, but still very close to either the bid or the ask price. In particular, whenever the bid-ask spread becomes more narrow, one can simply increase the number of iterations for the optimization routine to obtain good results. The optimization is fast because it does not require any simulation. Moreover, the observed errors in recovering the parameters are not increasing dramatically as compared to the calibration based on a zero bid-ask spread.

The rest of the article is organized as follows. In Section \ref{HJMsection} we define the HJM forward curve dynamics and our volatility operators, and we specify a deterministic version of it. In Section \ref{forwardsection} we introduce forward contracts with delivery period and options written on them, and we derive the pricing formulas used in our case study. In Section \ref{nnsection} we define neural networks and introduce the two-step approach for model calibration, together with the newly proposed bid-ask loss function. Finally, in Section \ref{oursetting} we specify the parametric setting for the experiments, in Section \ref{numericalsection} we show our findings and Section \ref{conclusion} concludes. Appendix \ref{appendix:proofs} contains the proofs to all results, while in Appendix \ref{injectivity} we analyse the non-injectivity issue related to the calibration problem.

\section{The forward curve dynamics}
\label{HJMsection}
In energy markets, forwards and futures on power and gas deliver the underlying commodity over a period of time, rather than at a fixed delivery time. While one usually derives the arbitrage-free forward price in a model free way from the buy-and-hold strategy in the underlying, in energy markets this strategy can not be applied because storing electricity is costly. This implies that the forward price as derived from the spot price model is not backed by a replication strategy. Instead, what is often adopted as alternative in energy markets is the direct modelling of the tradable forward prices. This is referred to as the Heath-Jarrow-Morton (HJM) approach, as it has first been introduced by Heath, Jarrow and Morton in \cite{HJM} for interest rate markets. In this setting, the original idea was to model the entire forward rate curve directly because short-rate models are not always flexible enough to be
calibrated to the observed initial term-structure \cite{filipovic2}. Later this modelling idea has been transferred to other markets. The authors of \cite{carmona} and \cite{kallsen}, for example, model the whole call option price surface using the HJM methodology, while the authors of \cite{clew} and \cite{fredsteen} transferred the approach to commodity forward markets, the latter one, in particular, in the context of power markets. 

Another important characteristic of energy forward markets is the high degree of idiosyncratic risk across different maturities, which has been observed by several studies, such as \cite{andre10}, \cite{koeke05} and \cite{frestad08}. In \cite{fredbook}, for example, the authors performed a Principal Component Analysis on the Nord Pool AS forward contracts, revealing that more than ten factors are needed to explain $95\%$ of the volatility. This points out the necessity of modelling the time dynamics of the forward curve by a high dimensional, possibly infinite dimensional, noise process. In \cite{fredkru2} the authors show that a reasonable state space for the forward curve is the so-called Filipovi{\'{c}} space, which is a separable Hilbert space first introduced by \cite{filipovic}. We shall adopt in this paper a similar framework, which is now introduced in detail.

Let $\left(\Omega, \F, (\F_t)_{t\geq 0},\mathbb{Q}\right)$ be a filtered probability space, with $\mathbb{Q}$ the risk-neutral probability. We denote by $\h_{\alpha}:= \h_{\alpha}(\R_{+})$ the Filipovi{\'{c}} space on $\R_{+}$: for a given continuous and non-decreasing function $\alpha : \R_{+}\to [1,\infty)$ with $\alpha(0)=1$ this is the Hilbert space of all absolutely continuous functions $f: \R_{+} \to \R$ for which
\begin{equation}
	\label{filipoviccond}
	\int_{\R_+}f'(x)^2\alpha(x)dx < \infty .
\end{equation}
It turns out that $\h_{\alpha}$ is a separable Hilbert space with inner product defined by
\begin{equation*}
	\langle f_1,f_2\rangle_{\alpha} := f_1(0)f_2(0)+\int_{\R_+}f_1'(x)f'_2(x)\alpha(x)dx,
\end{equation*}
for $f_1, f_2\in\h_{\alpha}$  and norm $\norm{ f}^2_{\alpha}:= \langle f,f\rangle_{\alpha}$.
We assume $\int_{\R_+}\alpha^{-1}(x)dx<\infty$. A typical example is $\alpha(x)= e^{\alpha x}$, for $\alpha >0$. We point out that condition \eqref{filipoviccond} reflects the fact that the forward curves flatten for large time to maturity $x$. Moreover, the term $f_1(0)f_2(0)$ in the inner product definition is required to distinguish two functions $f_1$ and $f_2$ such that $f_1-f_2$ is constant. At the same time it ensures that $\langle f,f\rangle_{\alpha}=0$ if and only if $f$ is the function constantly equal to zero. We refer to \cite{filipovic} for more properties of $\h_{\alpha}$.

For a Borel set $\set\subseteq\R$, we define $\h:= L^2(\set)$ as the Hilbert space of all square integrable functions. We further denote by $\lambda_{\set}$ the Lebesgue measure induced on $\set$, and by $\norm{\,\cdot \,}$ and $\langle \,\cdot\,,\,\cdot\,\rangle$ the norm and scalar product on $\h$. We assume that $\set$ is such that $\h$ is a separable Hilbert space, and we shall consider $\h$ as the noise space, similar as in \cite{fredbarth}, \cite{fredkru2} and \cite{fredflor}. 

A random variable $X$ with values in $\h$ is said to have a Gaussian distribution if for every $h\in \h$ there exist $\mu,\sigma \in \mathbb{R}$ with $\sigma \geq 0$ such that $\langle h,X \rangle $ is normally distributed on $\mathbb{R}$ with mean $\mu$ and variance $\sigma^2$ (i.e. $X \sim \mathcal{N} (\mu,\sigma^2)$).
	Following \cite[Sec. 2.3.1]{sdebook}, for a Gaussian distributed random variable $X$ with values in $\h$  there exist $m\in \h $ such that $\E [\langle h ,X \rangle ] = \langle m ,h \rangle$ for all $h\in \h$ and an operator $\q\in \mathcal{L}(\h, \h)$ with finite trace such that $\E [ \langle h_1,X-m \rangle \langle h_2,X-m  \rangle]= \langle \q h_1 ,h_2 \rangle $ for all $h_1,h_2 \in \h$. The vector $m\in \h$ is called the {\em mean} and $\q$ the {\em covariance operator}. We write $X\sim \mathcal{N} (m ,\q )$. 
	
	Following \cite[Sec. 4.1.1, Def. 4.2]{sdebook} an $\h$-valued process $\W :=\{\W(t)\}_{t\ge 0}$ is called $\q$-Wiener process if 
	\begin{itemize}
		\item $\W(0) = 0$, 
		\item $\W$ has continuous trajectories,
		\item $\W$ has independent increments,
		\item $\W (t) - \W (s) \sim \mathcal{N} (0, (t-s) \q)$.
	\end{itemize}

When it becomes clear from the context we will just refer to Wiener process instead of $\q$-Wiener process. We observe that for an $\h$-valued Wiener process it holds that $\{ \langle h, \W(t) \rangle \}_{t\ge 0}$ is a real-valued Wiener process, for every $h\in \h$. Thus the definition of an $\h$-valued Wiener process naturally generalizes Wiener processes in $\mathbb{R}^d$ in which case $\{ \langle h, \W(t) \rangle \}_{t\ge 0}$ is just a weighted sum of the components of the Wiener process itself. Since the elements in $\h$ are functions defined on the uncountable domain $\mathcal{O}$, it is however important to keep in mind that an $\h$-valued Wiener process is not simply an uncountable collection of one-dimensional Wiener processes indexed by elements in $\mathcal{O}$.

Let now $f(t, T)$ be the forward price in the commodity market at time $t\le T$ for a contract with time of delivery $T$. We then turn to the Musiela parametrization denoted by $g(t,x):= f(t, t+x)$ for $x:= T-t$. Since forward contracts are tradable assets, they should be martingales under $\Q$ and the drift in the dynamics of $f(t, T)$ has to be zero. However, when introducing the Musiela parametrization, we get an extra term in the drift of the dynamics for the curve. Indeed the differential of $g$ is
\begin{equation*}
	d g(t, x) = d f(t, t+x) + \partial_T f(t, t+x)dt =  d f(t, T) +\partial_x g(t, x)dt,
\end{equation*}
where $d f(t, T)$ must have zero drift due to the martingale condition. Then, similar to \cite[Section 3.2]{fredkru1}, a reasonable dynamics of $g$ under the risk neutral measure is given by
\begin{equation}
	\label{sdeg}
	dg (t,x) = \partial _x g (t,x) dt + \sigma (t,x) d\W (t, x),
\end{equation} 
where $\partial_x$ is the generator for the shift-semigroup $\{\U_t\}_{t\ge 0}$ given by $\U_tg(x)=g(t+x)$, for any $t,x\in \R_{+}$ and $g\in \h_{\alpha}$. Here 
$\sigma (t,\cdot)\in \mathcal{L}(\mathcal{H},\mathcal{H}_{\alpha})$ is a linear and bounded operator from $\h$ to $\h_{\alpha}$. We point out that in interest rates markets, the bonds are tradable assets. Hence a crucial requirement is that the discounted bond price process is a martingale under the risk-neutral probability measure $\Q$. However, since in energy markets we require the forward prices themselves to be martingales under $\Q$, the condition of null drift must be imposed on the forward dynamics instead.

We want now to rewrite equation \eqref{sdeg} in order to model the entire forward curve $\left(g(t,x)\right)_{x\ge 0}$ for any time $t\ge0$, and to allow for the volatility function $\sigma$ to be stochastic itself. Without introducing any other external noise source, for instance with a second dynamics for the volatility, we consider $\sigma$ to be state-dependent, namely depending on the current level of the forward curve $g$. We then obtain the following SPDE
\begin{equation}
	\label{sdeg2}
	dg_t = \partial _x g_t dt + \sigma_t (g_t) d\W_t,
\end{equation} 
where $g_t := g(t, \cdot)$, $\sigma_t(g_t) := \sigma (t, g_t)$ and $\W_t := \W (t, \cdot)$. For our purpose, we are interested in mild solutions of equation \eqref{sdeg2} in terms of the semigroup $\{\mathcal{U}_t\}_{t\geq 0}$. These are defined for every $\tau \ge t$ and for the initial condition $g_t = g(t, \cdot )\in \h_{\alpha}$ by
\begin{equation}
	\label{mildsol}
	g_{\tau} = \U_{\tau-t} g _t + \int_t^{\tau}\U_{\tau-s}\sigma_s (g_s) d\W_s.
\end{equation}
The mild solution is the natural extension of the variation of constants method for ordinary differential equations. Here $\mathcal{U}_t$ plays a similar role as the exponential $\exp{(tA)}$ in the solution $v: \mathbb{R} \rightarrow \mathbb{R}^d$ of the homogeneous linear equation $dv-Av dt=0$ for $A \in \mathbb{R}^d \times \mathbb{R}^d$. However, since $\partial_x$ is unbounded, the operator exponential $\exp{(t \partial_x)}$ is not defined and one needs new concepts to relate $\partial_x$ and $\mathcal{U}_t$ \cite{EngelNagel}. We finally notice that, since $\sigma_t (g_t) \in \mathcal{L}(\mathcal{H},\mathcal{H}_{\alpha})$ is an operator, then by writing $\sigma_t (g_t) d\W_t$ in equation \eqref{sdeg2} or \eqref{mildsol}, we in fact mean the application of $\sigma_t (g_t)$ to $d\W_t$.

The following theorem states conditions to ensure that the mild solution \eqref{mildsol} is well defined and unique for every initial condition $g_t\in \h_{\alpha}$.
\begin{thrm}
	\label{solutionth}
	Let us assume that the mapping
	\begin{equation*}
		\sigma: \R_{+}  \times \h_{\alpha} \to  \mathcal{L}(\h, \h_{\alpha}), \quad(t, g_t) \mapsto   \sigma_t(g_t)
	\end{equation*}
	is measurable and that there exists an increasing function $C:\R_{+}\to \R_{+}$ such that for all $f_1,f_2 \in \h_{\alpha}$ and $t \in\R_{+}$ we have
	\begin{align*}
		&\norm{\sigma_t(f_1) - \sigma_t(f_2)}_{\mathcal{L}(\h, \h_{\alpha})} \le C(t)\norm{f_1-f_2}_{\alpha},\\
		&\norm{\sigma_t(f_1)}_{\mathcal{L}(\h, \h_{\alpha})}\le C(t)(1+\norm{f_1}_{\alpha}).
	\end{align*}
	Then for every $\tau \ge t$ and every initial condition $g_t = g(t, \cdot )\in \h_{\alpha}$, there exists a unique mild solution to equation \eqref{sdeg2} given by equation \eqref{mildsol}.
\end{thrm}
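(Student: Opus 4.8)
The plan is to obtain the mild solution \eqref{mildsol} by a Banach fixed point argument (Picard iteration) in a suitable space of square-integrable processes, which is the standard route for semilinear SPDEs in Hilbert space; see \cite[Ch.~7]{sdebook}. Fix a finite horizon $T>t$; since $C$ is increasing it suffices to construct the solution for $\tau\in[t,T]$ and then let $T\to\infty$, patching the consistent solutions on overlapping intervals. Let $\mathcal{Z}_T$ be the Banach space of (equivalence classes of) predictable processes $Y=\{Y_\tau\}_{\tau\in[t,T]}$ with values in $\h_{\alpha}$ such that $\norm{Y}_{\mathcal{Z}_T}^2:=\sup_{\tau\in[t,T]}\E\big[\norm{Y_\tau}_{\alpha}^2\big]<\infty$, and define on it the map
\begin{equation*}
	(\mathcal{K}Y)_\tau := \U_{\tau-t}\,g_t + \int_t^{\tau}\U_{\tau-s}\,\sigma_s(Y_s)\,d\W_s .
\end{equation*}
A mild solution of \eqref{sdeg2} is exactly a fixed point of $\mathcal{K}$, so the task reduces to showing that some iterate $\mathcal{K}^n$ is a strict contraction on $\mathcal{Z}_T$.

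First I would verify that $\mathcal{K}$ maps $\mathcal{Z}_T$ into itself. Joint measurability of $\sigma$ together with predictability of $Y$ makes $s\mapsto\sigma_s(Y_s)$ an admissible operator-valued integrand. Since $\{\U_\tau\}$ is a $C_0$-semigroup on $\h_{\alpha}$ it is bounded on $[0,T]$, say $\norm{\U_\tau}_{\mathcal{L}(\h_{\alpha})}\le M_T$, and since $\q$ has finite trace the operator $\q^{1/2}$ is Hilbert--Schmidt, whence for any $\Phi\in\mathcal{L}(\h,\h_{\alpha})$ one has $\norm{\Phi\q^{1/2}}_{\mathrm{HS}(\h,\h_{\alpha})}\le\norm{\Phi}_{\mathcal{L}(\h,\h_{\alpha})}\sqrt{\operatorname{tr}\q}$. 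Combining this with the Itô isometry for the stochastic convolution and the linear growth bound on $\sigma$ gives
\begin{equation*}
	\E\big[\norm{(\mathcal{K}Y)_\tau}_{\alpha}^2\big]\le 2M_T^2\norm{g_t}_{\alpha}^2 + 2M_T^2\operatorname{tr}(\q)\,C(T)^2\int_t^{\tau}\big(1+\E[\norm{Y_s}_{\alpha}^2]\big)\,ds<\infty ,
\end{equation*}
and the stochastic convolution admits a predictable modification in the usual way, so $\mathcal{K}Y\in\mathcal{Z}_T$.

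Next I would establish the contraction estimate. For $Y^1,Y^2\in\mathcal{Z}_T$ the same isometry argument combined with the Lipschitz bound on $\sigma$ produces a constant $L_T:=M_T^2\operatorname{tr}(\q)\,C(T)^2$ such that
\begin{equation*}
	\E\big[\norm{(\mathcal{K}Y^1)_\tau-(\mathcal{K}Y^2)_\tau}_{\alpha}^2\big]\le L_T\int_t^{\tau}\E\big[\norm{Y^1_s-Y^2_s}_{\alpha}^2\big]\,ds .
\end{equation*}
Iterating this inequality $n$ times yields $\norm{\mathcal{K}^nY^1-\mathcal{K}^nY^2}_{\mathcal{Z}_T}^2\le \tfrac{(L_T(T-t))^n}{n!}\,\norm{Y^1-Y^2}_{\mathcal{Z}_T}^2$, so for $n$ large enough $\mathcal{K}^n$ is a strict contraction on the Banach space $\mathcal{Z}_T$ (equivalently, one may replace $\norm{\cdot}_{\mathcal{Z}_T}$ by the equivalent weighted norm $\sup_\tau e^{-\beta\tau}\E[\norm{Y_\tau}_{\alpha}^2]^{1/2}$ and pick $\beta$ large). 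By the Banach fixed point theorem $\mathcal{K}$ then has a unique fixed point $g\in\mathcal{Z}_T$, i.e.\ a unique mild solution on $[t,T]$; letting $T$ vary gives \eqref{mildsol} for all $\tau\ge t$.

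The only genuinely delicate step is the handling of the stochastic convolution term: one must check that $\U_{\tau-s}\sigma_s(Y_s)$ is a legitimate $\q$-integrable integrand and that the Itô isometry applies, which is precisely where the finite trace of $\q$, the boundedness of the shift semigroup on $\h_{\alpha}$, and the standing assumption $\sigma_t(\cdot)\in\mathcal{L}(\h,\h_{\alpha})$ come into play. Everything else is routine Picard-iteration bookkeeping, with the growth and Lipschitz constants controlled on bounded time intervals thanks to monotonicity of $C$.
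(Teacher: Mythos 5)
Your Picard/Banach fixed-point argument is correct and is precisely the standard route: the paper itself offers no proof but delegates to Tappe (2012) (and implicitly to Da Prato--Zabczyk, Ch.~7), where existence and uniqueness of the mild solution is obtained by exactly this kind of contraction argument on a space of predictable, mean-square bounded $\h_{\alpha}$-valued processes, using the trace-class property of $\q$ and the boundedness of the shift semigroup on $\h_{\alpha}$ to control the stochastic convolution. Your proposal therefore supplies, correctly, the argument the paper outsources to its reference.
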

\begin{proof}
	We refer to \cite{T2012}.  
\end{proof}

\subsection{The volatility operator}
\label{volsection}
We focus in this section on possible specifications for the volatility operator $$\sigma: \R_{+}  \times \h_{\alpha} \to  \mathcal{L}(\h, \h_{\alpha}).$$ The results derived here might be of interest independent of this study. In our framework, the volatility operator $\sigma_t(f)$ must turn elements of $\h$ into elements of $\h_{\alpha}$. It thus has to smoothen the noise, and one way to do that is by integrating it over a suitably chosen kernel function. The following theorem states sufficient conditions on such a kernel function for our purpose.
\begin{thrm}
	\label{kernel:cond}
	For $t\ge0$, let $\kappa_t:  \mathbb{R}_+ \times \set \times \h_{\alpha} \rightarrow \mathbb{R}_+$ be a kernel function satisfying the following assumptions: 
	\begin{enumerate}
		\item $\kappa_t (x,\cdot,f) \in \h$ for every $x\in  \mathbb{R}_+, f \in \h_{\alpha}$. 
		\item For every $x\in \mathbb{R}_+ ,f \in \h_{\alpha}$, the derivative $\frac{\partial \kappa_t (x,y, f) }{\partial x}$ exists for $\lambda_{\set}$ almost all $y\in \set $. Moreover, there exist a neighbourhood $I_x$ of $x$ and a function $\bar{\kappa}_x \in \h $ such that $\left|\frac{\partial \kappa_t (x,y,f) }{\partial x} \right|  \leq \bar{\kappa}_x(y)$ for $\lambda_{\set}$ almost all $y$ on $I_x$.
		\item  $\int_{\mathbb{R}_+} \norm{ \frac{\partial \kappa_t (x,\cdot,f) }{\partial x} }^2 \alpha (x) dx < \infty $.
	\end{enumerate}
	Then
	\begin{equation}
		\label{sigma:operator}
		\sigma_t(f) :\h \rightarrow \h_{\alpha}, \quad h\mapsto \sigma_t(f) h := \int_{\set} \kappa_t (\cdot,y,f) h(y) dy
	\end{equation}
	is a linear and bounded operator from $\h$ to $\h_{\alpha}$, namely $\sigma_t(f) \in \mathcal{L}(\h, \h_{\alpha})$. 
	In particular, for every $x \in \R_{+}$, the equality $\sigma_t(f) h(x) = \langle \kappa_t (x,\cdot, f), h \rangle$ holds.
\end{thrm}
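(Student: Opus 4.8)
The plan is to \emph{define} $\sigma_t(f)h$ pointwise through the very identity asserted at the end of the statement, and then verify three things in turn: that this yields a well-defined element of $\h_\alpha$, that $h\mapsto\sigma_t(f)h$ is linear, and that it is bounded with a controllable operator norm. First, fix $t\ge0$ and $f\in\h_\alpha$. By assumption~1, $\kappa_t(x,\cdot,f)\in\h$ for every $x\in\R_+$, so for each $h\in\h$ the real number $\langle\kappa_t(x,\cdot,f),h\rangle=\int_\set\kappa_t(x,y,f)h(y)\,dy$ is finite by Cauchy--Schwarz. I would set $\sigma_t(f)h$ to be the function $x\mapsto\langle\kappa_t(x,\cdot,f),h\rangle$, which is precisely the claimed identity; linearity of $h\mapsto\sigma_t(f)h$ is then immediate from linearity of $\langle\kappa_t(x,\cdot,f),\cdot\rangle$.

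The heart of the argument is to show that this function is absolutely continuous on compact subintervals of $\R_+$, with derivative $\tfrac{d}{dx}\sigma_t(f)h(x)=\langle\partial_x\kappa_t(x,\cdot,f),h\rangle$ for almost every $x$. I would obtain this by differentiating under the integral sign: writing the difference quotient $\varepsilon^{-1}\big(\sigma_t(f)h(x+\varepsilon)-\sigma_t(f)h(x)\big)=\int_\set\varepsilon^{-1}\big(\kappa_t(x+\varepsilon,y,f)-\kappa_t(x,y,f)\big)h(y)\,dy$, the integrand converges $\lambda_\set$-a.e.\ to $\partial_x\kappa_t(x,y,f)h(y)$ by assumption~2, and by the mean value theorem it is dominated, for $|\varepsilon|$ small enough that $x+\varepsilon\in I_x$, by $\bar\kappa_x(y)|h(y)|$, which lies in $L^1(\set)$ since $\bar\kappa_x,h\in\h=L^2(\set)$; dominated convergence then gives the formula at every $x$. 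To promote pointwise differentiability to (local) absolute continuity I would invoke assumption~3: since $\alpha\ge1$, the map $x\mapsto\norm{\partial_x\kappa_t(x,\cdot,f)}$ belongs to $L^2(\R_+)\subset L^1_{\mathrm{loc}}(\R_+)$, and Cauchy--Schwarz bounds the derivative by $\norm{\partial_x\kappa_t(x,\cdot,f)}\,\norm{h}$, so $\sigma_t(f)h$ equals $\sigma_t(f)h(0)$ plus the integral of a locally integrable function. (Equivalently, one may check the identity $\sigma_t(f)h(x)-\sigma_t(f)h(0)=\int_0^x\langle\partial_x\kappa_t(s,\cdot,f),h\rangle\,ds$ directly by Fubini, the interchange being legitimised by the same Cauchy--Schwarz bound together with the absolute continuity in $s$ of $\kappa_t(\cdot,y,f)$ furnished by assumption~2.)

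It then remains to read off membership in $\h_\alpha$ and boundedness. By the previous step and the definition of $\norm{\cdot}_\alpha$,
\[
\norm{\sigma_t(f)h}_\alpha^2=\big(\sigma_t(f)h(0)\big)^2+\int_{\R_+}\Big(\tfrac{d}{dx}\sigma_t(f)h(x)\Big)^2\alpha(x)\,dx=\langle\kappa_t(0,\cdot,f),h\rangle^2+\int_{\R_+}\langle\partial_x\kappa_t(x,\cdot,f),h\rangle^2\alpha(x)\,dx ,
\]
and bounding each inner product by Cauchy--Schwarz gives
\[
\norm{\sigma_t(f)h}_\alpha^2\le\Big(\norm{\kappa_t(0,\cdot,f)}^2+\int_{\R_+}\norm{\partial_x\kappa_t(x,\cdot,f)}^2\alpha(x)\,dx\Big)\norm{h}^2 ,
\]
which is finite by assumptions~1 and~3. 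Hence $\sigma_t(f)h\in\h_\alpha$ and $\sigma_t(f)\in\mathcal{L}(\h,\h_\alpha)$, with operator norm at most the square root of the bracketed constant.

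The hard part will be the absolute-continuity step: one must splice together the \emph{local} pointwise domination of assumption~2, which licenses differentiation under the integral, with the \emph{global} $L^2$-in-$x$ control of assumption~3, which is what upgrades pointwise differentiability to genuine absolute continuity on compacts (mere everywhere-differentiability being insufficient); a minor point of care is that the $\lambda_\set$-null exceptional sets in $y$ may a priori depend on $x$, so the mean value theorem must be applied on a neighbourhood contained in $I_x$ where both existence and the bound hold. Everything else reduces to Cauchy--Schwarz and unwinding the definition of the Filipovi\'{c} norm.
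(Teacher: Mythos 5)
Your proposal is correct and follows essentially the same route as the paper's proof: well-definedness of $\sigma_t(f)h$ by Cauchy--Schwarz and Assumption~1, differentiation under the integral sign by dominated convergence using the local bound $\bar{\kappa}_x(y)\abs{h(y)}\in L^1(\set)$, and the weighted estimate $\int_{\R_+}\langle \partial_x\kappa_t(x,\cdot,f),h\rangle^2\alpha(x)\,dx\le \norm{h}^2\int_{\R_+}\norm{\partial_x\kappa_t(x,\cdot,f)}^2\alpha(x)\,dx$ from Assumption~3 to conclude membership in $\h_\alpha$ and boundedness. Your extra care in upgrading pointwise differentiability to local absolute continuity (which membership in the Filipovi\'{c} space actually requires) is a refinement the paper's proof passes over silently, not a different approach.
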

\begin{proof}
	See Appendix \ref{kernel:cond:proof}.
\end{proof}

In Section \ref{forwardsection} we also need the adjoint operator of $\sigma_t(f)$,
namely the operator $\sigma_t(f)^{*}\in \mathcal{L}(\h_{\alpha},\h)$ which for every $f_1\in \h_{\alpha}$ and every $h\in \h$ satisfies $$\langle\sigma_t(f) h, f_1\rangle_{\alpha}=\langle h, \sigma_t(f)^{*}f_1\rangle.$$ By \cite[Theorem 6.1]{funcanal} any operator $\sigma_t(f)^{*}$ satisfying this equality is automatically bounded and is thus the unique adjoint operator of $\sigma_t(f)$. Its expression is given in the following theorem.
\begin{thrm}	
	\label{dualthrm}
	Under the assumptions of Theorem \ref{kernel:cond}, the adjoint operator $\sigma_t(f)^{*}$ is given by 
	\begin{align*}
		&\sigma_t(f)^{*} : \h_{\alpha} \rightarrow \h, \\& f_1 \mapsto
		\sigma_t(f)^{*} f_1:= \kappa_t(0,\cdot,f) f_1(0) + \int_{\R_{+}}\frac{\partial \kappa_t(x,\cdot,f)}{\partial x}f_1'(x)\alpha(x)dx.
	\end{align*}
	In particular, for every $y \in \R_{+}$, the equality $\sigma_t(f)^{*} f_1(y) = \langle \kappa_t (\cdot,y,f), f_1\rangle_{\alpha}$ holds.
\end{thrm}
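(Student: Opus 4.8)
The plan is to compute the Filipovi\'c inner product $\langle \sigma_t(f)h,f_1\rangle_{\alpha}$ explicitly and rewrite it in the form $\langle h,\psi\rangle$ for a vector $\psi\in\h$ depending linearly on $f_1$; by the uniqueness of the adjoint recalled just above the statement, this forces $\sigma_t(f)^{*}f_1=\psi$, and $\psi$ will turn out to be exactly the claimed expression.

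First I would expand, for arbitrary $h\in\h$ and $f_1\in\h_{\alpha}$,
\[
\langle \sigma_t(f)h,f_1\rangle_{\alpha}=\bigl(\sigma_t(f)h\bigr)(0)\,f_1(0)+\int_{\R_{+}}\bigl(\sigma_t(f)h\bigr)'(x)\,f_1'(x)\,\alpha(x)\,dx .
\]
The boundary term is $\langle\kappa_t(0,\cdot,f),h\rangle\,f_1(0)$ by the pointwise identity in Theorem \ref{kernel:cond}. For the derivative term I would differentiate under the integral sign in $\sigma_t(f)h(x)=\int_{\set}\kappa_t(x,y,f)h(y)\,dy$; this is justified because assumption~2 of Theorem \ref{kernel:cond} supplies a local $\h$-dominating function $\bar\kappa_x$ and $\bar\kappa_x\,h\in L^{1}(\set)$ by Cauchy--Schwarz, so dominated convergence yields $\bigl(\sigma_t(f)h\bigr)'(x)=\langle\partial_x\kappa_t(x,\cdot,f),h\rangle$.

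Substituting both identities produces a double integral in $x$ and $y$, and the heart of the argument is to interchange the two integrations. Fubini's theorem applies: estimating the inner $y$-integral by $\norm{\partial_x\kappa_t(x,\cdot,f)}\,\norm{h}$ and then using Cauchy--Schwarz in $x$ with weight $\alpha(x)\,dx$ together with assumption~3 and $f_1\in\h_{\alpha}$ shows the double integral is absolutely convergent. After the swap the integrand equals $h(y)$ times
\[
\psi(y):=\kappa_t(0,y,f)\,f_1(0)+\int_{\R_{+}}\partial_x\kappa_t(x,y,f)\,f_1'(x)\,\alpha(x)\,dx ,
\]
so $\langle\sigma_t(f)h,f_1\rangle_{\alpha}=\langle h,\psi\rangle$ for every $h\in\h$. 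To conclude I would check $\psi\in\h$: the first summand lies in $\h$ by assumption~1, and Minkowski's integral inequality bounds the $L^{2}(\set)$-norm of the second summand by $\int_{\R_{+}}\norm{\partial_x\kappa_t(x,\cdot,f)}\,\abs{f_1'(x)}\,\alpha(x)\,dx$, which is finite by the same Cauchy--Schwarz estimate. Hence $\sigma_t(f)^{*}f_1=\psi$, and the final pointwise identity $\sigma_t(f)^{*}f_1(y)=\langle\kappa_t(\cdot,y,f),f_1\rangle_{\alpha}$ is just $\psi(y)$ rewritten via the definition of $\langle\cdot,\cdot\rangle_{\alpha}$.

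The only real obstacles are the two analytic interchanges — differentiation under the integral and the Fubini swap — but these are precisely what assumptions~2 and~3 of Theorem \ref{kernel:cond} are tailored for, so the task reduces to producing the dominating and integrable majorants rather than to anything conceptual. A minor point worth flagging is that the closing pointwise identity is naturally read for $\lambda_{\set}$-almost every $y$, namely those $y$ for which $\kappa_t(\cdot,y,f)$ is genuinely an element of $\h_{\alpha}$, which holds for a.e.\ $y$ by assumption~3 and Fubini.
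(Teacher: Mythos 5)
Your proposal follows essentially the same route as the paper's proof: expand the Filipovi\'c inner product, justify the interchange of the $x$- and $y$-integrations by the Cauchy--Schwarz/Assumption~3 estimate, and invoke uniqueness of the adjoint. The only additions are your explicit Minkowski-inequality check that $\psi\in\h$ and the remark about the identity holding for $\lambda_{\set}$-a.e.\ $y$, both of which the paper leaves implicit; these are welcome but do not change the argument.
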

\begin{proof}
	See Appendix \ref{dualthrm:proof}.
\end{proof}

We shall now look at conditions that also ensure that $\sigma_t (f)$ fulfils the assumptions of Lipschitz continuity and linear growth of Theorem \ref{solutionth}.
\begin{thrm}
	\label{kernel:lipcond}
	Let $\kappa_{t}: \mathbb{R}_+ \times \set \times \h_{\alpha}\rightarrow \mathbb{R}_+$ be a kernel function satisfying the assumptions of Theorem~\ref{kernel:cond}. Further there exists an increasing function $C: \R_{+}\to \R_{+}$ such that, for every $f_1, f_2\in \h_\alpha$, it holds:
	\begin{enumerate}
		\item $\norm{\kappa_{t} (0,\cdot, f_1)- \kappa_{t} (0,\cdot, f_2) } \leq C(t)  \abs{f_1(0) -f_2(0)}\,,$
		\item[] $\norm{\frac{\partial \kappa_{t} (x,\cdot, f_1)}{\partial x} - \frac{\partial \kappa_{t} (x,\cdot, f_2 )}{\partial x}} \leq C(t) \abs{f_1'(x)-f_2'(x)}\,.$
		\item $\norm{\kappa_{t} (0,\cdot,f_1)} \leq C(t) (1+ |f_1(0)|)\,,$
		\item[] $ \norm{\frac{\partial \kappa_{t} (x,\cdot,f_1 )}{\partial x}} \leq C(t) |f_1'(x)|\,.$
	\end{enumerate}
	Then $\sigma_t$ defined in equation \eqref{sigma:operator} satisfies the Lipschitz and growth conditions of Theorem~\ref{solutionth}.	
\end{thrm}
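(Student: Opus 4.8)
The plan is to read off the operator norms of $\sigma_t(f_1)-\sigma_t(f_2)$ and of $\sigma_t(f_1)$ directly from the pointwise representation established in Theorem~\ref{kernel:cond}, namely $\sigma_t(f)h(x)=\langle\kappa_t(x,\cdot,f),h\rangle$, together with the derivative formula $\partial_x\big(\sigma_t(f)h\big)(x)=\langle\partial_x\kappa_t(x,\cdot,f),h\rangle$ that is produced in the proof of Theorem~\ref{kernel:cond} (by differentiating under the integral, which is legitimate thanks to the local domination in assumption~2 there). Fix $h\in\h$ with $\norm{h}\le1$ and $f_1,f_2\in\h_\alpha$. Then $(\sigma_t(f_1)-\sigma_t(f_2))h(x)=\langle\kappa_t(x,\cdot,f_1)-\kappa_t(x,\cdot,f_2),h\rangle$, and, since each of the two kernels satisfies the local domination of assumption~2 of Theorem~\ref{kernel:cond}, so does their difference (with envelope the sum of the two envelopes, which is again in $\h$); hence one may differentiate under the integral for the difference as well, giving $\partial_x\big[(\sigma_t(f_1)-\sigma_t(f_2))h\big](x)=\langle\partial_x\kappa_t(x,\cdot,f_1)-\partial_x\kappa_t(x,\cdot,f_2),h\rangle$ for a.e.\ $x$.

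The first step is the Lipschitz bound. By Cauchy--Schwarz in $\h$ and the first estimate of hypothesis~1 one gets $\abs{(\sigma_t(f_1)-\sigma_t(f_2))h(0)}\le\norm{\kappa_t(0,\cdot,f_1)-\kappa_t(0,\cdot,f_2)}\,\norm{h}\le C(t)\abs{f_1(0)-f_2(0)}$, and by Cauchy--Schwarz and the second estimate of hypothesis~1 one gets $\abs{\partial_x[(\sigma_t(f_1)-\sigma_t(f_2))h](x)}\le C(t)\abs{f_1'(x)-f_2'(x)}$ for a.e.\ $x$. Inserting these two inequalities into the definition of $\norm{\cdot}_\alpha^2$ yields
\[
\norm{(\sigma_t(f_1)-\sigma_t(f_2))h}_\alpha^2 \le C(t)^2\Big(\abs{f_1(0)-f_2(0)}^2+\int_{\R_+}\abs{f_1'(x)-f_2'(x)}^2\alpha(x)\,dx\Big)=C(t)^2\norm{f_1-f_2}_\alpha^2 .
\]
Taking the supremum over $\norm{h}\le1$ gives $\norm{\sigma_t(f_1)-\sigma_t(f_2)}_{\mathcal L(\h,\h_\alpha)}\le C(t)\norm{f_1-f_2}_\alpha$, which is precisely the Lipschitz condition of Theorem~\ref{solutionth}.

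The second step is the linear growth bound, obtained in the same way from the two estimates of hypothesis~2: for $\norm{h}\le1$ one has $\abs{\sigma_t(f_1)h(0)}\le C(t)(1+\abs{f_1(0)})$ and $\abs{\partial_x[\sigma_t(f_1)h](x)}\le C(t)\abs{f_1'(x)}$ for a.e.\ $x$, hence
\[
\norm{\sigma_t(f_1)h}_\alpha^2 \le C(t)^2\Big((1+\abs{f_1(0)})^2+\int_{\R_+}\abs{f_1'(x)}^2\alpha(x)\,dx\Big)\le C(t)^2\big((1+\abs{f_1(0)})^2+\norm{f_1}_\alpha^2\big).
\]
Using $\abs{f_1(0)}^2\le\norm{f_1}_\alpha^2$ one bounds the bracket by $3(1+\norm{f_1}_\alpha)^2$, so $\norm{\sigma_t(f_1)}_{\mathcal L(\h,\h_\alpha)}\le\sqrt3\,C(t)(1+\norm{f_1}_\alpha)$; replacing $C$ by the still increasing function $\sqrt3\,C$ gives the growth condition of Theorem~\ref{solutionth}.

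I do not anticipate a genuine obstacle: once the pointwise and derivative representations of $\sigma_t(f)h$ from Theorem~\ref{kernel:cond} are available, the argument is a short chain of Cauchy--Schwarz estimates. The only point needing a little care is the differentiation under the integral sign for the \emph{difference} $\kappa_t(\cdot,y,f_1)-\kappa_t(\cdot,y,f_2)$, which is handled by the domination remark above; after that it is purely bookkeeping of constants to match the exact form of the hypotheses of Theorem~\ref{solutionth}.
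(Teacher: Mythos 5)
Your proposal is correct and follows essentially the same route as the paper's proof: Cauchy--Schwarz applied to the value at $0$ and to the $x$-derivative of $\sigma_t(f)h$, followed by the hypotheses of the theorem and the definition of the $\norm{\cdot}_\alpha$-norm (the paper keeps $\norm{h}$ explicit rather than normalizing, and obtains the constant $\sqrt{2}$ in place of your $\sqrt{3}$, but both are absorbed into $C(t)$). The extra domination argument you give for differentiating the difference under the integral is harmless but unnecessary, since the derivative formula holds for each of $\sigma_t(f_1)h$ and $\sigma_t(f_2)h$ separately and the claim for the difference follows by linearity.
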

\begin{proof}
	See Appendix \ref{kernel:lipcond:proof}.
\end{proof}

If the kernel function $\kappa_t$ satisfies the assumptions of Theorem \ref{kernel:cond} and Theorem \ref{kernel:lipcond}, then there exists a mild solution to equation \eqref{sdeg2}, which models the dynamics of the forward curve.

\subsection{A deterministic specification}
\label{nonrandomvol}
In the previous section we have defined the volatility as an integral operator with respect to a kernel function $\kappa_t$. We now specify $\kappa_t$ in order to reflect some properties that we believe to be crucial for energy markets. For instance, we shall include time dependency to account for seasonality. Seasonality was first observed in electricity markets by \cite{fredsteen} for the volatility structure in a log normal model for the forward curve. The same authors also found a maturity effect, namely a monotone decay in the volatility when the time to maturity increases, also known as the Samuelson effect. Such an effect can be easily achieved with some decay function. Finally, we want to incorporate that contracts with a certain maturity are mainly influenced by the randomness of the noise in a neighbourhood of this maturity.

We restrict to a deterministic, time-dependent diffusion term. We therefore drop the state-dependency and define the kernel function $\kappa_t$ by 
\begin{align}
	&\label{kappa}
	\kappa_t(x,y) :=a(t) e^{-bx}\,\omega(x-y), \\
	&\label{seasonality} a(t) := a + \sum_{j=1}^J\left(s_j \sin(2\pi j t)+c_j\cos(2\pi j t)\right),
\end{align}
where $\omega: \R\to \R_{+}$ is a continuous weight function, while the term $e^{-bx}, b\ge0$, captures the Samuelson effect, and $a(t)$ is the seasonal function defined for $a\ge0$, $s_j$ and $c_j$ real constants, and $t$ measured in years.

In the following proposition, we state some assumptions on the weight function $\omega$ to ensure that $\kappa_t(x,y)$ as defined above fulfils the assumptions of Theorem~\ref{kernel:cond} for every $t\ge 0$.
\begin{prop}
	\label{weight:cond}
	Let $ \omega :  \R \rightarrow \mathbb{R}_+$ be such that:
	\begin{enumerate} 
		\item For every $x\in \R_+$, $\omega (x - \cdot)\left. \right|_{\set}  \in \h$.
		\item The derivative $\omega'(x)$ exists for almost all $x\in \R$ and whenever it exists, there exists a neighbourhood $I_x$ of $x$ and a function $\bar{\omega}_x \in \h $ with $\norm{\bar{\omega}_x} \leq C_1$ for some $C_1$ independent of $x$, and such that $| (\omega'(x-y)-b\omega(x-y)) \left. \right|_{\set}  \leq \bar{\omega}_x(y)$ on $I_x$.
	\end{enumerate}
	Let further $\int_{\mathbb{R}_+} e^{-2bx} \alpha (x) dx < \infty $.
	Then for every $t\ge0$ the volatility operator $\sigma_t$ given by $$\sigma_t :\h \rightarrow \h_{\alpha}, \quad h\mapsto \sigma_t h := \int_{\set} \kappa_t (\cdot,y) h(y) dy$$ is well defined, and satisfies the Lipschitz and linear growth conditions of Theorem \ref{solutionth}.
\end{prop}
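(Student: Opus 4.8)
The plan is to proceed in two stages. First I would verify that, for each fixed $t\ge0$, the deterministic kernel $\kappa_t(x,y)=a(t)e^{-bx}\omega(x-y)$, regarded as a map on $\R_{+}\times\set\times\h_{\alpha}$ that is constant in the curve argument, satisfies the three hypotheses of Theorem~\ref{kernel:cond}. This already yields the asserted well-definedness, $\sigma_t\in\mathcal{L}(\h,\h_{\alpha})$, together with the pointwise identity $\sigma_t h(x)=\langle\kappa_t(x,\cdot),h\rangle$ and, from the differentiation-under-the-integral step inside the proof of that theorem, $(\sigma_t h)'(x)=\langle\partial_x\kappa_t(x,\cdot),h\rangle$ for $\lambda_{\R_+}$-a.e.\ $x$. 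Second, since $\sigma_t$ does not depend on the curve, the Lipschitz condition of Theorem~\ref{solutionth} holds trivially with constant $0$, and the linear growth condition reduces to the operator-norm estimate $\norm{\sigma_t}_{\mathcal{L}(\h,\h_{\alpha})}\le C(t)$, which I would read off from the bounds gathered in the first stage. I would note explicitly that Theorem~\ref{kernel:lipcond} is \emph{not} invoked here: its growth requirement $\norm{\partial_x\kappa_t(x,\cdot,f_1)}\le C(t)\abs{f_1'(x)}$ is designed for genuinely state-dependent kernels and does not hold for the present deterministic one.

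For the first stage, hypothesis~1 of Theorem~\ref{kernel:cond} is immediate: $\kappa_t(x,\cdot)=a(t)e^{-bx}\,\omega(x-\cdot)$ is a scalar multiple of an element of $\h$ by assumption~1 on $\omega$ (taking $x=0$ covers $\kappa_t(0,\cdot)$ as well). For hypothesis~2, the product and chain rules give, at each $y$ where $\omega'(x-y)$ exists,
\[
\frac{\partial\kappa_t(x,y)}{\partial x}=a(t)e^{-bx}\bigl(\omega'(x-y)-b\,\omega(x-y)\bigr),
\]
and, since $\omega'$ exists Lebesgue-a.e.\ on $\R$, for each fixed $x$ this holds for $\lambda_{\set}$-a.e.\ $y\in\set$. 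Taking the neighbourhood $I_x$ from assumption~2 on $\omega$ (intersected with $\R_+$, so that $e^{-bx'}\le1$ there) and the associated $\bar\omega_x\in\h$ with $\norm{\bar\omega_x}\le C_1$, the function $\bar\kappa_x:=\abs{a(t)}\,\bar\omega_x\in\h$ dominates $\bigl|\partial_x\kappa_t(x',\cdot)\bigr|$ on $I_x$. For hypothesis~3, the pointwise bound $\abs{\omega'(x-y)-b\,\omega(x-y)}\le\bar\omega_x(y)$ for a.e.\ $y\in\set$ gives $\norm{\partial_x\kappa_t(x,\cdot)}^2\le a(t)^2C_1^2e^{-2bx}$, hence $\int_{\R_+}\norm{\partial_x\kappa_t(x,\cdot)}^2\alpha(x)\,dx\le a(t)^2C_1^2\int_{\R_+}e^{-2bx}\alpha(x)\,dx<\infty$ by the last assumption of the proposition.

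For the second stage, Cauchy--Schwarz applied to the two identities from the first stage gives $\abs{\sigma_t h(0)}\le\norm{\kappa_t(0,\cdot)}\,\norm{h}$ and $\abs{(\sigma_t h)'(x)}\le\norm{\partial_x\kappa_t(x,\cdot)}\,\norm{h}$. Substituting into $\norm{\sigma_t h}_{\alpha}^2=\bigl(\sigma_t h(0)\bigr)^2+\int_{\R_+}\bigl((\sigma_t h)'(x)\bigr)^2\alpha(x)\,dx$ and using $\norm{\kappa_t(0,\cdot)}=\abs{a(t)}\,\norm{\omega(-\cdot)|_{\set}}$ together with the bound from hypothesis~3, I obtain
\[
\norm{\sigma_t h}_{\alpha}^2\le a(t)^2\Bigl(\norm{\omega(-\cdot)|_{\set}}^2+C_1^2\int_{\R_+}e^{-2bx}\alpha(x)\,dx\Bigr)\norm{h}^2 .
\]
Since $\abs{a(t)}\le a+\sum_{j=1}^J\bigl(\abs{s_j}+\abs{c_j}\bigr)=:\bar a$ for all $t$, the constant (hence non-decreasing) function
\[
C(t):=\bar a\,\Bigl(\norm{\omega(-\cdot)|_{\set}}^2+C_1^2\int_{\R_+}e^{-2bx}\alpha(x)\,dx\Bigr)^{1/2}
\]
satisfies $\norm{\sigma_t}_{\mathcal{L}(\h,\h_{\alpha})}\le C(t)$; this is simultaneously the linear growth bound $\norm{\sigma_t(f_1)}_{\mathcal{L}(\h,\h_{\alpha})}\le C(t)(1+\norm{f_1}_{\alpha})$ and, a fortiori with Lipschitz constant $0$, the Lipschitz bound of Theorem~\ref{solutionth}; the measurability of $(t,g_t)\mapsto\sigma_t$ follows from the continuity of $t\mapsto a(t)$.

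The step I expect to demand the most care is hypothesis~2 of Theorem~\ref{kernel:cond}: transferring the almost-everywhere differentiability and the local dominating function of the one-variable weight $\omega$ to the two-variable kernel $\kappa_t$, producing a dominating function valid on a whole neighbourhood of $x$ while absorbing the $e^{-bx}$ factor and checking that the measure-zero exceptional set for $\omega'$ remains negligible after the substitution $y\mapsto x-y$. The remaining steps are routine: repeated use of Cauchy--Schwarz and of the integrability assumption $\int_{\R_+}e^{-2bx}\alpha(x)\,dx<\infty$.
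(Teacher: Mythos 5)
Your proposal is correct and follows essentially the same route as the paper's proof: verify the three hypotheses of Theorem \ref{kernel:cond} for the factorized kernel (with the derivative bound $\abs{\partial_x\kappa_t(x,y)}\le a(t)e^{-bx}\bar{\omega}_x(y)$ and the integrability of $e^{-2bx}\alpha(x)$ giving hypothesis 3), then observe that the Lipschitz condition is trivial for a state-independent operator and the growth condition follows from the boundedness of $a(t)$. Your version merely spells out the operator-norm constant, the measurability in $t$, and the (correct) remark that Theorem \ref{kernel:lipcond} is not the right tool here, all of which the paper leaves implicit.
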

\begin{proof}
	See Appendix \ref{weight:cond:proof}.
\end{proof}

For the function $\alpha(x) = e^{\alpha x}$ for instance, the integrability assumption of Proposition \ref{weight:cond} is satisfied if $0<\alpha<2b$.

\section{Forward contracts with delivery period}
\label{forwardsection}
Now we consider energy forward contracts with a delivery period, which we call swaps to avoid confusion with the contracts discussed in the previous section. For $0\le t\le T_1 \le T_2$, we denote by $F(t, T_1, T_2)$ the price at time $t$ of a swap contract on energy delivering over the interval $[T_1, T_2]$. Following \cite[Proposition 4.1]{fredbook}, this price can be expressed by
\begin{equation}
	\label{priceF}
	F(t,T_1,T_2) = \int_{T_1}^{T_2} w(T; T_1, T_2)f(t,T)dT,
\end{equation}
$f(t,T)$ being the forward price introduced above, and $w(T;T_1,T_2)$ a deterministic weight function. Focusing on forward style swaps in the electricity markets as traded, for example, at Nord Pool AS and the European Energy Exchange (EEX), the weight function takes the form 
\begin{equation}
	\label{wswap}
	w(T;T_1,T_2) = \frac{1}{T_2-T_1}.
\end{equation}

According to \cite{fredkru2}, we introduce the Musiela representation of $F(t,T_1,T_2)$. For $x:= T_1-t$ the time until start of delivery, and $\ell:= T_2-T_1>0$ the length of delivery of the swap, we define the weight function $w_{\ell}(t,x,y):= w(t+y;t+x,t+x+\ell)$. Motivated by practical examples (see \cite[Section 2]{fredkru2}), we shall consider only time-independent and stationary weight functions. With abuse of notation, let then $w_{\ell}:\R_{+}\to \R_{+}$ be bounded and measurable, such that
\begin{equation}
	\label{Gell}
	G_{\ell}^{w}(t,x):= F(t,t+x, t+x+\ell)=\int_{x}^{x+\ell}w_{\ell}(y-x)g_t(y)dy,
\end{equation}
for $g_t$ the mild solution in equation \eqref{mildsol}. With $w(T;T_1,T_2)$ as in equation \eqref{wswap}, one simply gets $w_{\ell}(y-x)= \frac{1}{\ell}$.

After performing a simple integration-by-parts on the right hand side of equation \eqref{Gell}, we obtain the following representation for $G_{\ell}^{w}$ as a linear operator $\D_{\ell}^w$ acting on $g_t\in \h_{\alpha}$:
\begin{equation}
	\label{Dell}
	G_{\ell}^{w}(t, \cdot) = \D_{\ell}^w(g_t)(\cdot), \mbox{ where } \D_{\ell}^w := \mathcal{W}_{\ell}(\ell)\mathrm{Id}+\mathcal{I}_{\ell}^w.
\end{equation}
Here $\mathrm{Id}$ is the identity operator, while
\begin{align}
	&\label{W}
	\mathcal{W}_{\ell}(u) := \int_{0}^{u}w_{\ell}(v)dv,\quad u\ge 0,\\
	&\label{int}
	\mathcal{I}_{\ell}^w(g_t)(\cdot) := \int_{0}^{\infty} q_{\ell}^{w}(\cdot, y)g_t'(y)dy,\\
	&\label{qell}
	q_{\ell}^{w}(x, y):= \left(\mathcal{W}_{\ell}(\ell)-\mathcal{W}_{\ell}(y-x)\right)\I_{[0,\ell]}(y-x).
\end{align}
The operator $\D_{\ell}^w$ transforms the curve of instantaneous forwards delivering at $x\geq 0$, into the corresponding curve of forwards delivering over a fixed duration $\ell>0$ (which are $w_{\ell}$-weighted sums of those instantaneous prices) starting at $x\geq0$. For a swap contract of forward type with weight function in equation \eqref{wswap}, it turns out that $\D_{\ell}^w$ has an easier representation as provided in the following lemma.
\begin{lem}
	\label{deliveryfunction}
	For a forward-style swap contract, the operator $\D_{\ell}^w$ can be represented as
	\begin{equation*}
		\D_{\ell}^w(g_t)(x) = \int_{0}^{\infty} d_{\ell}(x, y) g_t(y)dy,
	\end{equation*}
	where $d_{\ell}: \R_{+}\times \R_{+}\to \R_{+}, d_{\ell}(x,y):= \frac{1}{\ell}\I_{[x,x+\ell]}(y)$ is called the delivery period function.
\end{lem}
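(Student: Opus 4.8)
The plan is to start from the definition of $\D_{\ell}^w$ in equations \eqref{Dell}--\eqref{qell} and simplify it for the forward-style weight, for which \eqref{wswap} gives $w_{\ell}(y-x) = 1/\ell$. First I would compute the primitive explicitly: $\mathcal{W}_{\ell}(u) = \int_0^u \frac{1}{\ell}\,dv = u/\ell$, so that $\mathcal{W}_{\ell}(\ell) = 1$ and
\begin{equation*}
q_{\ell}^{w}(x,y) = \left(1 - \frac{y-x}{\ell}\right)\I_{[0,\ell]}(y-x) = \frac{\ell - (y-x)}{\ell}\,\I_{[x,x+\ell]}(y).
\end{equation*}
Substituting into \eqref{Dell} then gives, for $g_t \in \h_{\alpha}$ and $x \ge 0$,
\begin{equation*}
\D_{\ell}^w(g_t)(x) = g_t(x) + \int_x^{x+\ell}\frac{\ell - (y-x)}{\ell}\,g_t'(y)\,dy.
\end{equation*}

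The second step is an integration by parts in the remaining integral. This is legitimate because every $g_t \in \h_{\alpha}$ is absolutely continuous and the factor $y \mapsto (\ell - (y-x))/\ell$ is affine, hence absolutely continuous, so the product rule for absolutely continuous functions applies on the compact interval $[x,x+\ell]$. Taking $u(y) = (\ell-(y-x))/\ell$ and $v(y) = g_t(y)$, the boundary term is $u(x+\ell)g_t(x+\ell) - u(x)g_t(x) = 0 - g_t(x) = -g_t(x)$, since $u$ vanishes at the right endpoint and equals $1$ at the left endpoint, while $-\int_x^{x+\ell}u'(y)g_t(y)\,dy = \frac{1}{\ell}\int_x^{x+\ell}g_t(y)\,dy$. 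Thus the term $g_t(x)$ coming from $\mathcal{W}_{\ell}(\ell)\mathrm{Id}$ cancels exactly, and we obtain
\begin{equation*}
\D_{\ell}^w(g_t)(x) = \frac{1}{\ell}\int_x^{x+\ell} g_t(y)\,dy = \int_0^{\infty} \frac{1}{\ell}\,\I_{[x,x+\ell]}(y)\,g_t(y)\,dy,
\end{equation*}
which is the claimed representation with $d_{\ell}(x,y) = \frac{1}{\ell}\I_{[x,x+\ell]}(y)$; this $d_{\ell}$ is evidently $\R_{+}$-valued and the integral is well defined since $g_t$ is continuous on $[x,x+\ell]$.

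I do not expect a genuine obstacle: the argument is a direct computation, and the only points needing a word of care are the applicability of the integration-by-parts formula on $\h_{\alpha}$ (absolute continuity of both factors) and the correct evaluation of the indicator-truncated boundary terms at the two endpoints. As a consistency check one may observe that the same identity follows at once by inserting $w_{\ell}\equiv 1/\ell$ directly into \eqref{Gell}, because the integration by parts used to pass from \eqref{Gell} to \eqref{Dell} and the one carried out above are mutually inverse; I would present this only as a remark rather than as the proof, since the purpose of the lemma is precisely to exhibit the simplified form of the operator $\D_{\ell}^w$ as it is defined through \eqref{Dell}.
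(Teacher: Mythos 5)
Your proposal is correct and follows essentially the same route as the paper: compute $\mathcal{W}_{\ell}(u)=u/\ell$ and $q_{\ell}^{w}$ for the forward-style weight, substitute into the definition of $\D_{\ell}^w$, and integrate by parts so that the boundary term cancels the identity-operator contribution. The only difference is that you carry out the integration by parts explicitly where the paper leaves it as a one-line remark, which is a welcome addition rather than a deviation.
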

\begin{proof}
	See Appendix \ref{deliveryfunction:proof}.
\end{proof}

From equation \eqref{Gell} and \eqref{Dell} for every $0\le t\le \tau \le T_1\le T_2$, we can express the price of the swap contract at time $\tau$ with delivery over $[T_1, T_2 ]$ by
\begin{equation}
	\label{Foperator}
	F(\tau, T_1, T_2)= G_{\ell}^w(\tau, T_1-\tau)= \D_{\ell}^w(g_{\tau})(T_1-\tau) = \delta_{T_1-\tau} \D_{\ell}^w(g_{\tau}), 
\end{equation}
for $\ell= T_2-T_1$ the length of the delivery period and $\delta_{x}:\h_{\alpha}\to \R$ the point evaluation, namely the linear functional such that $\delta_x(f)= f(x)$ for every $f\in \h_{\alpha}$. We now combine equation \eqref{Foperator} with the expression of the mild solution $g_{\tau}$ in equation \eqref{mildsol}. Since the shift operator $\U_{x}$ and the operator $\D_{\ell}^w$ commute, namely $\U_x\D_{\ell}^w = \D_{\ell}^w \U_x$, and since $\delta_y\U_x = \delta_{x+y}$ for $x, y \in \R$, equation \eqref{Foperator} becomes
\begin{equation}
	\label{Ffirst}
	F(\tau, T_1, T_2) = \delta_{T_1-t}\D_{\ell}^wg_t+\int_{t}^{\tau}\delta_{T_1-s}\D_{\ell}^w\sigma_s(g_s)d\W_s.
\end{equation}
The expression shows that the price of the swap contract can be calculated by first applying a linear functional on the initial forward curve $g_t$.

Let now $\Gamma_s := \delta_{T_1-s}\D_{\ell}^w\sigma_s(g_s): \h \to \R$ with dual operator $\Gamma_s^*: \R \to \h$. The result of the operation $\Gamma_sd\W_s = \Gamma_s\left(d\W_s\right)$ can be seen as a trivial scalar product in $\R$ with the real element $1$, namely
\begin{equation}
	\label{RHS}
	\Gamma_s\left(d\W_s\right) = \langle 1, \Gamma_s\left(d\W_s\right)\rangle = \langle  \Gamma_s^*\left(1\right), d\W_s\rangle.
\end{equation}
In particular, by definition of the Wiener process $\W$, the right hand side of equation \eqref{RHS} has Gaussian distribution with variance $\langle \q \Gamma_s^*\left(1\right), \Gamma_s^*\left(1\right)\rangle ds = \langle \Gamma_s \q \Gamma_s^*\left(1\right), 1\rangle ds$. This allows us to introduce a one-dimensional Wiener process $W$ such that 
$\Gamma_sd\W_s = \left( \Gamma_s \q \Gamma_s^*\left(1\right)\right) dW_s.$
For every $0\le t\le \tau \le T_1$, we then rewrite equation \eqref{Ffirst} by
\begin{align}
	&\label{F}
	F(\tau, T_1, T_2) = \delta_{T_1-t}\D_{\ell}^wg_t+\int_{t}^{\tau}\Sigma_sdW_s,\\
	&\label{sigma}
	\Sigma^2_s= \left(\delta_{T_1-s}\D_{\ell}^w\sigma_s(g_s)\q\,\sigma_s(g_s)^* \left(\delta_{T_1-s}\D_{\ell}^w\right)^*\right)(1).
\end{align}
In particular, equation \eqref{F} tells us that the swap price $F(\tau, T_1, T_2)$ which is driven by the $\h$-valued Wiener process $\W$ with covariance operator $\q$ and volatility operator $\sigma_t$, can in fact be represented as driven by a one-dimensional Wiener process with diffusion term given in equation \eqref{sigma}. We refer to \cite[Theorem 2.1]{fredkru1} for a more rigorous proof.

We consider a covariance operator $\q$ of integral form 
\begin{equation*}
	\q h(x) = \langle h, q(x,\cdot)\rangle = \int_{\set}q(x,y)h(y)dy, \quad h\in \h,
\end{equation*}
with kernel $q(x,y)$ such that $\q$ is well defined. For a swap contract of forward type, using the volatility operator in equation \eqref{sigma:operator}, we can rewrite the univariate volatility in equation \eqref{sigma} more explicitly.

\begin{prop}
	\label{sigma:integral:prop}
	For a swap contract of forward type with weight function in equation \eqref{wswap}, the volatility $\Sigma^2_s$, $t\le s \le \tau$, defined in equation \eqref{sigma} is more explicitly given by the four-fold integral
	\begin{equation*}
		\Sigma^2_s = \int_{\R_{+}}\int_{\R_{+}}\int_{\set}\int_{\set}d_{\ell}(T_1-s, u)d_{\ell}(T_1-s, v)\kappa_s(v,z, g_s)  q(z,y)\kappa_s(u,y, g_s)dy dz du dv,
	\end{equation*}
	where $d_{\ell}$ is the delivery period function introduced in Lemma \ref{deliveryfunction}.
\end{prop}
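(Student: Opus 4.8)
The plan is to unwind the composition of bounded linear maps appearing in equation~\eqref{sigma}, collapse the adjoint part to a single vector $\Phi_s\in\h$ via the Riesz representation together with Lemma~\ref{deliveryfunction} and Theorem~\ref{dualthrm}, and then expand the resulting scalar $\langle\q\Phi_s,\Phi_s\rangle$ against the integral kernels, interchanging the order of integration by Fubini--Tonelli.

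Concretely, I would set $\Gamma_s:=\delta_{T_1-s}\D_{\ell}^w\sigma_s(g_s)\in\mathcal{L}(\h,\R)$, so that equation~\eqref{sigma} reads $\Sigma^2_s=(\Gamma_s\q\Gamma_s^{*})(1)=\Gamma_s\big(\q\,\Gamma_s^{*}(1)\big)$. Writing $\Phi_s:=\Gamma_s^{*}(1)\in\h$ and using the adjoint relation $\Gamma_s h=\langle\Gamma_s h,1\rangle_{\R}=\langle h,\Gamma_s^{*}(1)\rangle=\langle h,\Phi_s\rangle$ for all $h\in\h$, one obtains $\Sigma^2_s=\Gamma_s(\q\Phi_s)=\langle\q\Phi_s,\Phi_s\rangle$. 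The heart of the argument is then the explicit computation of $\Phi_s$. For a forward-style swap, Lemma~\ref{deliveryfunction} shows that $\delta_{T_1-s}\D_{\ell}^w$ acts on $f\in\h_\alpha$ as $f\mapsto\int_0^{\infty}d_\ell(T_1-s,u)f(u)\,du$, which is a bounded functional on $\h_\alpha$, so $(\delta_{T_1-s}\D_{\ell}^w)^{*}(1)$ is precisely its Riesz representer in $\h_\alpha$. Since $\Gamma_s^{*}=\sigma_s(g_s)^{*}(\delta_{T_1-s}\D_{\ell}^w)^{*}$, applying the pointwise adjoint formula of Theorem~\ref{dualthrm} with $f=g_s$ and $f_1=(\delta_{T_1-s}\D_{\ell}^w)^{*}(1)$ gives, for $\lambda_{\set}$-a.e.\ $y$,
\begin{equation*}
	\Phi_s(y)=\big\langle\kappa_s(\cdot,y,g_s),(\delta_{T_1-s}\D_{\ell}^w)^{*}(1)\big\rangle_\alpha=\delta_{T_1-s}\D_{\ell}^w\big(\kappa_s(\cdot,y,g_s)\big)=\int_{0}^{\infty}d_\ell(T_1-s,u)\,\kappa_s(u,y,g_s)\,du ,
\end{equation*}
the middle equality being the defining property of the Riesz representer applied to $\kappa_s(\cdot,y,g_s)\in\h_\alpha$.

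To conclude, I would insert $\q\Phi_s(z)=\int_{\set}q(z,y)\Phi_s(y)\,dy$ into $\langle\q\Phi_s,\Phi_s\rangle=\int_{\set}\Phi_s(z)(\q\Phi_s)(z)\,dz$ and substitute the above representation of $\Phi_s$ at both $y$ and $z$; interchanging the four resulting integrations yields exactly the claimed four-fold integral, with the symmetry of the covariance kernel $q$ absorbing the difference in the ordering of its two arguments.

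\textbf{Main obstacle.} The only real work is the measure-theoretic bookkeeping. One must verify that $\Phi_s$ genuinely belongs to $\h$ — which follows either from $\Gamma_s$ being a composition of bounded operators (Theorem~\ref{kernel:cond}), or directly from $d_\ell(T_1-s,\cdot)$ being bounded with support in $[T_1-s,T_1-s+\ell]$ together with $\kappa_s(x,\cdot,g_s)\in\h$ — and then justify each use of Fubini--Tonelli in expanding $\langle\q\Phi_s,\Phi_s\rangle$. Here again the compact support of $d_\ell(T_1-s,\cdot)$, the integrability hypotheses on $\kappa_s$ built into Theorem~\ref{kernel:cond}, and the assumption that $\q$ is a (bounded, positive, integral-form) covariance operator make all the iterated integrals absolutely convergent, so the interchanges are legitimate.
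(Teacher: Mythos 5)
Your proposal is correct and follows essentially the same route as the paper's proof: both rest on Theorem \ref{dualthrm} for the pointwise form of $\sigma_s(g_s)^{*}$, Lemma \ref{deliveryfunction} for the delivery operator, and Fubini to expand the resulting iterated integrals. The only difference is organizational — you symmetrize $\Gamma_s\q\Gamma_s^{*}(1)$ into the quadratic form $\langle\q\Phi_s,\Phi_s\rangle$ with $\Phi_s=\Gamma_s^{*}(1)$, whereas the paper unwinds the composition operator by operator from the inside out — which yields the same four-fold integral.
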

\begin{proof}
	See Appendix \ref{sigma:integral:prop:proof}.
\end{proof}

In the deterministic setting introduced in Section \ref{nonrandomvol}, the univariate volatility formula of Proposition \ref{sigma:integral:prop} can be further simplified.
\begin{cor}
	\label{sigma:integral:cor}
	With a volatility kernel factorized as in equation \eqref{kappa}, the formula for $\Sigma^2_s$, $t\le s \le \tau$, in Proposition  \ref{sigma:integral:prop} is equivalent to
	\begin{equation*}
		\Sigma^2_s = a(s)^2\int_{\R_{+}}\int_{\R_{+}}\int_{\set}
		\int_{\set}e^{-bu}e^{-bv}d_{\ell}(T_1-s, u)d_{\ell}(T_1-s, v)\omega(v-z)q(z,y)\omega(u-y)dy dz du dv.
	\end{equation*}
\end{cor}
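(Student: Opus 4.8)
The plan is to obtain the corollary as a direct specialization of Proposition \ref{sigma:integral:prop} to the deterministic kernel of Section \ref{nonrandomvol}. First I would recall that, by Proposition \ref{weight:cond}, under the integrability condition $\int_{\R_{+}} e^{-2bx}\alpha(x)dx<\infty$ the kernel $\kappa_t(x,y)=a(t)e^{-bx}\omega(x-y)$ of equation \eqref{kappa} satisfies the assumptions of Theorem \ref{kernel:cond}. Hence the volatility operator $\sigma_s$ is well defined and Proposition \ref{sigma:integral:prop} applies verbatim, with the sole simplification that $\kappa_s$ no longer carries the state argument $g_s$.

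Next I would substitute the factorized form into the four-fold integral of Proposition \ref{sigma:integral:prop}: replace $\kappa_s(v,z,g_s)$ by $a(s)e^{-bv}\omega(v-z)$ and $\kappa_s(u,y,g_s)$ by $a(s)e^{-bu}\omega(u-y)$. Since $a(s)$ depends only on the fixed time variable $s$ and not on any of the integration variables $u,v,y,z$, the factor $a(s)^2$ pulls out of all four integrals; the exponentials $e^{-bv}$ and $e^{-bu}$ depend only on $v$ and $u$ respectively, so they stay inside the integral. After regrouping the remaining factors $d_{\ell}(T_1-s,u)$, $d_{\ell}(T_1-s,v)$, $\omega(v-z)$, $q(z,y)$, $\omega(u-y)$ one reads off exactly the claimed expression for $\Sigma^2_s$.

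The only point requiring a word of care is the legitimacy of moving the constant $a(s)^2$ through the integrals, i.e. that the integrand is absolutely integrable so no interchange issue arises. This is already built into Proposition \ref{sigma:integral:prop}, whose proof expresses $\Sigma^2_s$ as an absolutely convergent iterated integral; the integrability assumptions inherited from Proposition \ref{weight:cond}, together with the boundedness of the delivery period function $d_{\ell}$ and of the covariance kernel $q$, guarantee the same here. Consequently there is no substantive obstacle: the corollary is a routine substitution, and all of the genuine analytic work was already carried out in establishing Proposition \ref{sigma:integral:prop}.
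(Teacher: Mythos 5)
Your proposal is correct and matches the paper's argument, which simply notes that the corollary is a direct consequence of Proposition~\ref{sigma:integral:prop}: one substitutes the factorized kernel $\kappa_s(x,y)=a(s)e^{-bx}\omega(x-y)$ into the four-fold integral and pulls the constant $a(s)^2$ outside. Your additional remarks on well-definedness and absolute integrability are sound but not needed beyond what is already established in the cited results.
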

\begin{proof}
	This is a direct consequence of Proposition~\ref{sigma:integral:prop}. 
\end{proof}

\subsection{European options on swaps}
We focus on European-style options written on energy swap contracts. These derivatives are traded, for example, at Nord Pool AS. With the price of the swap at time $t$ being $F(t,T_1, T_2)$, we consider an option with payoff function $\pi:\R \to \R$ and exercise time $0\le \tau \le T_1$. Classical examples are standard call and put options with strike $K\ge 0$, for which the payoff function is defined by $\pi(x)=\max(x-K, 0)$, respectively $\pi(x)=\max(K-x, 0)$. 

It follows from equation \eqref{F} that the price at time $0\le t\le \tau$ of the option with payoff $\pi (F(\tau,T_1, T_2))$ at time $0\le\tau \le T_1$ is given by
\begin{equation}
	\label{option:price:formula}
	\Pi(t) = e^{-r(\tau-t)}\E\left[\left. \pi\left(\delta_{T_1-t}\D_{\ell}^wg_t+\int_{t}^{\tau}\Sigma_sdW_s\right)\right|\F_t\right].
\end{equation}
Here $r>0$ is the risk-free interest rate, considered to be constant. Assuming $\pi$ to be measurable and of at most linear growth, the expectation \eqref{option:price:formula} is well defined. 
We refer to \cite{fredkru2} for more details.

We end the section with a result from \cite{fredkru2}, which allows us to rewrite the price functional in equation \eqref{option:price:formula} for a deterministic volatility operator as the one we introduced in Section \ref{nonrandomvol}.
\begin{prop}
	\label{propprice}
	For $\sigma_t$ deterministic, the price functional in equation \eqref{option:price:formula} becomes
	\begin{align}
		&\label{option:price:formula2}
		\Pi(t) = 
		e^{-r(\tau-t)}\E\left[\left. \pi\left(\mu(g_t)+\xi X\right)\right| \F_t\right],\\
		&\label{xi}
		\xi^2 := \int_t^{\tau}\Sigma^2_sds, \\
		&\label{m}
		\mu(g_t):= \delta_{T_1-t}\D_{\ell}^w g_t,
	\end{align}
	with $X$ standard normal distributed random variable and $\Sigma^2_s$ in equation \eqref{sigma}.
\end{prop}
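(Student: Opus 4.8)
The plan is to start from the swap dynamics in equation \eqref{F}, which already expresses $F(\tau, T_1, T_2)$ as the sum of the deterministic initial value $\delta_{T_1-t}\D_{\ell}^w g_t$ and the stochastic integral $\int_t^\tau \Sigma_s \, dW_s$ against a one-dimensional Wiener process. When $\sigma_t$ is deterministic, the integrand $\Sigma_s$ defined in equation \eqref{sigma} no longer depends on the random curve $g_s$ (the dependence of $\sigma_s(g_s)$ on $g_s$ disappears), so $\Sigma_s$ is a deterministic function of $s$, $T_1$, and the delivery length $\ell$. Consequently $\int_t^\tau \Sigma_s \, dW_s$ is a Wiener integral of a deterministic integrand, hence a centered Gaussian random variable, independent of $\F_t$, with variance $\int_t^\tau \Sigma_s^2 \, ds =: \xi^2$ by the Itô isometry.

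Next I would substitute this into the pricing formula \eqref{option:price:formula}. Since the only $\F_t$-measurable quantity inside the payoff is $\mu(g_t) := \delta_{T_1-t}\D_{\ell}^w g_t$, and since the Gaussian increment is independent of $\F_t$, the conditional expectation reduces to an expectation over the single Gaussian variable: writing $\int_t^\tau \Sigma_s \, dW_s \stackrel{d}{=} \xi X$ with $X \sim \mathcal N(0,1)$ independent of $\F_t$, we obtain
\begin{equation*}
	\Pi(t) = e^{-r(\tau-t)} \E\!\left[\left. \pi\!\left(\mu(g_t) + \xi X\right)\right| \F_t\right],
\end{equation*}
which is exactly \eqref{option:price:formula2}. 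The identification of the law of the stochastic integral with $\xi X$ uses only that a Wiener integral of a deterministic $L^2$ integrand is Gaussian with the stated mean and variance; the well-definedness of $\xi^2$ follows because $\Sigma_s^2$ is, by Corollary \ref{sigma:integral:cor}, a finite four-fold integral continuous (hence locally bounded) in $s$, so it is integrable on the compact interval $[t,\tau]$.

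The only genuine subtlety — and the step I would spell out most carefully — is checking that $\Sigma_s$ really is deterministic and $\F_s$-adapted in the right way so that the stochastic integral is a true Wiener integral, and that the linear and bounded operator $\delta_{T_1-s}\D_\ell^w \sigma_s$ maps into $\R$ with $\Sigma_s^2 < \infty$ for each $s$ (this is where Proposition \ref{sigma:integral:prop} and the integrability assumptions from Proposition \ref{weight:cond}, e.g. $\int_{\R_+} e^{-2bx}\alpha(x)\,dx < \infty$, enter). Everything else is a routine application of Gaussianity and independence of increments, and the statement also appears in \cite{fredkru2}, so I would present the argument concisely and refer there for the technical details already established in that reference.
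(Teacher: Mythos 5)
Your argument is correct: once $\sigma_s$ is deterministic, $\Sigma_s$ in \eqref{sigma} no longer depends on $g_s$, the integral $\int_t^\tau \Sigma_s\,dW_s$ is a Wiener integral of a deterministic $L^2$ integrand and hence a centered Gaussian variable with variance $\xi^2$ independent of $\F_t$, and conditioning on $\F_t$ then reduces \eqref{option:price:formula} to \eqref{option:price:formula2}. The paper itself gives no proof here --- it simply defers to \cite[Proposition 3.7]{fredkru2} --- and your reasoning is exactly the standard route that reference takes, so you have supplied precisely the argument the citation hides.
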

\begin{proof}
	We refer to \cite[Proposition 3.7]{fredkru2}.
\end{proof}

Proposition~\ref{propprice}, together with our Corollary~\ref{sigma:integral:cor}, allows us to calculate the price of an option on the forward curve explicitly after specifying the volatility and the covariance operator. Before providing a full specification, we explain in the next section the two-step approach to calibrate the HJM model with neural networks.

\section{The neural networks approach}
\label{nnsection}
For the purpose of calibration, we shall specify a fully parametric model, depending on a parameter vector $\theta$ taking values in a set $\Theta\subset\R^n$. In the framework described in Section \ref{HJMsection}, $\theta$ is a vector of parameters defining the volatility operator, the covariance operator and the initial forward curve. Moreover, the option price function depends on some features of the contract, such as time to delivery, strike, etc. We denote the vector of these contract parameters by $\lambda\in\Lambda \subset \R^{m}$. Then, the price function \eqref{option:price:formula} is $\Pi(t) = \Pi(t; \lambda, \theta)$. 

To obtain a fully specified price functional we need to calibrate the chosen model and determine the vector $\theta$ that best matches the observed prices of liquidly traded options. This will give us the best coefficient functions (in terms of calibration) for the HJM model defined in Section  \ref{HJMsection}. We do this by the two-step approach with neural networks presented in \cite{deep2}, which we adapt to our setting. In what follows, we first define feedforward neural networks, and then the calibration problem, together with the two-step approach.

\subsection{Feedforward neural networks}
We define an $L$-layer feedforward neural network as a function $\mathcal{N}:\R^d\to \R^p$ of the form
\begin{equation}
	\label{NNdef}
	\mathcal{N}(x) := H_L(\rho(H_{L-1}(\rho(\ldots \rho(H_1(x)))))),
\end{equation}
where each $H_i: \R^{n_{i-1}}\to \R^{n_i}$ is an affine map of the form $H_i(x)= V_{i}x+v_{i}$, for $V_{i}\in \R^{n_{i}\times n_{i-1}}$ and $v_i\in \R^{n_{i}}$, with $n_0 = d$ and $n_L = p$. We set $\mathbf{n}:= (n_0, \dots, n_L)$, where $n_i$ represents the number of nodes of the $i$-th layer and $L$ is the depth of the network. The map $\rho: \R \to \R$ is the so-called activation function, which is typically non-linear and applies component wise on the output of the affine maps $H_i$.  Standard choices are, for example, the Rectified Linear Unit (ReLU), $\rho(x) = \max(x,0)$, or the Exponential Linear Unit (ELU), $\rho(x) = \max(e^x-1, x)$. We denote by $\V$ the set of all parameters involved in the linear maps and use the notation $\mathcal{N} = \mathcal{N}(x; \V)$. The cardinality of $\V$ is then $M:=|\V| = \sum_{i=1}^Ln_i(n_{i-1}+1)$. The architecture of the neural network, namely the depth, the number of nodes per layer and the activation function, are the hyper-parameters which must be chosen accordingly to the problem of interest.

Given a set of input-output pairs $\{\left(x_i, \varphi(x_i)\right)\}_{i=1}^{N}$ of size $N$, we can approximate an unknown map $\varphi: \R^d\to \R^p$ with neural networks. This is an optimization problem, called the training of the neural network, which is solved by finding the best set of parameters $\hat{\V}\in \R^M$ so that the neural network output $\mathcal{N}(x; \hat{\V})$ best approximates the observations $\{ \varphi(x_i)\}_{i=1}^{N}$ with respect to some loss function to be chosen, usually the mean squared error:
\begin{equation*}
	\label{calibrationNN}
	\minimize_{\V\in\R^M}  \frac{1}{N}\sum_{i=1}^N \left(\mathcal{N}(x_i; \V)-\varphi(x_i)\right)^2.
\end{equation*}
Given the optimal weights $\hat{\V}$, we denote by $\hat{\mathcal{N}}(x) := \mathcal{N}(x; \hat{\V})$ the trained neural network. We point out that this is a non-convex optimization problem and one typically only finds an approximation to a local solution. For more details on feedforward neural networks, activation functions and training of the network, we refer the reader to \cite{deeplearn} and \cite{Bengio16}.

\subsection{The calibration problem and the two-step approach}
\label{twosteps}
Now we focus on the option price $\Pi = \Pi (\lambda, \theta)$, omitting the time dependence to simplify the notation, and we consider $N$ option contracts with features $\{\lambda_i\}_{i=1}^N \in \Lambda$, whose prices $\{\Pi_i\}_{i=1}^N $ can be observed in the market. Calibrating the HJM model for the forward curve means to find a vector of model parameters $\hat{\theta}\in \Theta$ which minimizes the distance between the prices observed in the market and the corresponding prices given by the stochastic model, namely
\begin{equation}
	\label{calibration}
	\minimize_{\theta\in\Theta}\frac{1}{N}\sum_{i=1}^N\left(\Pi(\lambda_i, \theta)-\Pi_i\right)^2,
\end{equation}
where the chosen distance is the mean squared error. When no closed formula for the price functional $\Pi(\lambda, \theta)$ is available, the calibration problem in equation \eqref{calibration} must be adjusted by substituting $\Pi(\lambda, \theta)$ with an approximation $\tilde{\Pi}(\lambda, \theta)$. This approximation is obtained for example by Monte Carlo simulation, which makes the procedure costly. The more complex the underlying stochastic model, the greater the time needed for simulation and hence for calibration.

To overcome this issue, we divide the calibration problem in equation \eqref{calibration} into two-step. We  start by approximating the price functional $\Pi$ (or its approximation $\tilde{\Pi}$) with a neural network, $\mathcal{N}(\lambda, \theta; \V) \approx \Pi(\lambda, \theta)$. For this we generate a training set that consists of many different parameters and option specifications ($\theta$ and $\lambda$) and their corresponding prices, and we use this set to train a neural network (see Section \ref{numericalsection} for details). This step is computationally demanding because we need to generate a large training set by (potentially) costly simulations and because the training of the neural network is an $M=|\V|$ dimensional optimization problem, with $M$ usually large. However, the training is off-line since it does not use any market information, hence it does not require frequent updating.

The second step is calibration. We now replace in equation \eqref{calibration} the function $\Pi$ (or $\tilde{\Pi}$) with the trained neural network $\hat{\mathcal{N}}$, whose evaluation is fast and does not require any further simulation. Market data is used in this step, which therefore has to be updated frequently. However, this can be done with low computational cost as evaluation of neural networks is very fast and most machine learning frameworks have very efficient implementations. In what follows, we present two alternative approaches to this procedure. 

\subsubsection{The pointwise learning approach}
\label{pointwisesection}
Let $( \lambda, \theta)\in \Lambda\times \Theta\subset \R^d$, for $d= n+m$, such that $\Pi = \Pi( \lambda, \theta)$. In the pointwise learning approach, we approximate the pricing map $\Pi$ (or $\tilde{\Pi}$) by a neural network that maps the vector $(\lambda, \theta)$ into prices. Given the training set $\{(( \lambda_i, \theta_i), \Pi(\lambda_i, \theta_i) )\}_{i=1}^{N_{train}}$ for $(\lambda_i, \theta_i)\in \Lambda\times \Theta$ and $N_{train} $ the size of the training set, we train a neural network $\mathcal{N}:\Lambda\times \Theta\to \R_+$ by computing
\begin{equation*}
	\hat{\V} \in \argmin_{\V\in\R^M}  \frac{1}{N_{train}}\sum_{i=1}^{N_{train}} \left(\mathcal{N}(\lambda_i, \theta_i; \V)-\Pi(\lambda_i, \theta_i)\right)^2.
\end{equation*}
We then use the trained neural network $\hat{\mathcal{N}}( \lambda, \theta)= \mathcal{N}( \lambda, \theta; \hat{\V})$ for the calibration step with respect to the model parameters: we look for an approximate solution $\hat{\theta}\in \Theta$ to the problem
\begin{equation*}
	\minimize_{\theta\in\Theta}  \frac{1}{N_{cal}}\sum_{i=1}^{N_{cal}} \left(\hat{\mathcal{N}}(\lambda_i, \theta)-\Pi_i\right)^2,
\end{equation*}
where $\{\Pi_i\}_{i=1}^{N_{cal}}$ are observed market prices for derivatives with contract features $\{\lambda_i\}_{i=1}^{N_{cal}}$, and $N_{cal}$ is the size of the calibration set.

\subsubsection{The grid-based learning approach}
\label{gridsection}
We train a neural network which is a function only of the model parameters $\theta$. The output is a discrete grid of values corresponding to option prices for different specifications of $\lambda$. The options considered are those which are traded in the market and used for calibration. Let us suppose $m=2$ and $\lambda = (\lambda^1, \lambda^2)$. For $m_1, m_2\in \N$, we create a grid of values $\{(\lambda^1_j, \lambda^2_k)\}_{j=1, k=1}^{m_1, m_2}$ and the training set $\{(\theta_i, \{\Pi(\theta_i, (\lambda^1_j, \lambda^2_k))\}_{j=1, k=1}^{m_1, m_2} )\}_{i=1}^{N_{train}}$. We then train a neural network $\mathcal{N}:\Theta\to \R_+^{m_1\times m_2}$ by solving the following optimization problem
\begin{equation*}
	\hat{\V} \in \argmin_{\V\in\R^M}  \frac{1}{N_{train}}\frac{1}{m_1m_2}\sum_{i=1}^{N_{train}} \sum_{j, k=1}^{m_1, m_2}\left(\mathcal{N}(\theta_i; \V)_{j,k}-\Pi(\theta_i, (\lambda^1_j, \lambda^2_k))\right)^2.
\end{equation*}
We use the trained neural network $\hat{\mathcal{N}}(\theta)= \mathcal{N}(\theta; \hat{\V})$ in the calibration step to find the optimal model parameters $\hat{\theta}\in \Theta$ for fitting the market observations $\{\Pi_{j,k}\}_{j=1, k=1}^{m_1, m_2}$:
\begin{equation*}
	\minimize_{\theta\in\Theta}  \frac{1}{m_1m_2} \sum_{j, k=1}^{m_1, m_2}\left(\hat{\mathcal{N}}(\theta)_{j,k}-\Pi_{j,k}\right)^2.
\end{equation*}
The main difference of the grid-based approach compared with the pointwise one, is that the neural network is trained to price only specific options, namely options related to the grid $\{(\lambda^1_j, \lambda^2_k)\}_{j=1, k=1}^{m_1, m_2}$ defined in the first step. However, every price for a contract not included in this grid can possibly be obtained by interpolation.

\subsection{The bid-ask constraint}
\label{bidasksection}
In order-book based markets there is no single price, but the so-called bid and ask price corresponding to the cheapest sell and the most expensive buy order. Depending on market liquidity, the spread between bid and ask price (bid-ask spead) can be significant. Thus the calibration problem described in equation \eqref{calibration} breaks down as we do not have exact prices to aim at in a mean squared loss sense. Given bid and ask prices $\{(y_i^{bid}, y_i^{ask})\}_{i=1}^N$, to penalize the values $\{x_i\}_{i=1}^N$ lying outside the bid-ask range, we introduce the bid-ask loss function
\begin{equation*}
	\frac{1}{N}\sum_{i=1}^N \left\{\left(x_i-y_i^{bid}\right)^2\I_{\left\{ x_i<y_i^{bid}\right\}}	+ \left(x_i-y_i^{ask}\right)^2\I_{\left\{x_i>y_i^{ask} \right\}}	\right\},
\end{equation*}
which equals zero for those prices within the bid-ask interval, and it is a quadratic function of the distance to the boundary outside the interval, as it is shown in Figure \ref{loss_functions}.
\begin{figure}[!tbp]
	\centering
	\includegraphics[width=0.4\textwidth]{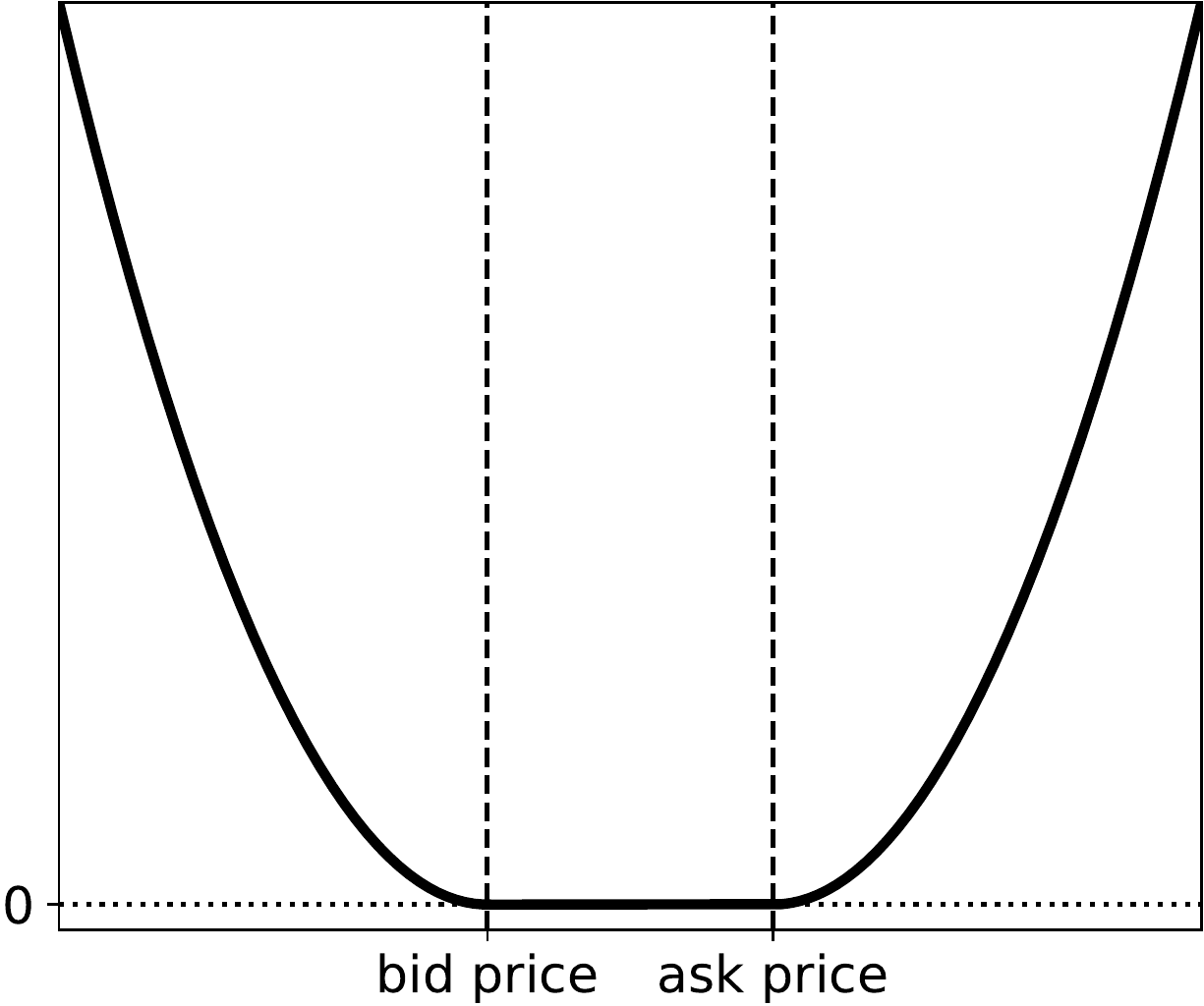}
	\caption{The bid-ask loss function.\label{loss_functions}}
\end{figure}

In the grid-based learning framework described in Section \ref{gridsection} with $m=2$ and $\lambda = (\lambda^1, \lambda^2)\in\{(\lambda^1_j, \lambda^2_k)\}_{j=1, k=1}^{m_1, m_2}$, we consider the market observations $\{\Pi_{j,k}^{bid}, \Pi_{j,k}^{ask}\}_{j=1, k=1}^{m_1, m_2} $, where $\Pi_{j,k}^{bid}$ and $\Pi_{j,k}^{ask}$ are, respectively, the bid price and the ask price of the contract with features $(\lambda^1_j, \lambda^2_k)$. Given the trained neural network $\hat{\mathcal{N}}:\Theta\to \R_+^{m_1\times m_2}$, we then look for the optimal model parameters $\hat{\theta}\in \Theta$ to fit the market observations:
\begin{equation*}
	\minimize_{\theta\in\Theta}  \frac{1}{m_1m_2} \sum_{j, k=1}^{m_1, m_2} \left\{\left(\hat{\mathcal{N}}(\theta)_{j,k}-\Pi_{j,k}^{bid}\right)^2\I_{\left\{\hat{\mathcal{N}}(\theta)_{j,k} < \Pi_{j,k}^{bid}\right\}}	+ \left( \hat{\mathcal{N}}(\theta)_{j,k}- \Pi_{j,k}^{ask} \right)^2\I_{\left\{\hat{\mathcal{N}}(\theta)_{j,k} > \Pi_{j,k}^{ask}\right\}}	\right\}.
\end{equation*}
The bid-ask loss function can be also applied to the pointwise learning in a similar manner.

\section{The setting for the experiments}
\label{oursetting}
We now provide a fully parametrized version of our model. Let $\alpha(x) = e^{\alpha x}$, $x \in \R_{+}$, be the weight function for the Filipovi{\'{c}} space $\h_{\alpha}$. By Proposition~\ref{weight:cond} the real constant $\alpha$ must satisfy $0< \alpha < 2b$, where $b$ models the Samuelson effect in the deterministic specification of the volatility in equation \eqref{kappa}. Moreover, we deal with swap contracts of forward type on energy with delivery over an interval $[T_1, T_2]$. Thus we take $w(T;T_1,T_2) = \frac{1}{T_2-T_1}$ (see equation \eqref{wswap}) and, consequently, $w_{\ell}(y-x)= \frac{1}{\ell}$. Then $\mathcal{W}_{\ell}(\ell) = 1$ (see equation \eqref{W}) and $q_{\ell}^{w}(x, y)= \frac{1}{\ell} \left(x+\ell-y)\right)\I_{[x,x+\ell]}(y)$ (see equation \eqref{qell}). In our case study we consider only European-style call options with payoff function $\pi(x) = \max(x-K, 0)$, for $K>0$ the strike price. However, the framework works of course equally well with other options.

In order to benchmark the two-step approach, as explained in Section \ref{introduction}, we focus on deterministic volatility operators, such as the one introduced in Section \ref{nonrandomvol}, which will be further specified here. For $\sigma_t$ deterministic, we have seen in Proposition \ref{propprice} that the price $\Pi(t)$ at time $0 \le t \le \tau$ of the option with payoff $\pi(F(\tau, T_1, T_2))$ at time $0 \le \tau \le T_1$, can be expressed in terms of a standard Gaussian random variable $X$, a variance $\xi^2 = \int_t^{\tau}\Sigma^2_sds$ and a drift $\mu(g_t)= \delta_{T_1-t}\D_{\ell}^w g_t$. In the case of a European-style call option, $\Pi(t)$ has a closed form solution, by means of a Black-76 type formula. This allows for exact price values to be used for training the model avoiding additional sources of error, such as resulting from a Monte Carlo simulation approach. As direct consequence of Proposition \ref{propprice}, we find in the next proposition explicitly the price functional for a call option.
\begin{prop}
	The price of a European-style call option with strike price $K>0$ and maturity time $\tau \le T_1$ is given by
	\begin{equation}
		\label{priceVexp}
		\Pi(t) = e^{-r(\tau-t)}\left\{\xi \phi\left(\frac{\mu(g_t)-K}{\xi}\right) + \left(\mu(g_t)-K\right)\Phi\left(\frac{\mu(g_t)-K}{\xi}\right)\right\},
	\end{equation}
	$\phi$ and $\Phi$ being, respectively, the density function and the cumulative distribution function of a standard Gaussian random variable.
\end{prop}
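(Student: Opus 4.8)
The plan is to specialize the general pricing representation of Proposition~\ref{propprice} to the call payoff $\pi(x)=\max(x-K,0)$ and to evaluate the resulting one-dimensional Gaussian integral in closed form. By Proposition~\ref{propprice}, since $\sigma_t$ is deterministic,
\begin{equation*}
	\Pi(t) = e^{-r(\tau-t)}\,\E\left[\left.\max\left(\mu(g_t)+\xi X - K,\,0\right)\right|\F_t\right],
\end{equation*}
where $X$ is standard normal, $\xi^2=\int_t^\tau \Sigma_s^2\,ds$ and $\mu(g_t)=\delta_{T_1-t}\D_\ell^w g_t$. Since $\mu(g_t)$ is $\F_t$-measurable it behaves as a constant under the conditional expectation; writing $m:=\mu(g_t)$ and assuming $\xi>0$ (the degenerate case $\xi=0$ being recovered by letting $\xi\downarrow 0$, which returns the discounted intrinsic value $e^{-r(\tau-t)}(m-K)^+$), the conditional expectation collapses to the deterministic integral
\begin{equation*}
	\E\left[(m-K+\xi X)^+\right] = \int_{-(m-K)/\xi}^{\infty}\left(m-K+\xi x\right)\phi(x)\,dx,
\end{equation*}
using that $m-K+\xi x\ge 0$ exactly when $x\ge -(m-K)/\xi$.

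Next I would split the integral as $\int_{-a}^{\infty}(m-K)\phi(x)\,dx + \xi\int_{-a}^{\infty}x\phi(x)\,dx$ with $a=(m-K)/\xi$. The first term equals $(m-K)\bigl(1-\Phi(-a)\bigr)=(m-K)\Phi(a)$ by the symmetry of the standard normal; the second equals $\xi\phi(a)$, since $\phi'(x)=-x\phi(x)$ gives $\int_{-a}^{\infty}x\phi(x)\,dx=\phi(-a)=\phi(a)$. Summing the two contributions and restoring the discount factor yields exactly \eqref{priceVexp} with $m=\mu(g_t)$.

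I do not anticipate any genuine obstacle: the proposition is a direct corollary of Proposition~\ref{propprice}, and the only points requiring a word of care are the $\F_t$-measurability of $\mu(g_t)$ (so that the conditional expectation reduces to an explicit function of $\mu(g_t)$ and $\xi$) and the boundary case $\xi=0$; the underlying Gaussian integral is entirely elementary.
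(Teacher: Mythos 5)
Your proposal is correct and follows exactly the route the paper takes: the paper's proof is a one-line remark that the result "follows by direct calculation from equation \eqref{option:price:formula2} for the payoff function $\pi(x)=\max(x-K,0)$, using standard techniques for the expected value of a Gaussian random variable," and your Gaussian integral computation is precisely that calculation carried out explicitly (and verified correctly). Your added remarks on the $\F_t$-measurability of $\mu(g_t)$ and the degenerate case $\xi=0$ are sensible points of care that the paper leaves implicit.
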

\begin{proof}
	The proof follows by direct calculation from equation \eqref{option:price:formula2} for the payoff function $\pi(x) = \max(x-K, 0)$, and using standard techniques for the expected value of a Gaussian random variable. \qed
\end{proof}

To compute the price in equation \eqref{priceVexp}, we need the variance $\xi$ and the shift $\mu(g_t)$, hence we need to specify the volatility operator $\sigma_t$ and the covariance operator $\q$. Last, we must parametrize an appropriate initial forward curve, $g_t\in \h_{\alpha}$. 

\subsection{The covariance operator}
\label{covsection}
We introduce a covariance operator $\q$ that depends only on a finite number of parameters. The analysis conducted by \cite{fredflor} reveals a covariance structure that is well approximated by an exponential function, i.e. $\text{Cov} (\W_t(x), \W_t(y) ) \approx  e^{-k|x-y|}$. Despite the operation $\W_t(x)=\delta_x (\W_t)$ not being well defined on our space $\h$, we can approximate it by the scalar product $\delta_x\!\left(\W_t\right)\approx \langle \eta_x, \W_t\rangle$, with some "bell-shaped" function $\eta_x$ centred in $x$, such as a Gaussian density function. For every $x, y \in \set$, denoting by $c(x,y)$ the empirical covariance function between $\W_t(x)$ and $\W_t(y)$, we then get:
\begin{align*}
	c(x,y)& = \E[\W_t(x)\W_t(y)]=\E[\delta_x(\W_t)\delta_y(\W_t)]\\
	& \approx \E[\langle \eta_x,\W_t\rangle \langle \eta_y,\W_t\rangle]=\langle \q \eta_y,\eta_x\rangle\\
	&\approx \q \eta_y(x)=\int_{\set} e^{-k|x-z|} \eta_y(z)dz\\
	&\approx e^{-k|x-y|},
\end{align*}
which shows that in $\h$ a covariance operator based on an exponential kernel indeed approximates the empirically observed covariance structure of the Wiener process $\W$ across different maturities. We thus define a one parameter covariance operator by
\begin{equation}
	\label{covQ}
	\q h(x) = \int_{\set}e^{-k|x-y|}h(y)dy, \quad h\in \h.
\end{equation}
Because $e^{-k|\cdot |}$ is the characteristic function of a Cauchy distributed random variable with location parameter $0$ and scale parameter $k$, it follows by Bochner's Theorem that it is positive-definite. Since it is also symmetric and continuous, for $\set$ compact it follows from \cite[Theorem A.8]{infdimbook} that $\q$ is in fact a covariance operator. In the following, we therefore choose $\set := [-\gamma, \gamma]$ for some large $\gamma$ which ensures that all maturities of interest are covered.

\subsection{The volatility operator}
\label{volsubsection}
We consider a specification of the volatility operator that does not depend on time and the current state $g_t$, which we denote by $\sigma$ instead of $\sigma_t$ to simplify the notation. We choose $a(t)=a \ge 0$ for every $t\in \R_{+}$, thus we do not account for seasonality and the level $a$ corresponds to the implied spot price volatility, as pointed out in \cite{fredsteen}. Moreover, we define the following weight function
\begin{equation}
	\label{omega}
	\omega(x):= (1-\abs{x})\,\I_{\{\abs{x}\leq 1\}} =
	\begin{cases}
		(1-\abs{x}) & \text{if $\abs{x}\leq 1$} \\
		0       & \text{otherwise}
	\end{cases}.
\end{equation}
Let us notice that this parameter-free specification of $\omega$ fulfils the assumptions of Proposition~\ref{weight:cond}. The kernel $\kappa_t$ in equation \eqref{kappa} becomes then $\kappa_t(x,y) = \kappa(x,y) = ae^{-bx} \omega (x-y)$, where $b\ge0$ determines the strength of the maturity effect. The volatility operator is given by
\begin{equation}
	\label{vol}
	\sigma h(x) = ae^{-bx} \int_{\set} (1-\abs{x-y})\,\I_{\{\abs{x-y}\leq 1\}} h(y) dy,
\end{equation}
and is well defined by Theorem~\ref{kernel:cond}. Let us recall that the role of $\sigma$ is to smoothen the noise from the space $\h$ to $\h_{\alpha}$, and we have achieved this with an integral operator. The weight function $\omega$ has then a double role. First of all, it functions as (a part of) the kernel for the integral operator which smoothen the noise. On the other hand, it weights the randomness coming from the Wiener process $\W_t$ so that a contract with time to maturity $x$ is only influenced by $\W_t(y)$, for $y$ in a neighbourhood of $x$. Other weight functions $\omega$ could be considered to obtain a similar weighting effect.

We have to calculate the volatility $\Sigma^2_s$ using the formula provided in Corollary \ref{sigma:integral:cor}. However, the expression turns out very cumbersome when integrating over $\set = [-\gamma, \gamma]$. For this reason, we integrate over $\mathbb{R}$ instead and calculate $\Sigma^2_s$ according to the formula
\begin{equation*}
	\Sigma^2_s := a^2\int_{\R_{+}}\int_{\R_{+}}\int_{\R}\int_{\R}e^{-bu}e^{-bv} d_{\ell}(T_1-s, u)d_{\ell}(T_1-s, v) \omega(v-z)q(z,y)\omega(u-y)dy dz du dv.
\end{equation*}
Since all terms are positive and $\Sigma^2_s<\infty$, it is then easy to see that, for
\begin{equation*}
	\Sigma^2_{s,\gamma}:=a^2 \int_{\R_{+}}\int_{\R_{+}}\int_{-\gamma}^{\gamma}\int_{-\gamma}^{\gamma}e^{-bu}e^{-bv}d_{\ell}(T_1-s, u)d_{\ell}(T_1-s, v)\omega(v-z)q(z,y)\omega(u-y)dy dz du dv,
\end{equation*}
the limit $\lim_{\gamma \rightarrow \infty} \Sigma^2_{s,\gamma}= \Sigma^2_s$ holds. We can calculate $\Sigma^2_s$ explicitly.
\begin{prop}
	\label{prop1}
	In the setting described above, the volatility $\Sigma^2_s$, for $t\le s \le \tau$, is given by
	\begin{equation*}
		\Sigma^2_s = \frac{2a^2}{kb^4\ell^2}\,e^{-2b(T_1-s)} \left\{ \frac{2}{3}\left(b^2+3\right)\left(1+e^{-2b\ell}\right)-2e^{-b\ell}\left(\frac{b^2}{6}\left(3(\ell-2)\ell^2+4 \right)-3\ell+2 \right) \right\}.
	\end{equation*}
\end{prop}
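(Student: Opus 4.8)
The plan is to compute the four-fold integral for $\Sigma_s^2$ displayed just above the statement (the version integrated over $\R$) directly, by inserting the deterministic volatility kernel $\kappa(x,y)=ae^{-bx}\omega(x-y)$ with $\omega$ the triangular weight \eqref{omega}, and the covariance kernel $q(z,y)=e^{-k|z-y|}$ from \eqref{covQ}, and then evaluating the resulting integral in closed form.

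First I would reduce the dimension and isolate the $s$-dependence. The delivery period functions $d_\ell(T_1-s,\cdot)$ confine $u$ and $v$ to the interval $[T_1-s,\,T_1-s+\ell]$, each contributing a factor $1/\ell$; substituting $u=(T_1-s)+u'$ and $v=(T_1-s)+v'$ extracts $e^{-2b(T_1-s)}$ from $e^{-bu}e^{-bv}$, leaving
\begin{equation*}
	\Sigma_s^2=\frac{a^2}{\ell^2}\,e^{-2b(T_1-s)}\int_0^\ell\!\!\int_0^\ell e^{-b(u+v)}\,\Psi(u-v)\,du\,dv ,\qquad
	\Psi(w):=\int_\R\!\!\int_\R\omega(v-z)\,e^{-k|z-y|}\,\omega(u-y)\,dy\,dz ,
\end{equation*}
where one first checks, by translating the integration variables, that the inner double integral depends on $(u,v)$ only through $w=u-v$ and is even in $w$.

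Next I would evaluate $\Psi$. Performing the $z$-integral convolves $\omega$ against $e^{-k|\cdot|}$, and the subsequent $y$-integral convolves the outcome against $\omega$ once more, so $\Psi$ is the threefold convolution $\omega\ast\omega\ast e^{-k|\cdot|}$; equivalently $\Psi=\rho\ast e^{-k|\cdot|}$, where $\rho:=\omega\ast\omega$ is the autocorrelation of the triangular weight, an explicit even cubic spline supported on $[-2,2]$ with $\rho(0)=\tfrac23$, $\rho(\pm1)=\tfrac16$, $\rho(\pm2)=0$. Both convolutions are elementary and give $\Psi$ as an explicit even, piecewise-smooth function that decays like $e^{-k|w|}$ for $|w|\ge2$ and, on $|w|\le2$, is an explicit combination of polynomials in $w$ and the exponentials $e^{\pm kw}$. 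Then the remaining double integral collapses: the change of variables $(u,v)\mapsto(u-v,\,u+v)$ together with the $u\leftrightarrow v$ symmetry and the evenness of $\Psi$ gives
\begin{equation*}
	\int_0^\ell\!\!\int_0^\ell e^{-b(u+v)}\Psi(u-v)\,du\,dv=\frac1b\int_0^\ell\Psi(w)\bigl(e^{-bw}-e^{bw-2b\ell}\bigr)\,dw ,
\end{equation*}
and inserting the closed form of $\Psi$ reduces everything to integrals of polynomials against $e^{\pm bw}$ over $[0,\ell]$, which are standard antiderivatives; multiplying back $\tfrac{a^2}{\ell^2}e^{-2b(T_1-s)}$ and collecting terms produces the stated formula.

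The conceptual content is minimal — two Fubini steps, two convolutions, and elementary antiderivatives. The real work, and the place where slips are easiest, is the bookkeeping in the last two steps: tracking the piecewise structure of $\Psi$ (breakpoints at $|w|=1$ and $|w|=2$) relative to the delivery length $\ell$, splitting the range $[0,\ell]$ accordingly, and consolidating the many polynomial and exponential contributions into the compact closed form of the statement. I would double-check the outcome by evaluating a couple of special values or limits (for instance the behaviour as $b\to0$, and as $\ell\to0$) against an independent computation.
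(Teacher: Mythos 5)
Your reduction to $\frac{a^2}{\ell^2}e^{-2b(T_1-s)}\int_0^\ell\int_0^\ell e^{-b(u+v)}\Psi(u-v)\,du\,dv$ with $\Psi=\omega\ast\omega\ast e^{-k|\cdot|}$ is a correct rewriting of the four-fold integral, but it diverges from the paper's proof at the decisive step, and the divergence is not cosmetic. The paper does not compute the threefold convolution. It first claims, by integrating by parts twice in $y$ and invoking ``$\omega''=0$'', that $\int_{\R}e^{-k|z-y|}\omega(u-y)\,dy=\frac{2}{k}\,\omega(u-z)$, so that $\mathcal{A}(u,v)=\frac{2}{k}\int_\R\omega(v-z)\omega(u-z)\,dz$; this turns $\Sigma_s^2$ into the perfect square $\frac{2a^2}{k}\int_\R\bigl(\int_{\R_+}e^{-bu}d_\ell(T_1-s,u)\,\omega(u-z)\,du\bigr)^2dz$, which is then evaluated by integration by parts in $u$ and three elementary integrals $\mathcal{B}_1,\mathcal{B}_2,\mathcal{B}_3$. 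In your notation this amounts to replacing $\Psi$ by $\frac{2}{k}\,(\omega\ast\omega)$, i.e.\ treating $e^{-k|\cdot|}$ as $\frac{2}{k}\delta_0$.

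That replacement is exactly where your plan and the stated formula part ways. The identity $\int_{\R}e^{-k|z-y|}\omega(u-y)\,dy=\frac{2}{k}\omega(u-z)$ is not exact: $\omega'$ jumps at $-1,0,1$, so the repeated integration by parts picks up the distributional part $\omega''=\delta_{-1}-2\delta_0+\delta_1$, which contributes terms of the form $\frac{1}{k^2}\bigl(e^{-k|u-z-1|}-2e^{-k|u-z|}+e^{-k|u-z+1|}\bigr)$. Concretely, at $u=z$ the left-hand side equals $\frac{2}{k}-\frac{2(1-e^{-k})}{k^2}$ rather than $\frac{2}{k}$ (roughly a $12\%$ discrepancy for $k=8$, the lower end of the paper's parameter range), and at $u-z=2$ it is strictly positive while $\omega(u-z)=0$. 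If you carry out your exact computation of $\Psi$ you will therefore obtain additional $k$-dependent contributions (involving $e^{-k}$, $e^{-k\ell}$, $1/k^2$, and so on) that do not cancel in the $(u,v)$-integration, so the final ``collecting terms'' step cannot produce the compact expression of the Proposition, whose only $k$-dependence is the prefactor $1/k$. You must either adopt the paper's simplification --- in which case the displayed formula should be understood as the large-$k$ limit of the four-fold integral rather than its exact value --- or you will end up proving a different, longer formula. Note also that your proposed sanity checks in $b$ and $\ell$ would not flag this, since the mismatch lives entirely in the $k$-dependence; a direct numerical quadrature of the four-fold integral at a single parameter point would.
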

\begin{proof}
	See Appendix \ref{appendix:vol}.
\end{proof}

\noindent And, consequently, we can calculate $\xi^2$ in equation \eqref{xi}.
\begin{prop}
	\label{prop2}
	In the setting described above, we get
	\begin{multline*}
		\xi^2 = \frac{a^2}{kb^5\ell^2}\left( e^{-2b(T_1-\tau)}-e^{-2b(T_1-t)}\right)\cdot \\\cdot\left\{ \frac{2}{3}\left(b^2+3\right)\left(1+e^{-2b\ell}\right)-2e^{-b\ell}\left(\frac{b^2}{6}\left(3(\ell-2)\ell^2+4 \right)-3\ell+2 \right) \right\}.
	\end{multline*}
\end{prop}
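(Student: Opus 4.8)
The plan is to obtain $\xi^2$ by integrating the closed-form expression for $\Sigma_s^2$ from Proposition~\ref{prop1} directly over $s\in[t,\tau]$, as prescribed by the definition in equation~\eqref{xi}. The key observation is that, in the formula of Proposition~\ref{prop1}, the entire curly-bracket factor depends only on $b$ and $\ell$ and is therefore constant in $s$; likewise the prefactor $\tfrac{2a^2}{kb^4\ell^2}$ carries no $s$-dependence. All of the $s$-dependence is concentrated in the single term $e^{-2b(T_1-s)}$, which I would rewrite as $e^{-2bT_1}e^{2bs}$ so that it can be pulled out under an elementary exponential integral.

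Concretely, I would first compute
\begin{equation*}
	\int_t^{\tau} e^{-2b(T_1-s)}\,ds = e^{-2bT_1}\int_t^{\tau} e^{2bs}\,ds = \frac{1}{2b}\left(e^{-2b(T_1-\tau)}-e^{-2b(T_1-t)}\right),
\end{equation*}
and then multiply this by the $s$-independent constant $\tfrac{2a^2}{kb^4\ell^2}$ and by the (unchanged) curly-bracket factor from Proposition~\ref{prop1}. The factor $\tfrac{1}{2b}$ picked up from the integration combines with $\tfrac{2a^2}{kb^4\ell^2}$ to give exactly $\tfrac{a^2}{kb^5\ell^2}$, and collecting the difference of exponentials with the bracket term reproduces the asserted formula verbatim.

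There is essentially no genuine obstacle here: the statement is a one-line consequence of Proposition~\ref{prop1}, and the proof is just a single elementary integration. The only points requiring minor care are the bookkeeping of the constant $\tfrac{1}{2b}$ produced by integrating $e^{2bs}$, and recording the difference in the correct order as $e^{-2b(T_1-\tau)}-e^{-2b(T_1-t)}$ (which is nonnegative since $\tau\ge t$ and $b\ge 0$), so that $\xi^2\ge 0$ as it must be.
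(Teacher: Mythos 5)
Your proof is correct and follows exactly the paper's argument: the paper also obtains Proposition~\ref{prop2} by integrating the expression for $\Sigma^2_s$ from Proposition~\ref{prop1} over $[t,\tau]$, with the only nontrivial step being the elementary integral $\int_t^{\tau}e^{-2b(T_1-s)}ds=\tfrac{1}{2b}\left(e^{-2b(T_1-\tau)}-e^{-2b(T_1-t)}\right)$ that converts the prefactor $\tfrac{2a^2}{kb^4\ell^2}$ into $\tfrac{a^2}{kb^5\ell^2}$. Your bookkeeping of the constants and the sign of the exponential difference is accurate.
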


\begin{proof}
	The results follows by integrating $\Sigma^2_s$ from Proposition \ref{prop1} over the interval $[t,\tau]$. 
\end{proof}

\subsection{The initial forward curve}
\label{NSsection}
Finally, we introduce a suitably parametrized initial forward curve, $g_t = g(t,\cdot)\in \h_{\alpha}$. We choose the Nelson-Siegel curve, which is defined for $x\ge 0$ by
\begin{equation}
	\label{NS}
	g_t(x) =g_{NS}(x) := \alpha_0 +\left(\alpha_1+\alpha_2\alpha_3x\right)e^{-\alpha_3x},
\end{equation}
for $\alpha_0, \alpha_1, \alpha_2, \alpha_3 \in \R$, $\alpha_3>0$. Let us notice that it does not depend on time $t$. This curve has been first introduced for modelling the forward rates in \cite{NS87}, and has already been applied in the context of energy markets by \cite{krigingfred}. We need to ensure that $g_{NS} \in \h_{\alpha}$.
\begin{lem}
	The Nelson-Siegel curve in equation \eqref{NS} belongs to $\h_{\alpha}$ if and only if $\alpha < 2\alpha_3$.
\end{lem}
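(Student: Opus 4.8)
The plan is to reduce the claim to a convergence criterion for an elementary integral. Since $g_{NS}$ is $C^{\infty}$ on $\R_+$, it is in particular (locally) absolutely continuous, so the only thing to check for membership in $\h_{\alpha}$ is the integrability requirement \eqref{filipoviccond}, i.e. $\int_{\R_+}g_{NS}'(x)^2\alpha(x)\,dx<\infty$ with $\alpha(x)=e^{\alpha x}$.

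First I would differentiate. A direct computation gives
\[
g_{NS}'(x)=\alpha_3 e^{-\alpha_3 x}\bigl((\alpha_2-\alpha_1)-\alpha_2\alpha_3 x\bigr),
\]
so that
\[
g_{NS}'(x)^2\,\alpha(x)=\alpha_3^2\,\bigl((\alpha_2-\alpha_1)-\alpha_2\alpha_3 x\bigr)^2\,e^{(\alpha-2\alpha_3)x}.
\]
The integrand is therefore a polynomial in $x$ of degree at most two, multiplied by the exponential $e^{(\alpha-2\alpha_3)x}$.

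Next I would invoke the elementary fact that, for a polynomial $P$ not identically zero, $\int_{\R_+}P(x)e^{\beta x}\,dx$ converges if and only if $\beta<0$: for $\beta\ge 0$ the integrand does not tend to $0$, while for $\beta<0$ exponential decay dominates polynomial growth (one may also integrate by parts finitely many times). Applying this with $\beta=\alpha-2\alpha_3$ and $P(x)=\bigl((\alpha_2-\alpha_1)-\alpha_2\alpha_3 x\bigr)^2$ yields $g_{NS}\in\h_{\alpha}$ precisely when $\alpha<2\alpha_3$, which is the claim.

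The only point requiring a little care is that $P$ could vanish identically; this is where the stated equivalence is, strictly speaking, conditional on a mild non-degeneracy assumption. Since $\alpha_3>0$, $P\equiv 0$ forces $\alpha_2=0$ and hence $\alpha_1=0$, i.e. $g_{NS}\equiv\alpha_0$ is constant and lies in $\h_{\alpha}$ for every $\alpha$. Excluding this trivial case (equivalently, assuming $(\alpha_1,\alpha_2)\neq(0,0)$), $P$ is a genuine nonzero polynomial and the dichotomy above applies. I do not expect any real obstacle here: the argument amounts to evaluating convergence of a single purely exponential–polynomial integral.
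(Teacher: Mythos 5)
Your proof is correct and follows essentially the same route as the paper: compute $g_{NS}'$, observe that $g_{NS}'(x)^2\alpha(x)$ equals $\alpha_3^2\left(\alpha_2-\alpha_1-\alpha_2\alpha_3x\right)^2e^{(\alpha-2\alpha_3)x}$, and conclude that the integral converges precisely when $\alpha-2\alpha_3<0$. Your additional remark about the degenerate case $(\alpha_1,\alpha_2)=(0,0)$, where $g_{NS}$ is constant and the ``only if'' direction fails, is a legitimate point of precision that the paper's one-line proof silently glosses over.
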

\begin{proof}
	The condition is found by calculating the $\h_{\alpha}$-norm of $g_{NS}$, namely
	\begin{align*}
		\norm{g_{NS} }_{\alpha}^2 &= g_{NS}^2(0)+\int_{\R_{+}}g_{NS}'^{\,2}(x)\alpha(x)dx \\&= (\alpha_0+\alpha_1)^2+\int_{\R_{+}}\alpha_3^2\left(\alpha_2-\alpha_1-\alpha_2\alpha_3x\right)^2e^{(\alpha-2\alpha_3)x}dx,
	\end{align*}
	where the integral converges if and only if $\alpha - 2\alpha_3<0$. \qed
\end{proof}
With the explicit representation for $g_t$, we compute $\mu(g_t)$: from Lemma \ref{deliveryfunction} we get that
$$\mu(g_t) = \frac{1}{\ell}\int_{T_1-t}^{T_1+\ell-t}g_{NS}(y)dy,$$
and by direct computation, we then calculate the drift
\begin{multline}
	\label{NSdrift}
	\mu(g_t) = \frac{1}{\ell}\left\{\alpha_0\ell+\frac{1}{\alpha_3} e^{-\alpha_3(T_1-t)}\left(\alpha_1+\alpha_2+\alpha_2\alpha_3(T_1-t) \right)+\right.\\\left.-\frac{1}{\alpha_3}e^{-\alpha_3(T_1+\ell-t)}\left(\alpha_1+\alpha_2+\alpha_2\alpha_3(T_1+\ell-t) \right) \right\},
\end{multline}
which is the last component we need in order to get a fully specified option price functional.

\section{Implementation and results}
\label{numericalsection}
In this section, we describe implementation details and report our findings. From Section \ref{oursetting}, the vector of model parameter is $\theta= (a,b,k,\alpha_0, \alpha_1, \alpha_2, \alpha_3)\in \R^7$, i.e., $n=7$. In this vector, $a, b\ge 0$ are parameters of the volatility operator introduced in Section \ref{volsubsection}, and $k\ge 0$ is the parameter of the covariance operator, see Section \ref{covsection}. Finally $\alpha_0, \alpha_1, \alpha_2, \alpha_3 \in \R$ with $\alpha_3>0$ are the parameters for the Nelson-Siegel curve as introduced in Section \ref{NSsection}. 

Since we are considering European-style call options written on forward-style swaps with delivery period, the vector of contract features is given by $\lambda = (K, \tau, T_1, \ell)\in \R^4$, hence $m=4$, where $K>0$ is the strike price, $\tau$ is the time to maturity, $T_1 \ge \tau$ is the start of delivery of the swap, and, finally, $\ell>0$ is the length of the delivery period. Since for the grid-based learning approach we need a grid of values for $\lambda$, we set $T_1 = \tau$, namely the time to maturity for the option coincides with the start of delivery of the swap, and $\ell =1/12$, namely we consider only contracts with one month of delivery as the time unit is one year. Then $\lambda = (\tau, K)\in \R^2$ and $m=2$, so that the grid, as introduced in Section \ref{gridsection}, will be two-dimensional. Finally, we fix $t=0$ as evaluation time, and $r=0$ as risk-free interest rate.

In all the experiments we use the Adam optimizer \cite{adam}, both for training the neural network and in the calibration step, and we work with the mean squared error loss function. We set the number of epochs to $200$ in the approximation step and to $1000$ in the calibration step because the accuracy improves only marginally for larger values. The batch size is fixed to $30$ for all the experiments. We point out that we tested several different configurations for the neural networks (numbers of layers, number of nodes per layer and activation functions). The configuration reported in what follows led to the best results for our study.\footnote{The code is implemented in TensorFlow 2.1.0 and is available at \urlstyle{tt} \url{https://github.com/silvialava/HJM_calibration_with_NN}.}

\subsection{Grid-based learning}
\label{grid-dense}
We create a grid of values of times to maturity and strike prices which we believe to be reasonable for a case study in the electricity markets. Let $\Lambda^{grid}:= \Lambda^{grid}_{\tau}\times  \Lambda^{grid}_{K}\subset \R_{+}^{7 \times 9}$ with
\begin{align*}
	&\Lambda^{grid}_{\tau}=\{1/12, 2/12, 3/12, 4/12, 5/12, 6/12, 1\},\\
	&\Lambda^{grid}_{K}=\{31.6, 31.8, 32.0, 32.2, 32.4, 32.6, 32.8, 33.0, 33.2\}.
\end{align*} 
Here the range in $\Lambda^{grid}_{K}$ is relatively narrow if compared to what is available, for example, at EEX. The reason for this choice is that with a wider range, some options are far out of the money and become very cheap for parameter choices that result in low overall volatility. These options are not very interesting and pose also numerical challenges, hence we prefer to leave them out from our case study. Alternatively, to widen the range, one may for example consider a grid where the range of strike prices depends on the time to maturity, or consider options with high strikes only in combination with model parameters that lead to large overall volatility. However, since the network architecture would not actually change significantly, we leave this for future applications.

For $\Theta\subset \R^7$, we define a four-layer neural network $\mathcal{N}: \Theta \to \R_{+}^{7\times9}$ with number of nodes $\mathbf{n} = (30, 30, 30, 63)$ and the ReLU activation function. The final output is reshaped into a grid of dimension $7\times9$. The number of parameters to calibrate is thus $M = 4053$. We consider a training set of size $N_{train}= 40000$ and a test set of size $N_{test} = 4000$, which is also used for the calibration step. Moreover, $\theta = (a,b,k,\alpha_0, \alpha_1, \alpha_2, \alpha_3)\in\Theta$ with $\Theta$ defined by
\begin{equation*}
	\Theta:= [0.2, 0.5] \times [0.5, 0.8] \times [8.0, 9.0]\times [34.2, 34.7] \times [-1.5, -1.0] \times [0.2, 1.2] \times [4.5, 5.0].
\end{equation*}
Thus the annual implied spot price volatility is between $20\%$ and $50\%$, while the half-life for the process, that is the time taken for the price to revert half-way back to its long-term level, is approximately between $4.5$ and $7$ months, see \cite{clew} for details. In particular, the samples are generated by considering a uniform grid for each of the $n=7$ dimensions composing $\Theta$, and then randomly matching the values in each dimension. The corresponding prices have then been calculated with the formula \eqref{priceVexp}. We point out that each of the $N_{train}$ (or $N_{test}$) samples is a grid of size $63=7\times 9$ with each dimension corresponding to the different values for $K$'s and $\tau$'s. Each vector calibration is thus performed starting from a sample of $63$ prices, as defined in Section \ref{gridsection}.

In Figure \ref{step1_grid_dense} we report the average relative error and the maximum relative error in the training step, both for the training and test set. In particular, the errors have been clustered in order to obtain a grid corresponding to the different contracts $(\tau, K) \in \Lambda^{grid}_{\tau} \times \Lambda^{grid}_{K}$. We notice that the average relative error is quite low, showing a good performance of the neural network in approximating the price functional. The worst accuracy is for the contracts with higher strike price, probably due to the fact that prices for these contracts are small.

In Figure \ref{adam_grid_dense} we report the relative error for the components of $\hat{\theta}$ in the calibration step. For some of the parameters, such as $a$ and $\alpha_2$ we notice a certain pattern, namely, for higher values of the parameter the error is smaller and vice versa. However, the performance in calibration is not particularly good: $a$, $b$, and $\alpha_2$ have a mean relative error which is more than $20\%$. On the other hand, even if the model parameters are not accurately estimated, by substituting $\hat{\theta}$ in the neural network, we get a price approximation $\hat{\Pi}$ which is quite good. This can be seen in Figure \ref{step2_adam_grid_dense}, where we report the average and the maximum relative error after calibration. For mid-maturity contracts we observe the best accuracy.

\begin{figure}[!htbp]
	\setlength{\tabcolsep}{10pt}
	\resizebox{1\textwidth}{!}{
		\begin{tabular}{@{}>{\centering}m{0.5\textwidth}>{\centering\arraybackslash}m{0.5\textwidth}@{}}
			\multicolumn{2}{c}{\large \textbf{Training set}}\\
			\textbf{Average relative error} & \textbf{Maximum relative error}\\
			\includegraphics[width=0.5\textwidth]{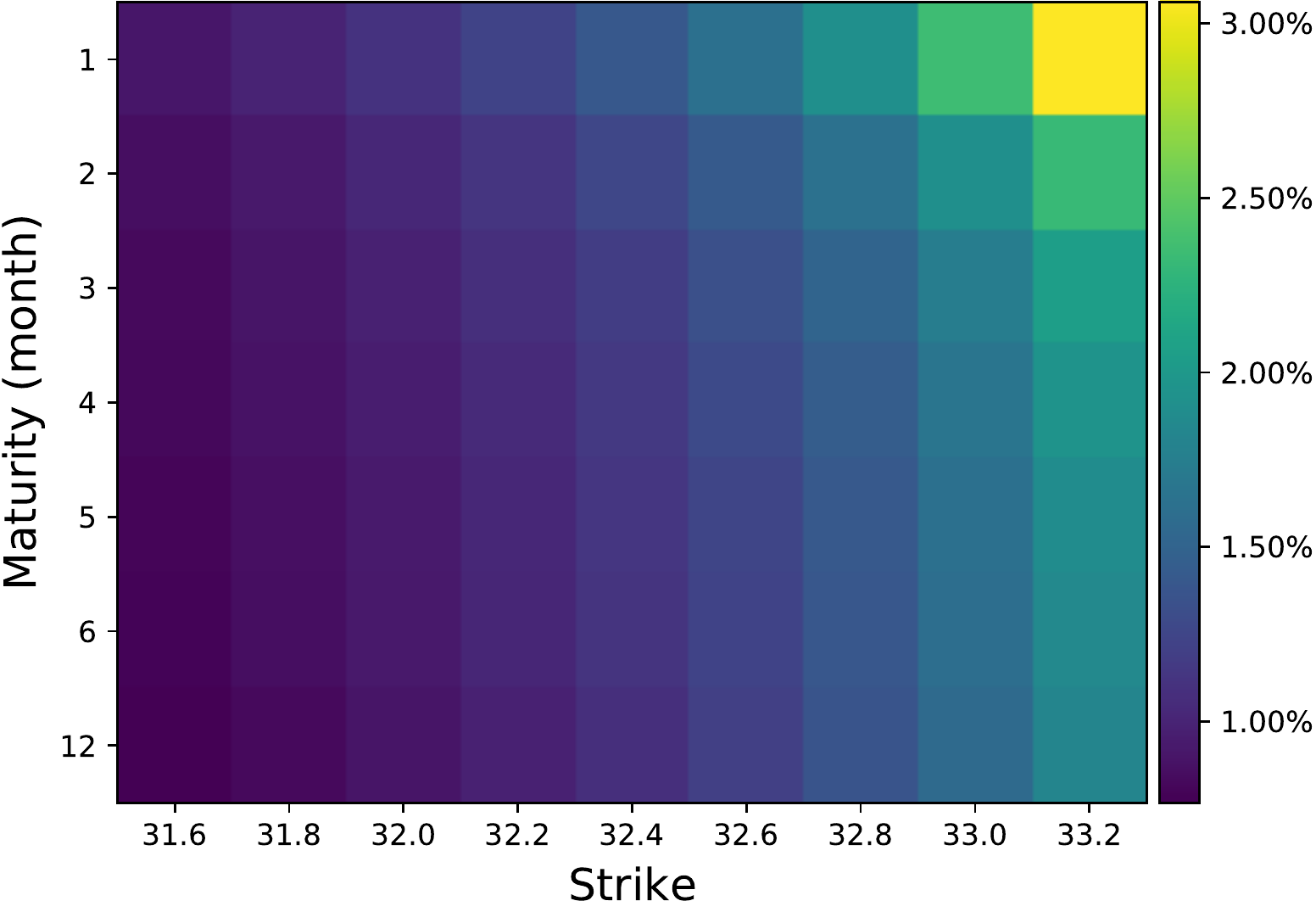} 
			&\includegraphics[width=0.5\textwidth]{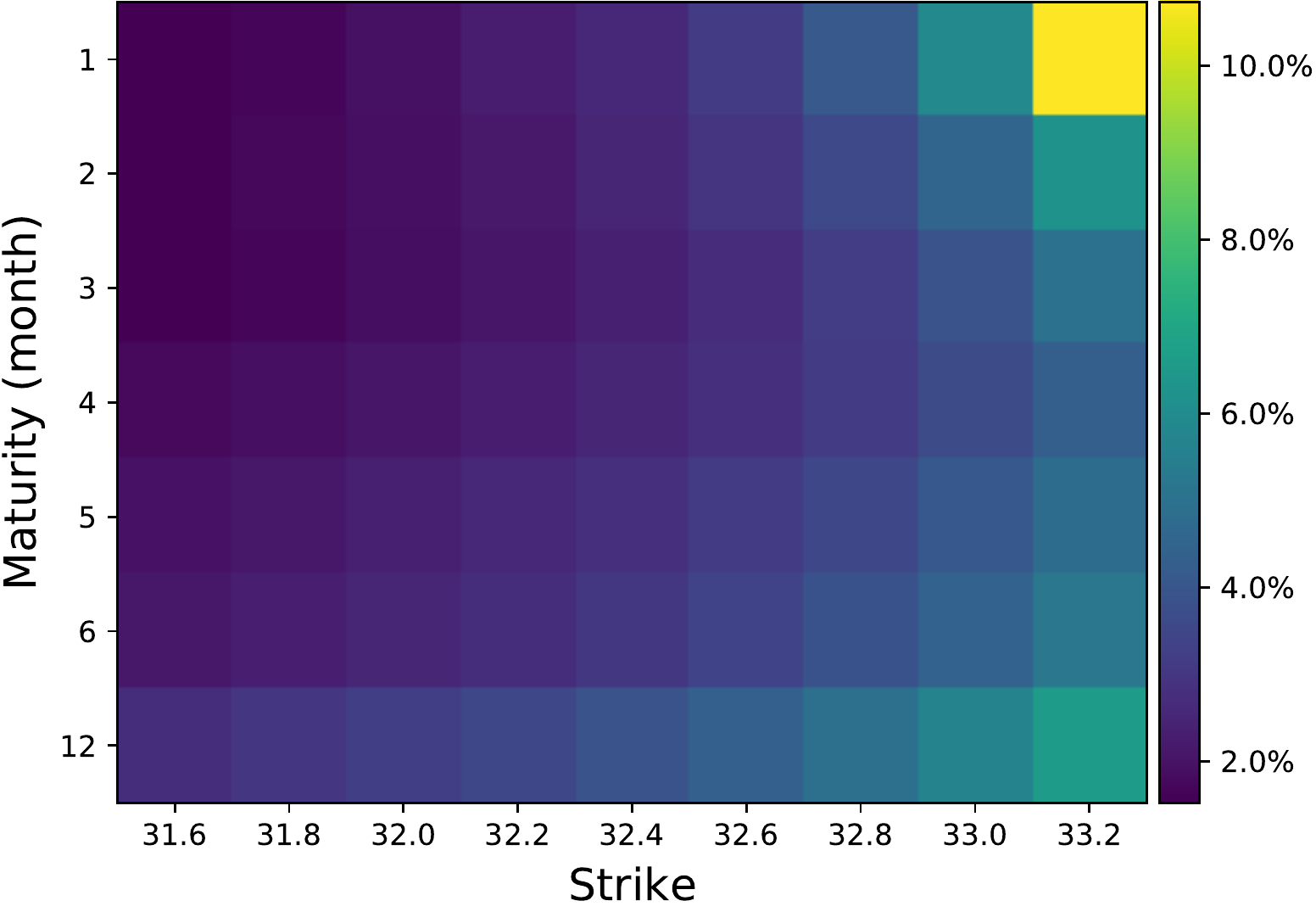}\\
			\multicolumn{2}{c}{\large \textbf{Test set}}\\
			\textbf{Average relative error} & \textbf{Maximum relative error}\\
			\includegraphics[width=0.5\textwidth]{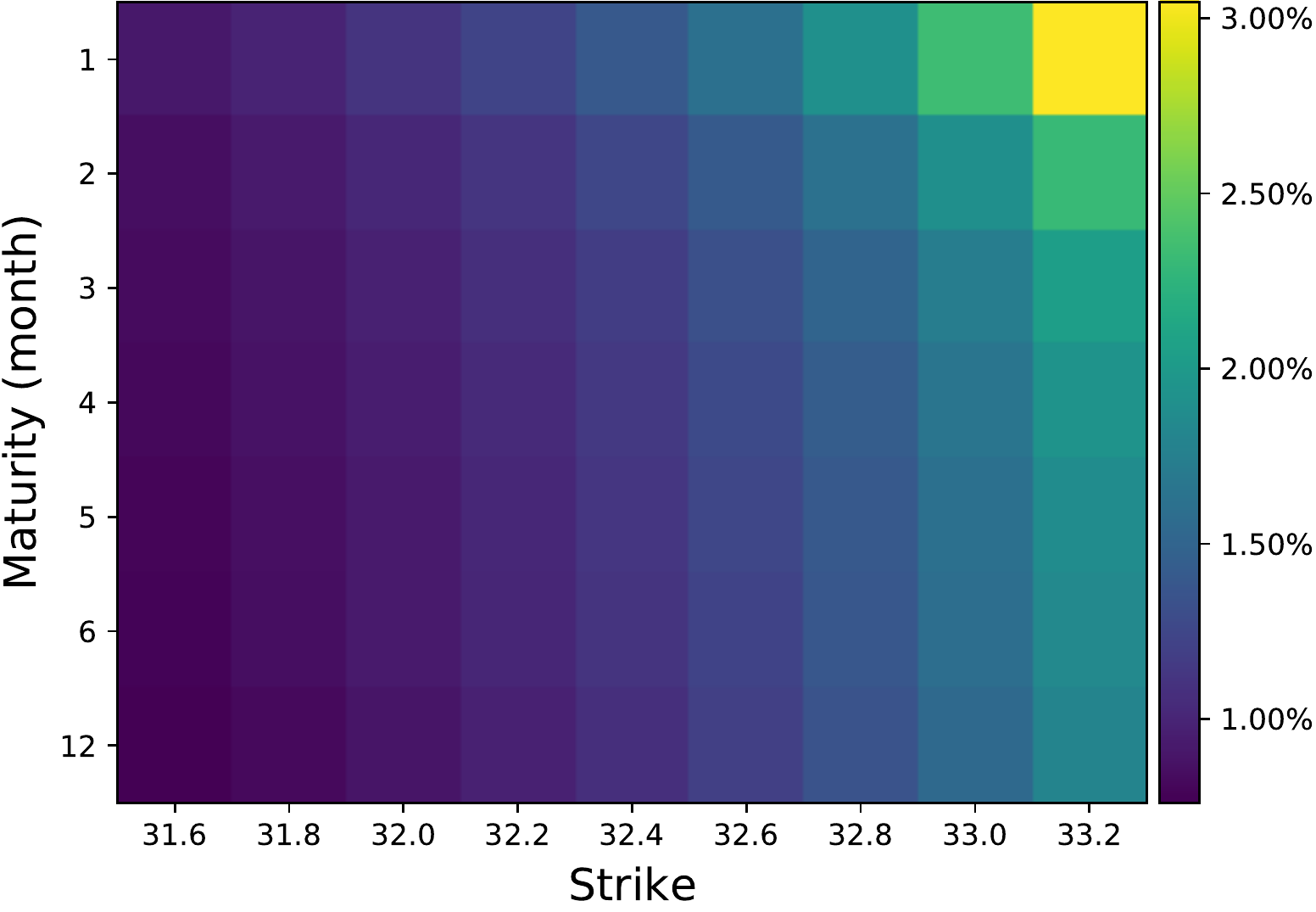} 
			&\includegraphics[width=0.5\textwidth]{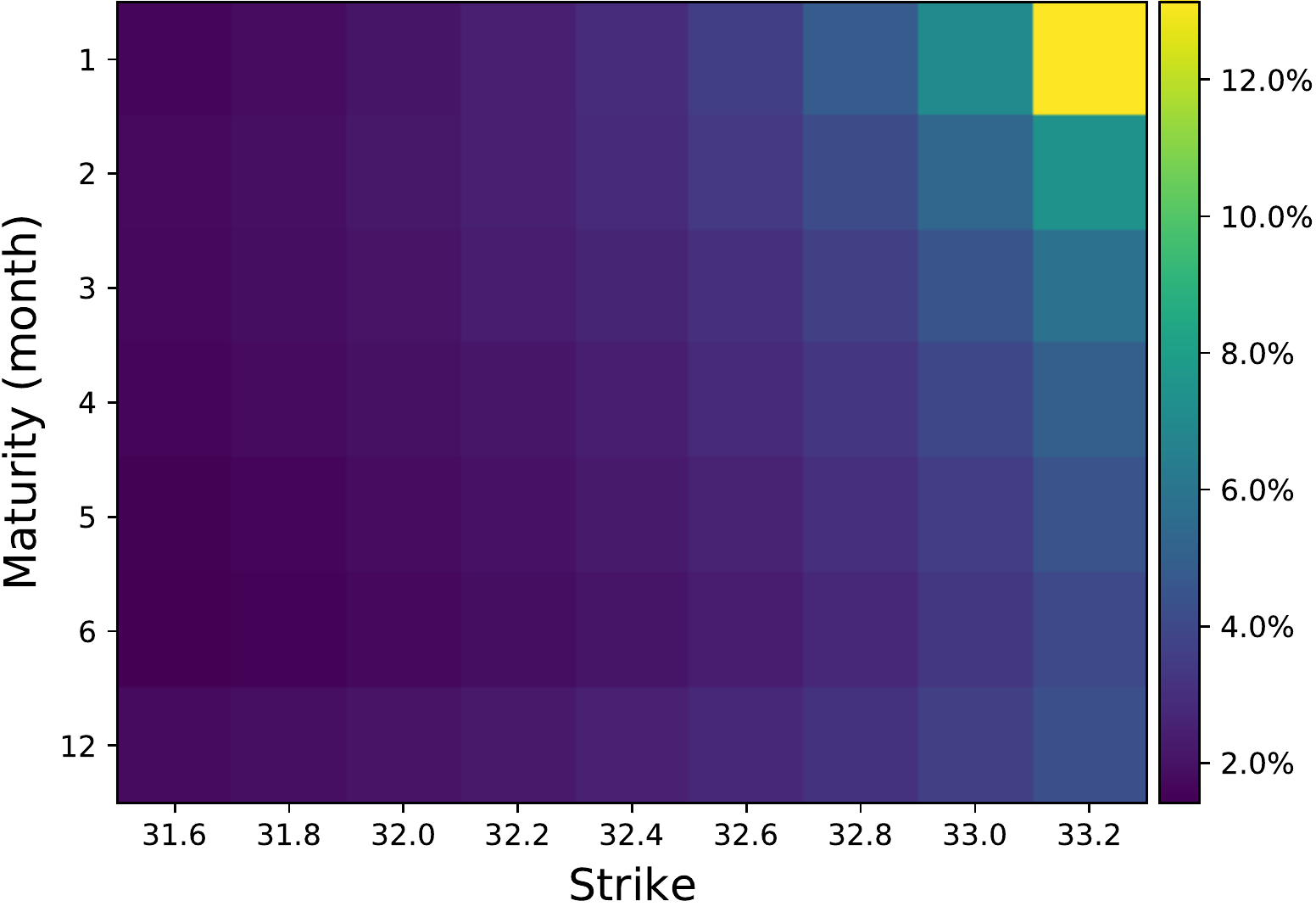}
		\end{tabular}
	}
	\caption{Average relative error and Maximum relative error in the approximation step with the grid-based learning approach.\label{step1_grid_dense}}
\end{figure}

\begin{figure}[!htbp]
	\setlength{\tabcolsep}{5pt}
	\resizebox{1\textwidth}{!}{
		\begin{tabular}{@{}>{\centering}m{0.33\textwidth}>{\centering}m{0.33\textwidth}>{\centering\arraybackslash}m{0.33\textwidth}@{}}
			$\boldsymbol{a}$ & $\boldsymbol{b}$&$\boldsymbol{k}$\\
			\includegraphics[width=0.33\textwidth]{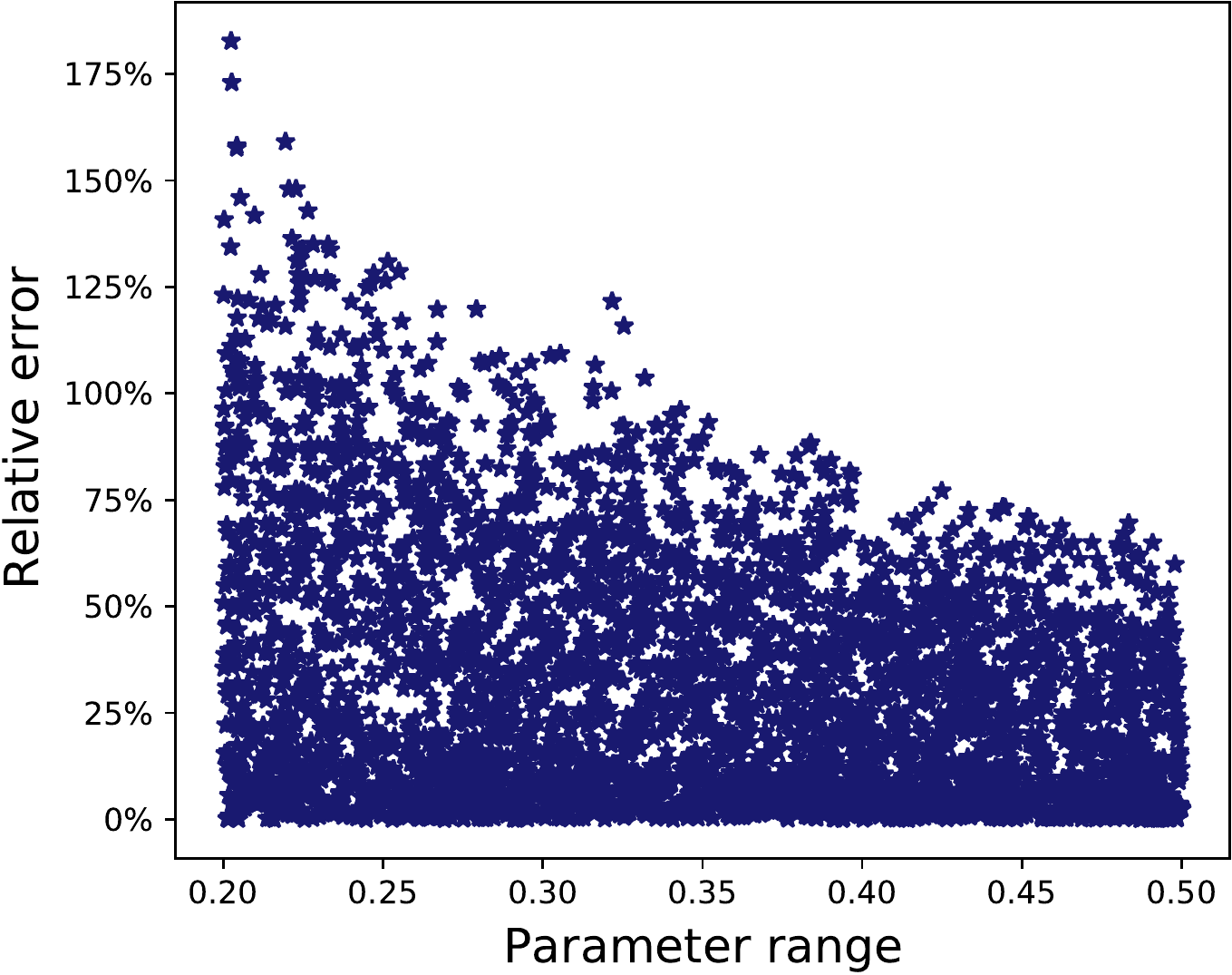} 
			&\includegraphics[width=0.33\textwidth]{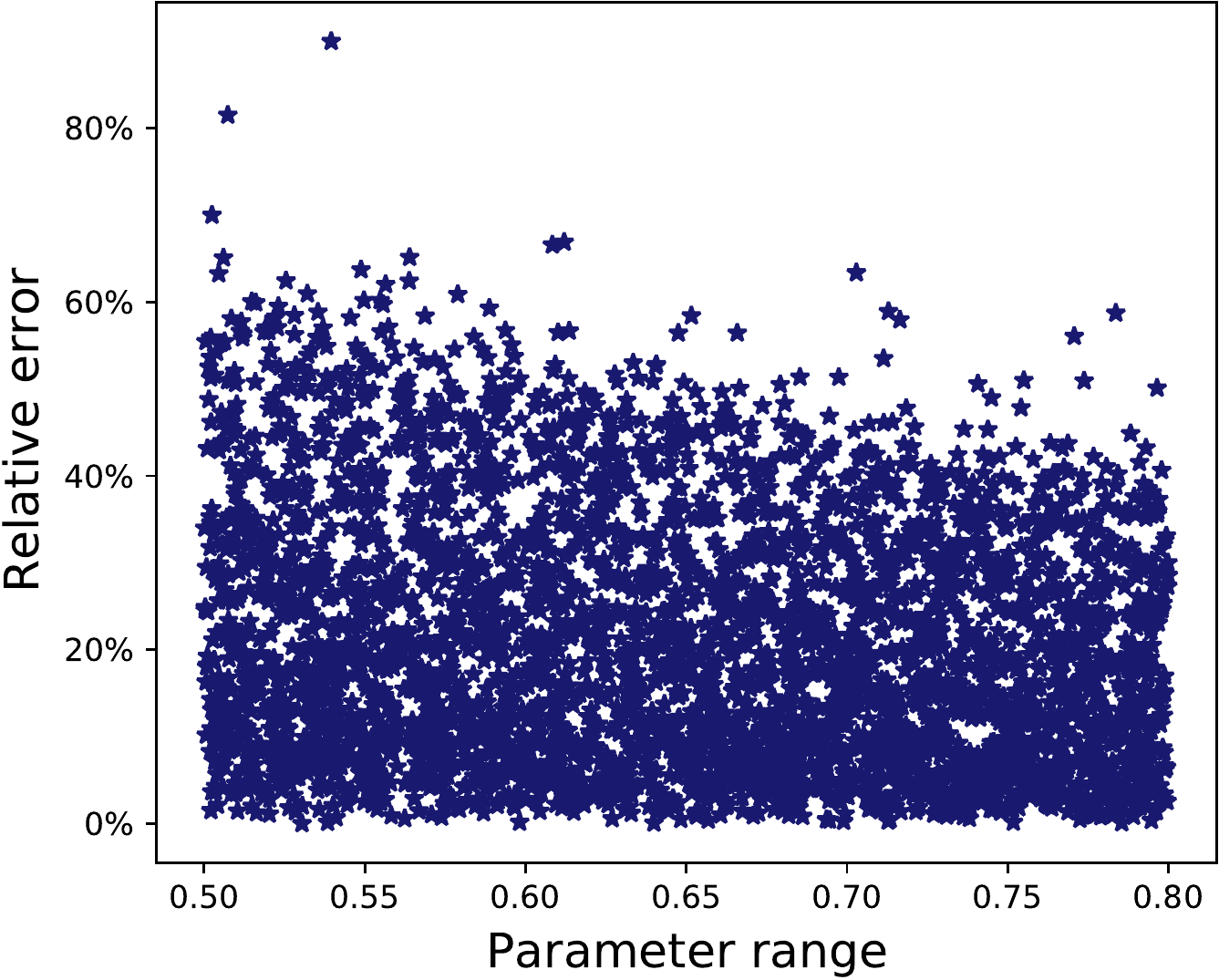}&\includegraphics[width=0.33\textwidth]{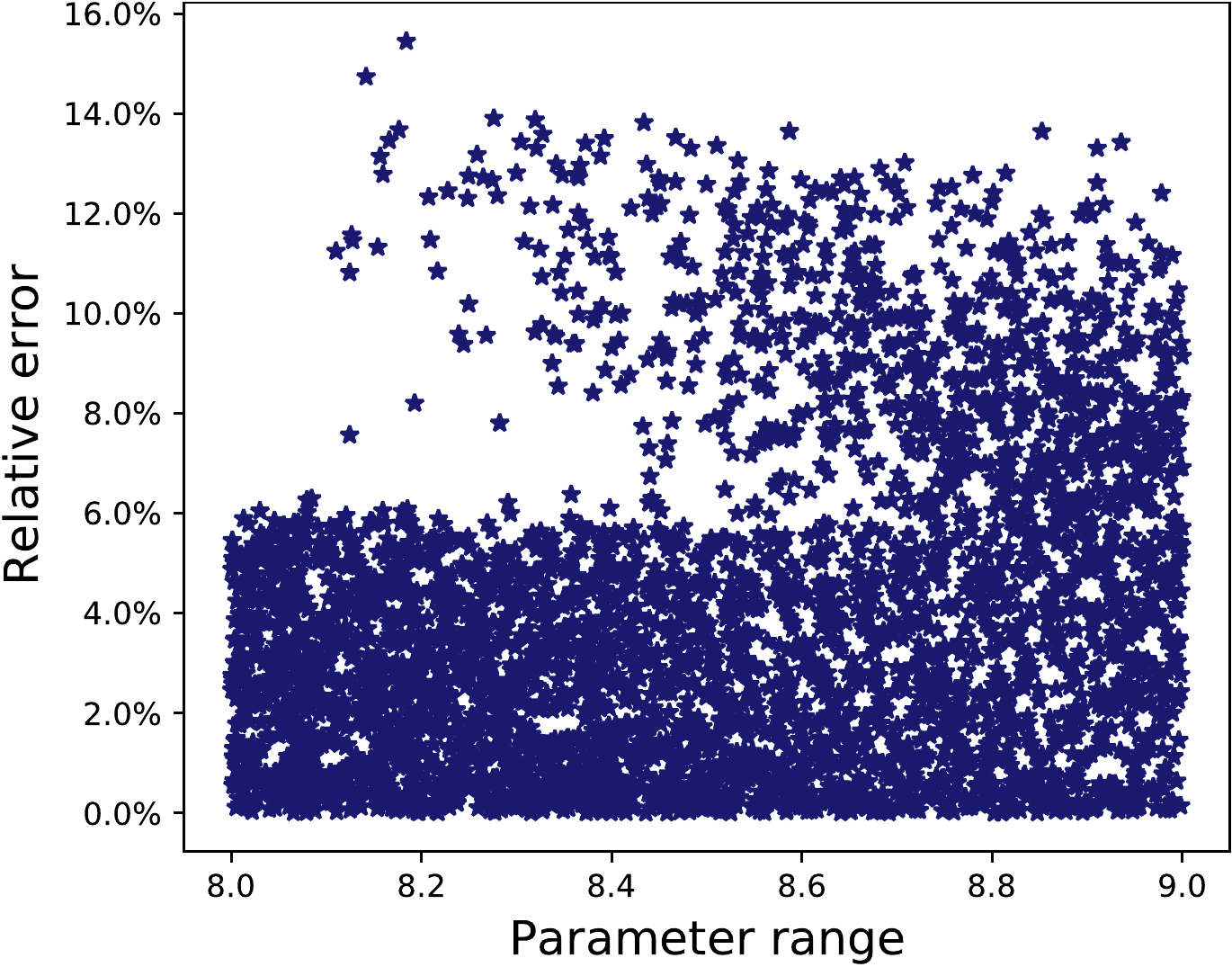}\\
			$\boldsymbol{\alpha_0}$ & $\boldsymbol{\alpha_1}$ &$\boldsymbol{\alpha_2}$\\
			\includegraphics[width=0.33\textwidth]{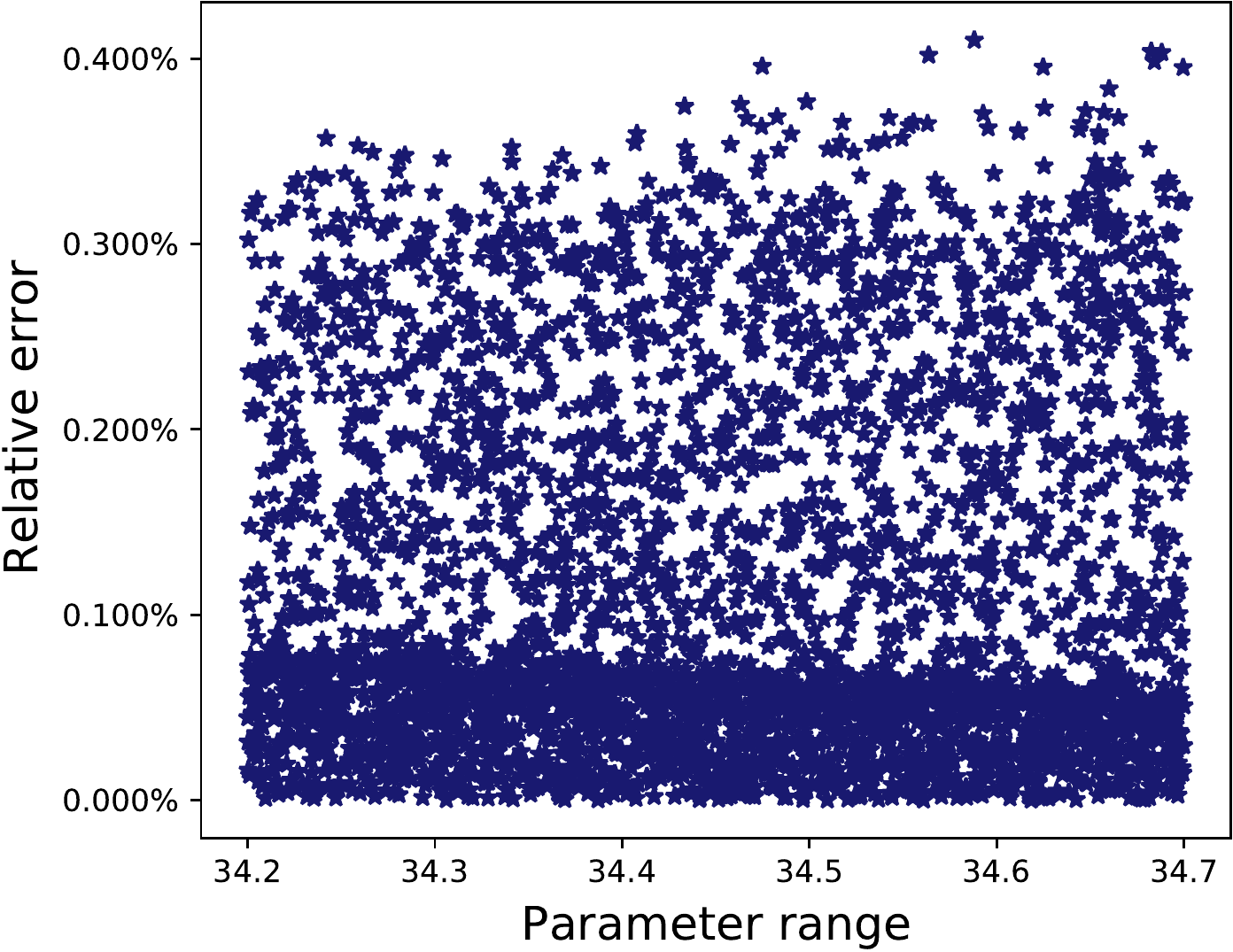} 
			&\includegraphics[width=0.33\textwidth]{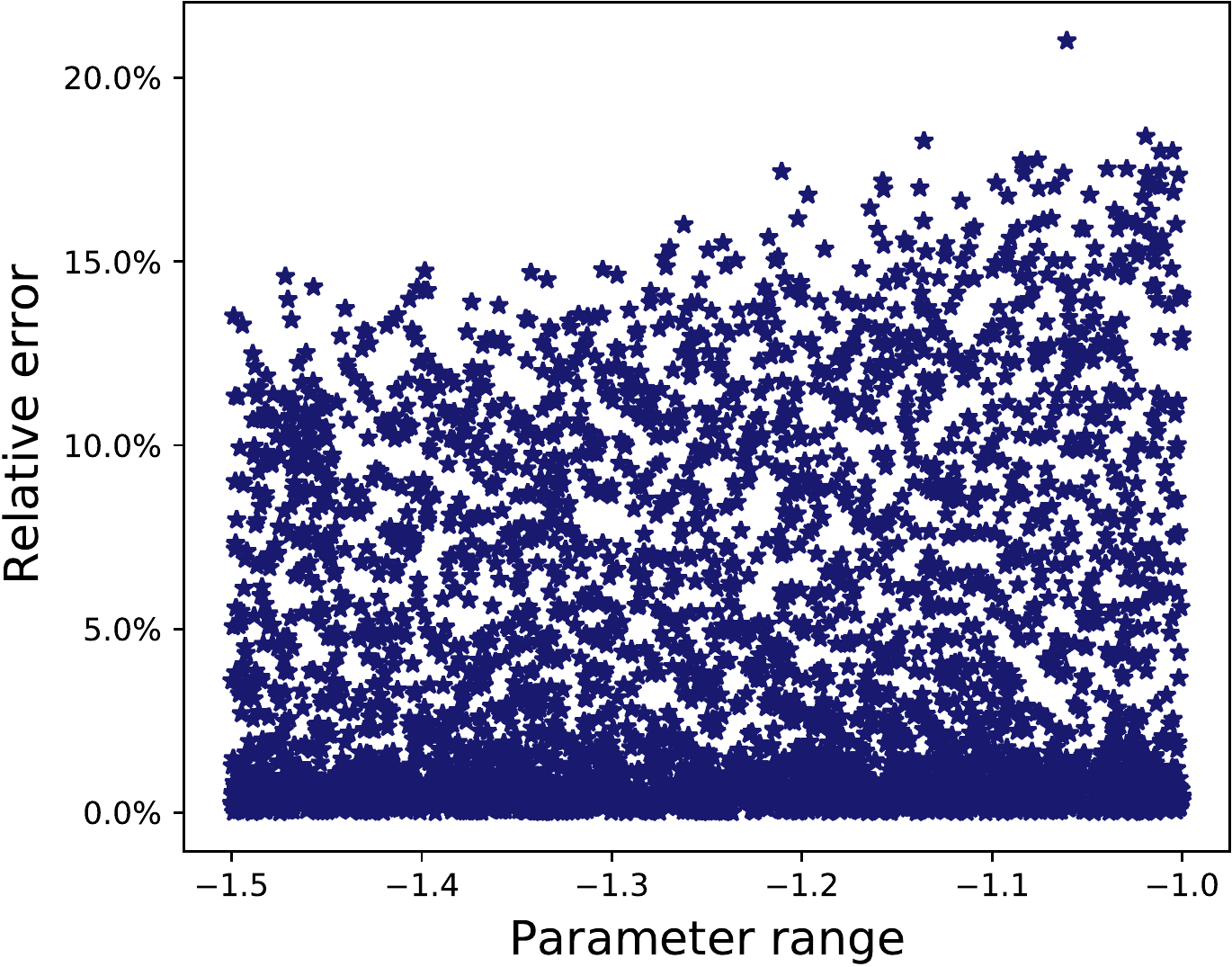}&\includegraphics[width=0.33\textwidth]{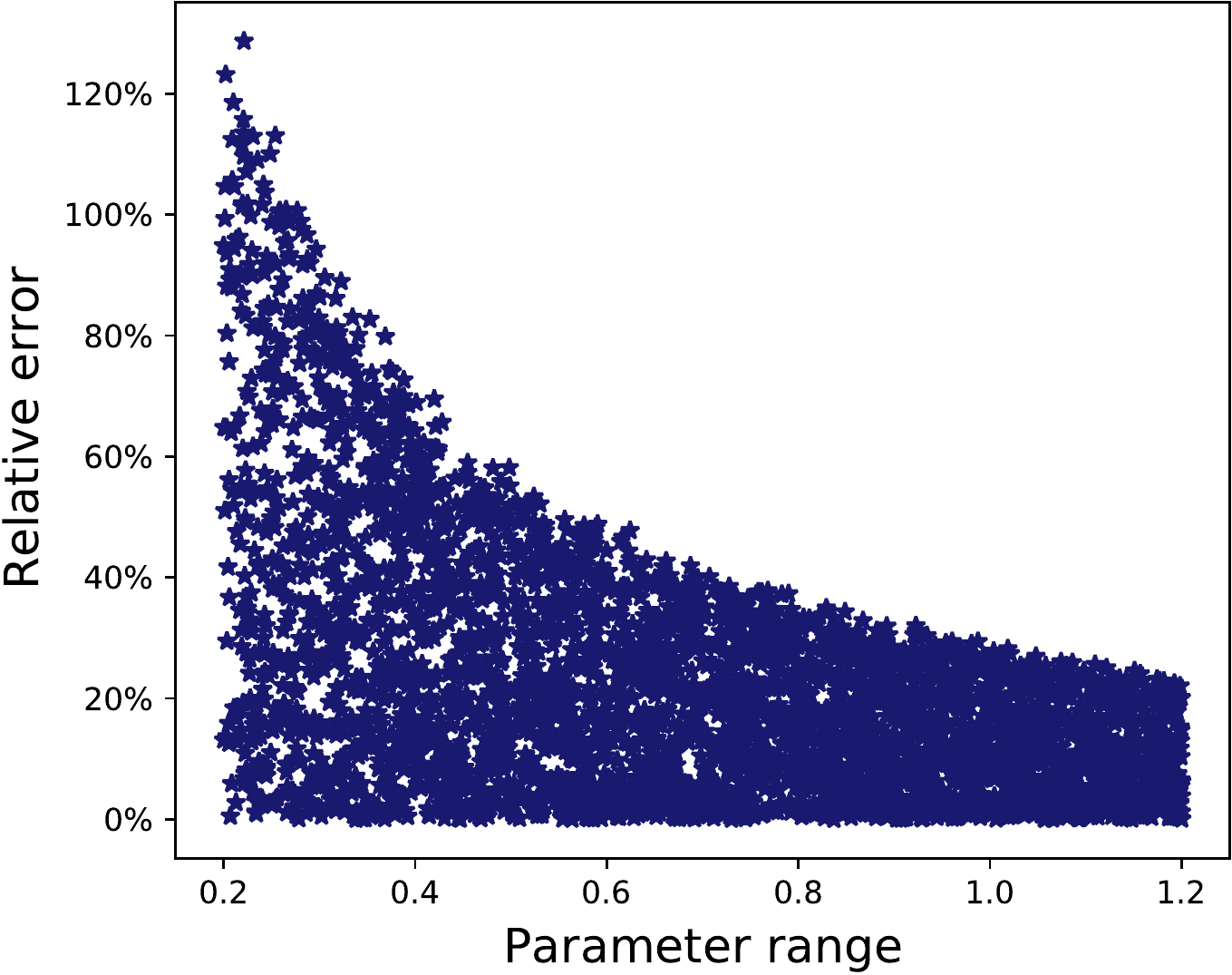}\\
			$\boldsymbol{\alpha_3}$ &  &\\
			\includegraphics[width=0.33\textwidth]{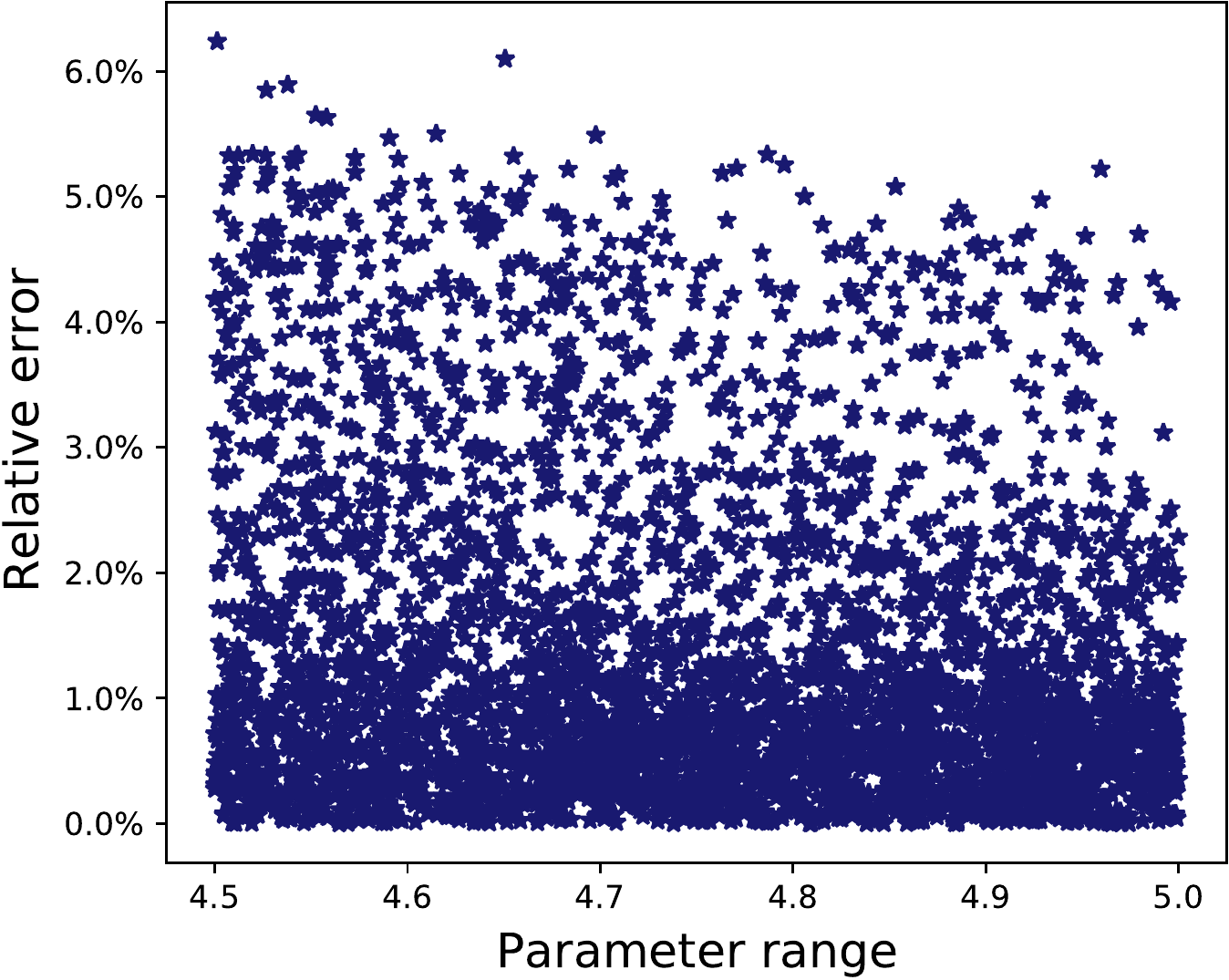} 
			& \vspace{-0.5cm}\begin{tabular}{rrr}
				& \textbf{Mean} & \textbf{Median}\\[3pt]
				$\boldsymbol{a}$: &$32.4\%$ & $24.6\%$\\
				$\boldsymbol{b}$: &$20.7\%$ & $17.9\%$\\
				$\boldsymbol{k}$: &$3.94\%$ & $3.24\%$\\
				$\boldsymbol{\alpha_0}$: &$0.12\%$ & $ 0.07 \%$\\
				$\boldsymbol{\alpha_1}$: &$ 4.29\%$ & $1.91\%$\\
				$\boldsymbol{\alpha_2}$: &$22.3\%$ & $17.2\%$\\
				$\boldsymbol{\alpha_3}$: &$1.34\%$ & $0.86\%$\\
			\end{tabular} & 
		\end{tabular}
	}
	\caption{Relative error for the model parameters calibrated with the grid-based learning approach.\label{adam_grid_dense}}
\end{figure}

\begin{figure}[!htbp]
	\setlength{\tabcolsep}{10pt}
	\resizebox{1\textwidth}{!}{
		\begin{tabular}{@{}>{\centering}m{0.5\textwidth}>{\centering\arraybackslash}m{0.5\textwidth}@{}}
			\textbf{Average relative error} & \textbf{Maximum relative error}\\
			\includegraphics[width=0.5\textwidth]{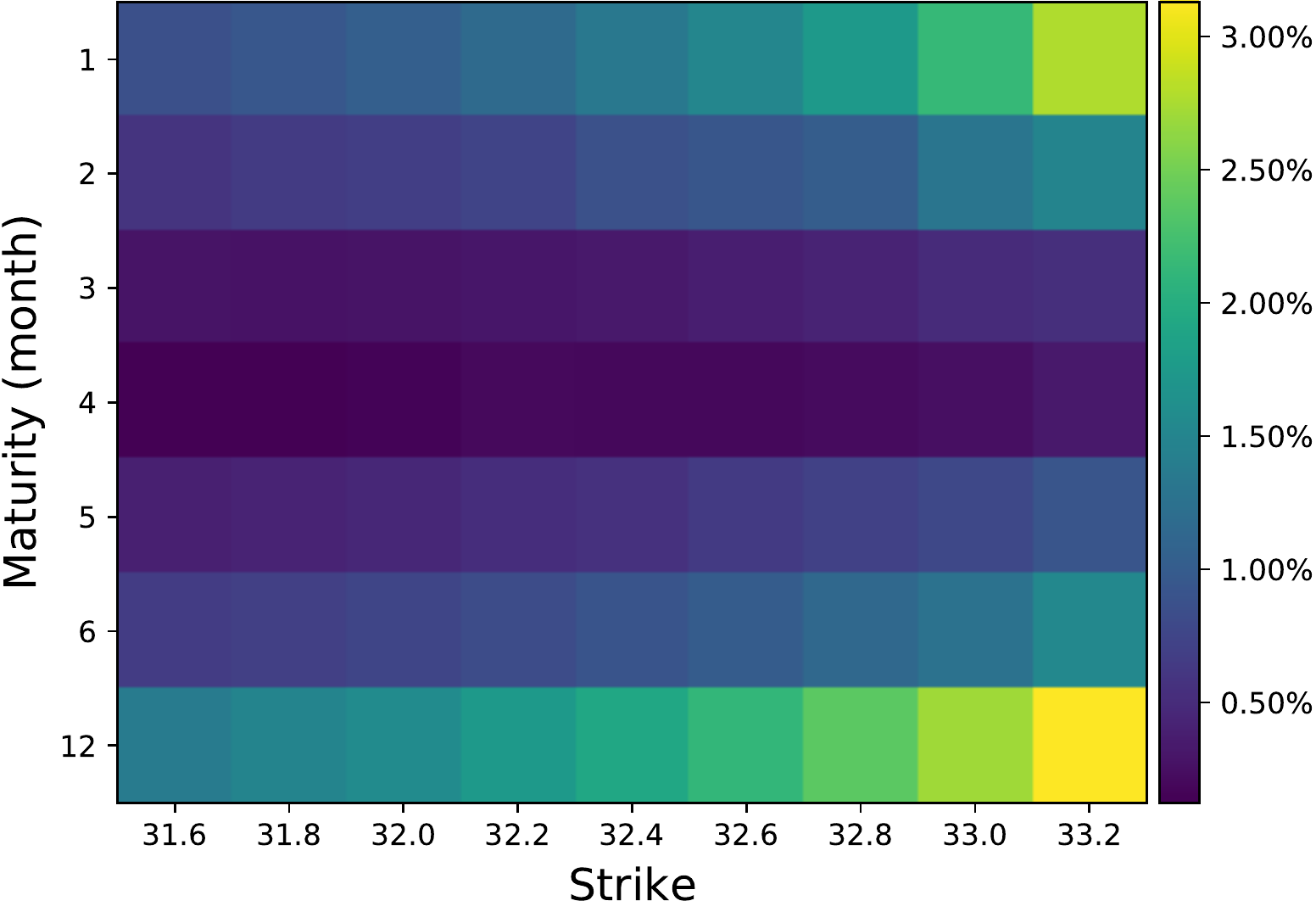} 
			&\includegraphics[width=0.5\textwidth]{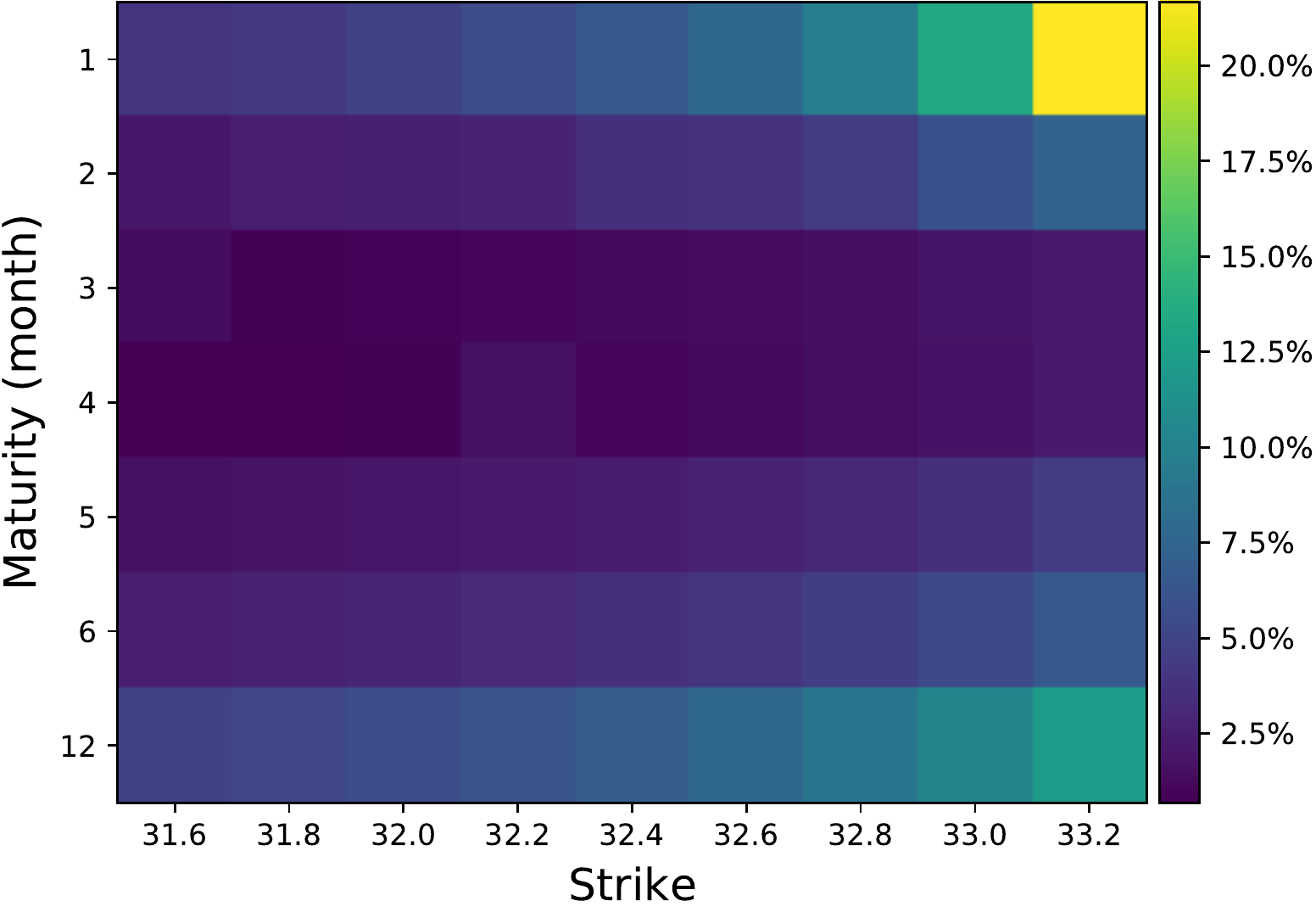}
		\end{tabular}
	}
	\caption{Average relative error and Maximum relative error after calibration with the grid-based learning approach.\label{step2_adam_grid_dense}}
\end{figure}

\subsection{Pointwise learning}
For $\Lambda\subset \R^2$ and $\Theta\subset \R^7$, we define a four-layer neural network $\mathcal{N}: \Lambda \times \Theta \to \R_{+}$ with nodes size $\mathbf{n}= (30, 30, 30, 1)$ and the ELU activation function. The number of parameters to calibrate is $M = 2191$. We consider a training set of size $N_{train}= 60000$ and a test set of size $N_{test} = 6000$. These are larger than the sets considered in the grid-based approach to reflect that now each sample corresponds to a single option value. We take $\theta = (a,b,k,\alpha_0, \alpha_1, \alpha_2, \alpha_3)\in\Theta$ with $\Theta$ as above, while $\lambda = (\tau, K)\in\Lambda^{point}= \Lambda^{point}_{\tau}\times \Lambda^{point}_K= [1/12, 1]\times[31.6, 33.2]$. Training and test sets have been generated as described in the grid-based approach.

We notice that $\Lambda^{grid}\subset \Lambda^{point}$: this allows us to compare the pointwise learning approach with the grid-based learning approach. To do that, we cluster the prices in the following way. We introduce a new grid $\hat{\Lambda} := \hat{\Lambda}_{\tau} \times \hat{\Lambda}_K$, where
\begin{align*}
	&\hat{\Lambda}_{\tau} :=  \{1/12, 1/12+1/24, 2/12+1/24, 3/12+1/24,4/12+1/24, 5/12+1/24, 6/12+1/24, 1\}\\
	& \hat{\Lambda}_K := \{31.6, 31.7, 31.9, 32.1, 32.3, 32.5, 32.7, 32.9, 33.1, 33.2\}
\end{align*}
and we label with $K=31.6$ the prices corresponding to a strike value in the interval $[31.6, 31.7]$, while we label with $K=31.8$ the prices corresponding to a strike in the interval $[31.7, 31.9]$, and so on, for each of the strike prices in $\Lambda^{grid}_K$. We do the same for the time to maturity $\tau$ and obtain a grid of clustered values corresponding to $\Lambda^{grid}$. Since each of the $N_{test} = 6000$ samples corresponds to a different parameter vector, we can not perform calibration starting from the test set. Instead, we use the test set from the grid-based approach, after shape adjustment. In this manner, each calibration is performed as described in Section \ref{pointwisesection}, starting from $N_{cal}=63$ different prices, corresponding to the same $\theta \in \Theta$ to be estimated.

In Figure \ref{step1_pointwise} we report the average relative error and the maximum relative error in the approximation step after clustering, as described above. For both we notice a better accuracy compared to the grid-based approach. As before, the worst accuracy is for contracts with the highest strike price. However, when it comes to calibration, the pointwise approach turns out to be worse. Indeed, the relative errors for $\hat{\theta}$ in Figure \ref{adam_pointwise} are on average worse than in the previous experiments. It can be noticed that for $a$, $b$ and $k$, the error is concentrated around $45\%$, $30\%$ and $5\%$ respectively, while before it was much more spread out starting from $0\%$. The accuracy for $a$ is not very good, almost reaching $50\%$, and the accuracy for $b$ is worse than before. On the other hand, the accuracy for $\alpha_2$ has substantially improved. In Figure \ref{step2_adam_pointwise} we can see that the relative error for $\hat{\Pi}$ after calibration is also worse than for the grid-based method.

\begin{figure}[!htbp]
	\setlength{\tabcolsep}{10pt}
	\resizebox{1\textwidth}{!}{
		\begin{tabular}{@{}>{\centering}m{0.5\textwidth}>{\centering\arraybackslash}m{0.5\textwidth}@{}}
			\multicolumn{2}{c}{\large \textbf{Training set}}\\
			\textbf{Average relative error} & \textbf{Maximum relative error}\\
			\includegraphics[width=0.5\textwidth]{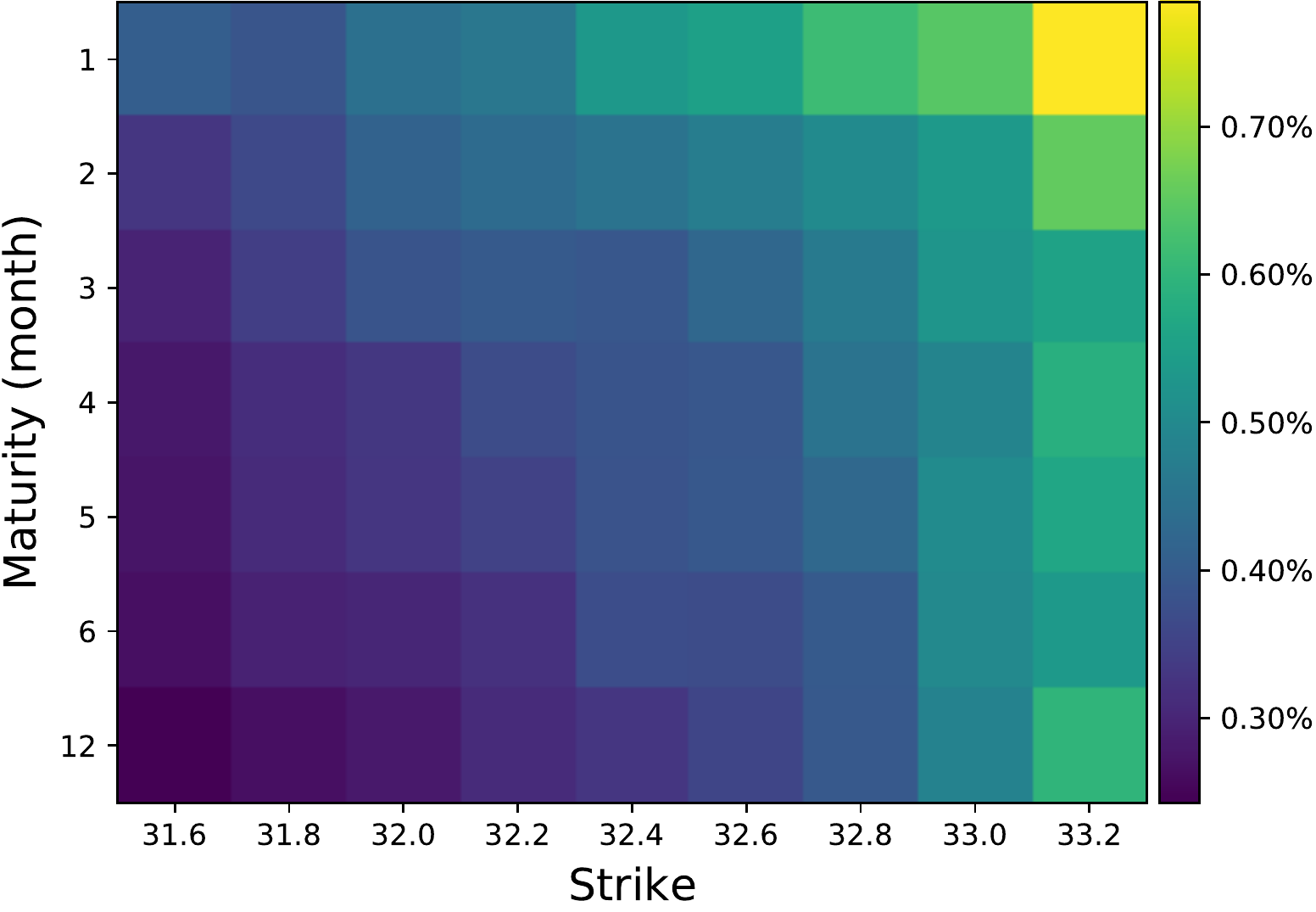} 
			&\includegraphics[width=0.5\textwidth]{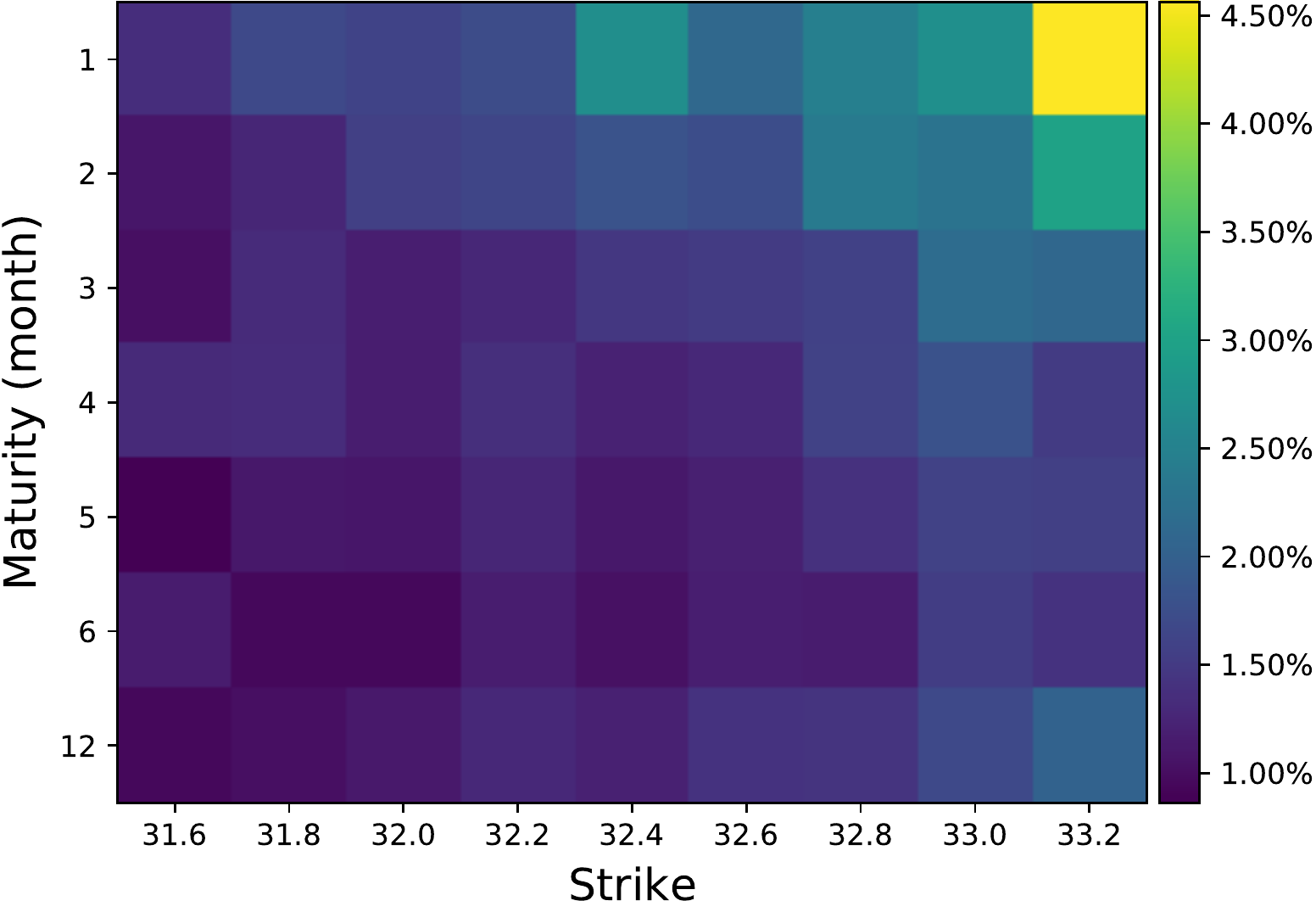}\\
			\multicolumn{2}{c}{\large \textbf{Test set}}\\
			\textbf{Average relative error} & \textbf{Maximum relative error}\\
			\includegraphics[width=0.5\textwidth]{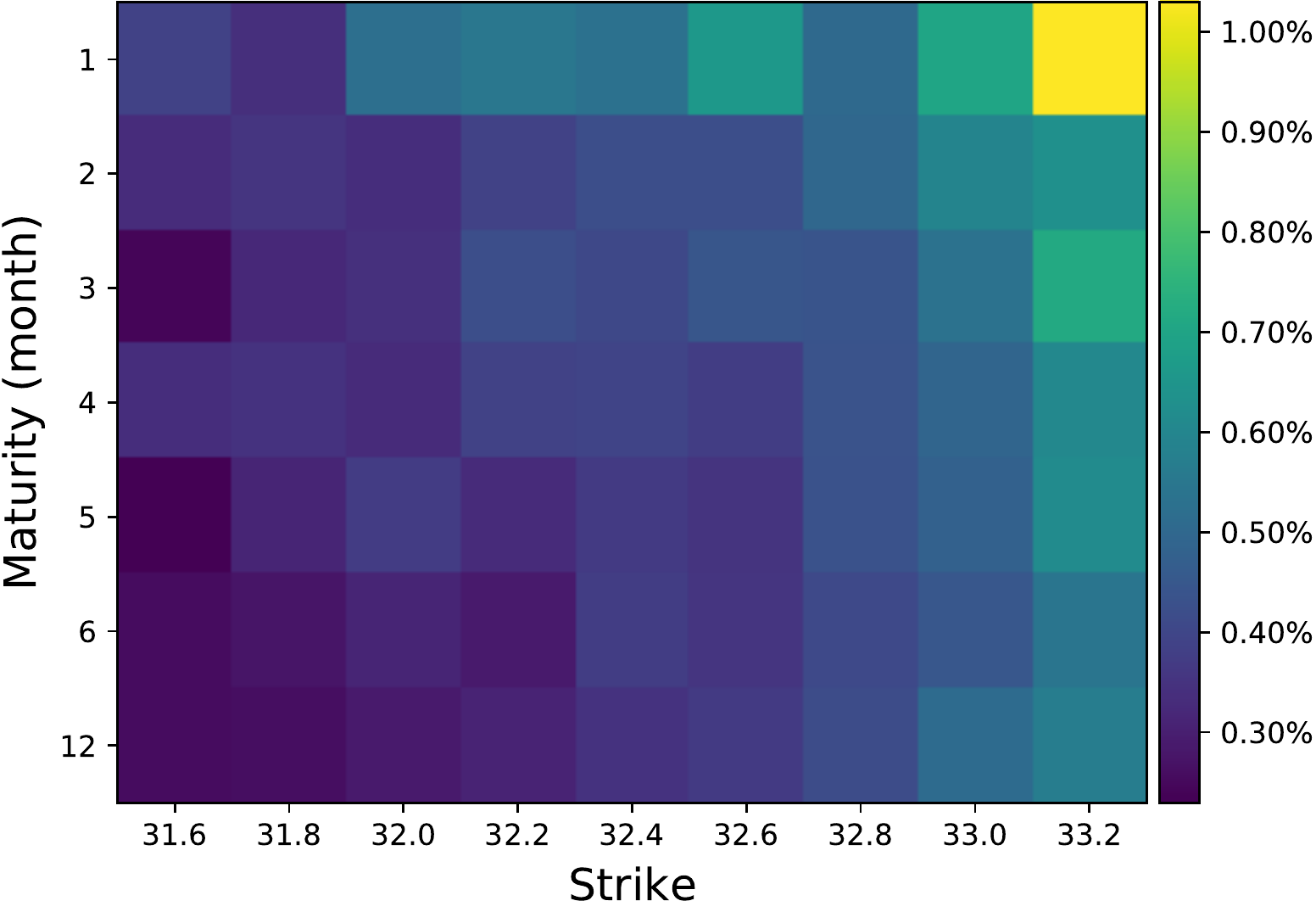} 
			&\includegraphics[width=0.5\textwidth]{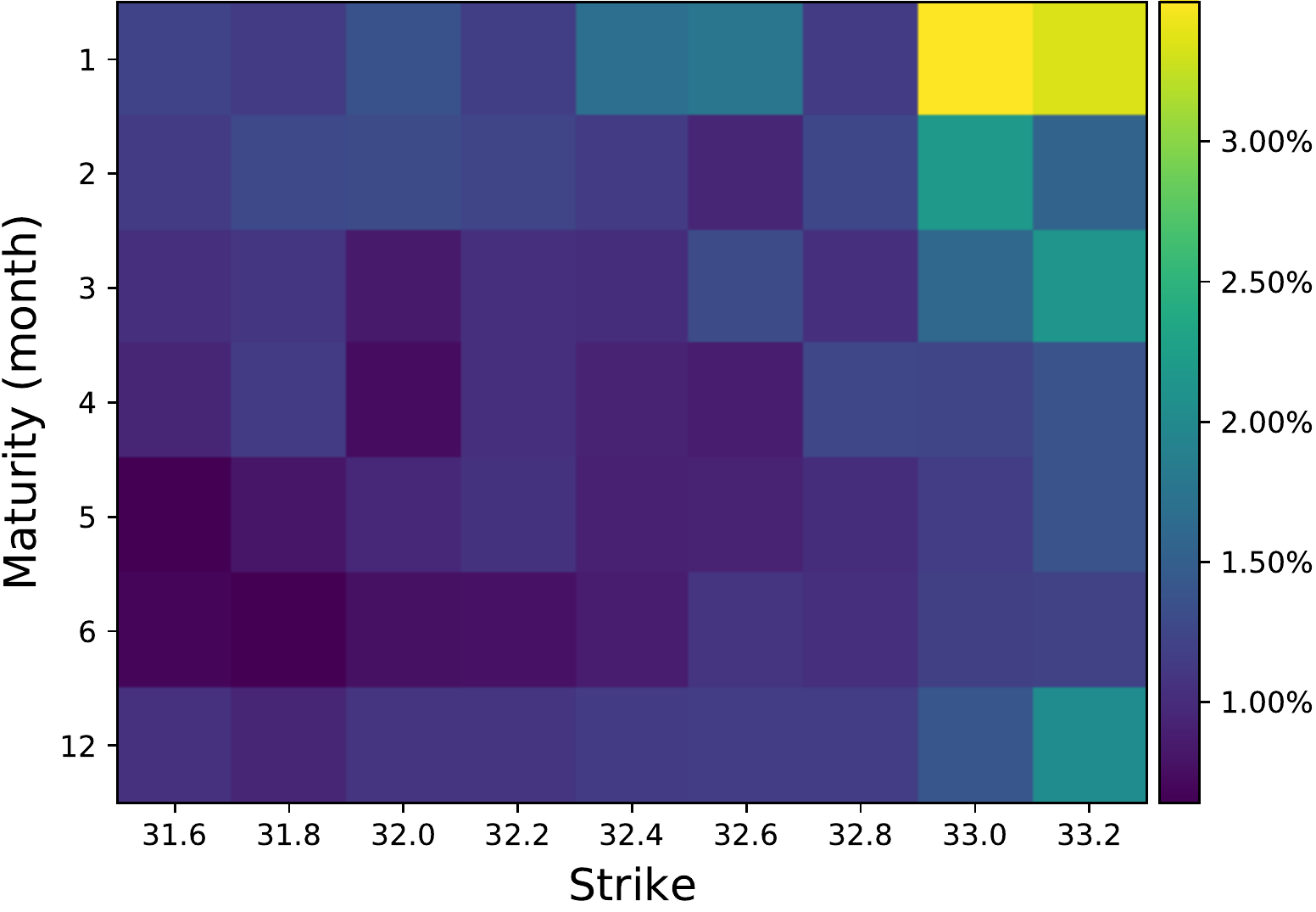}
		\end{tabular}
	}
	\caption{Average relative error and Maximum relative error in the approximation step with the pointwise learning approach.\label{step1_pointwise}}
\end{figure}

\begin{figure}[!htbp]
	\setlength{\tabcolsep}{5pt}
	\resizebox{1\textwidth}{!}{
		\begin{tabular}{@{}>{\centering}m{0.33\textwidth}>{\centering}m{0.33\textwidth}>{\centering\arraybackslash}m{0.33\textwidth}@{}}
			$\boldsymbol{a}$ & $\boldsymbol{b}$&$\boldsymbol{k}$\\
			\includegraphics[width=0.33\textwidth]{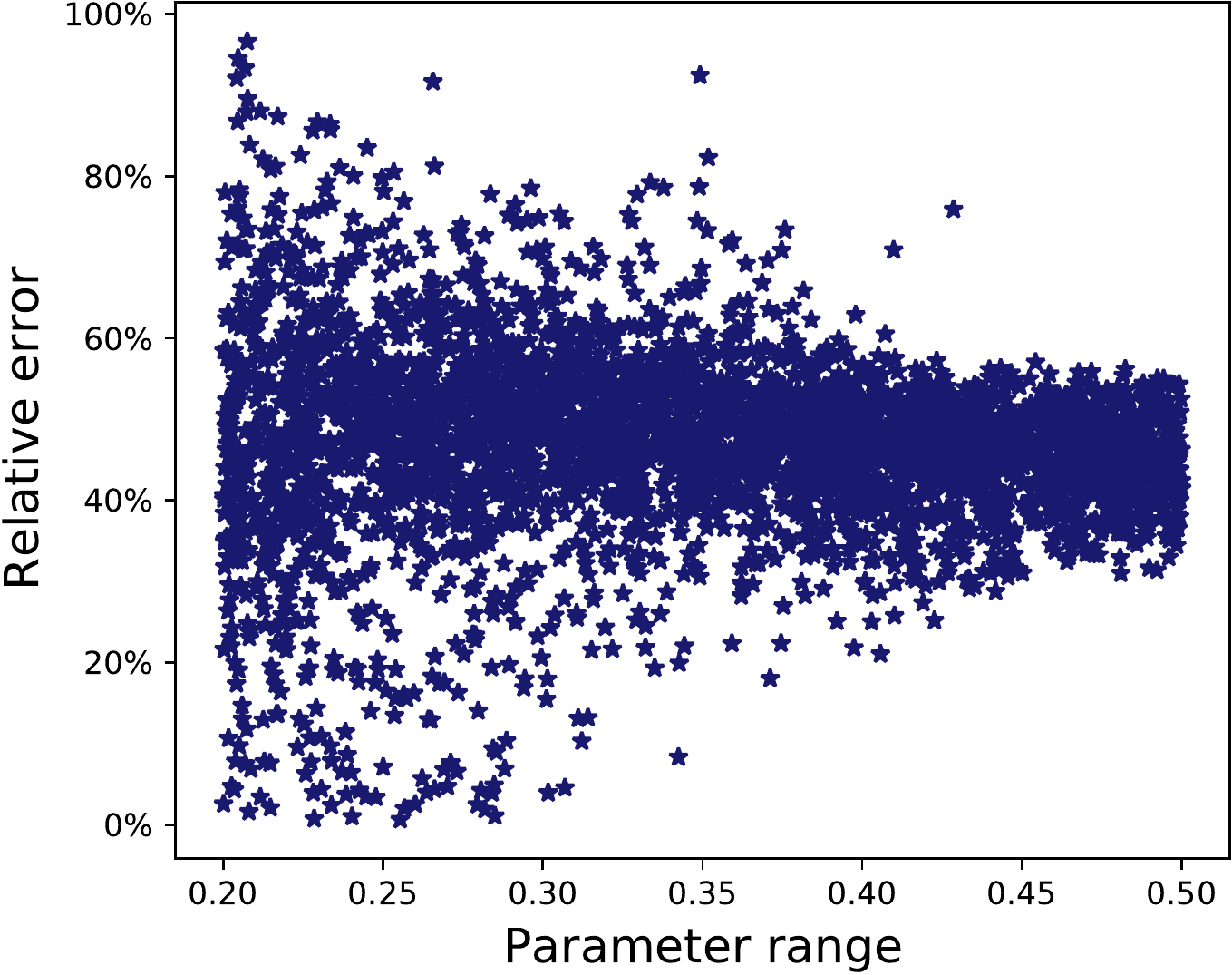} 
			&\includegraphics[width=0.33\textwidth]{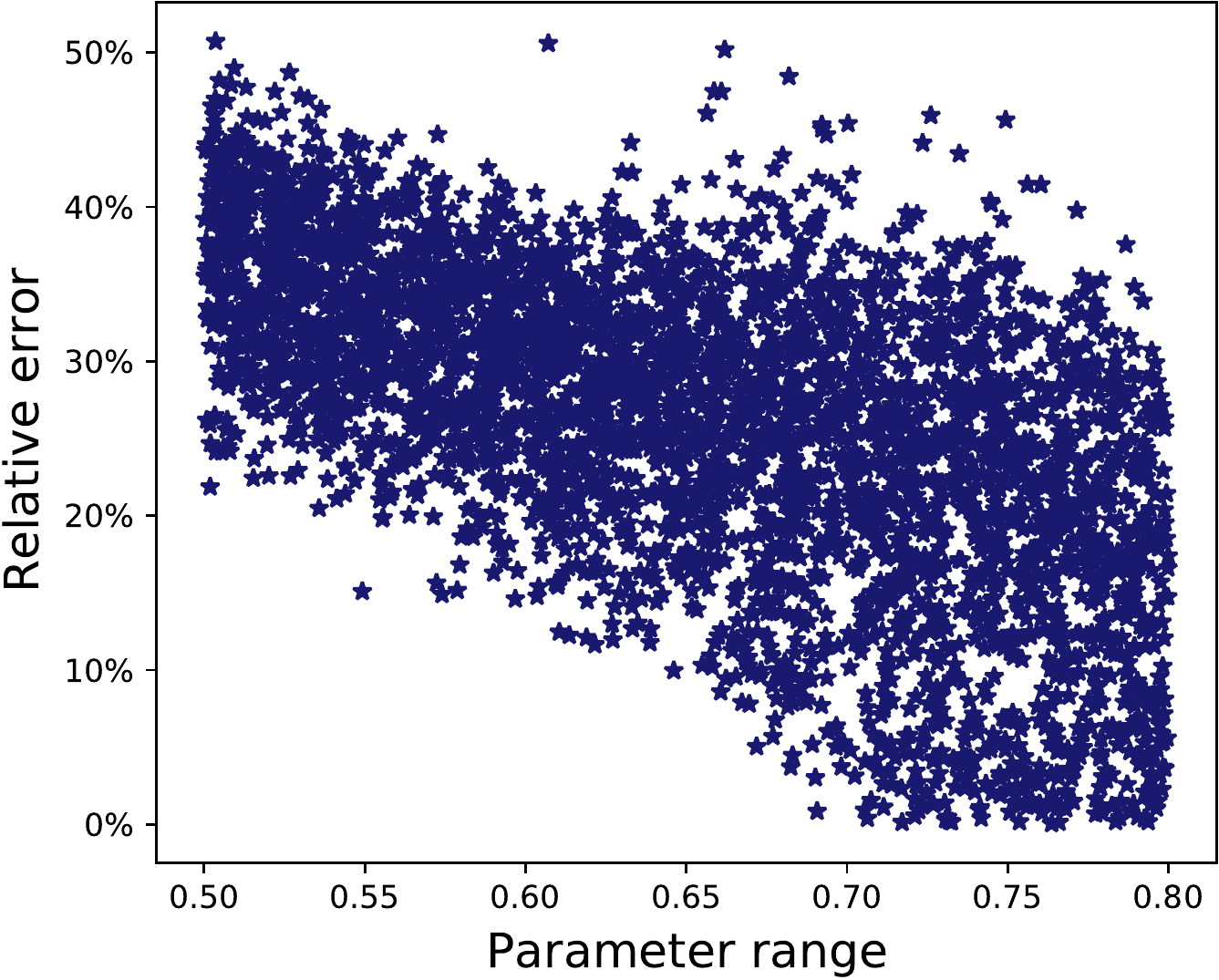}&\includegraphics[width=0.33\textwidth]{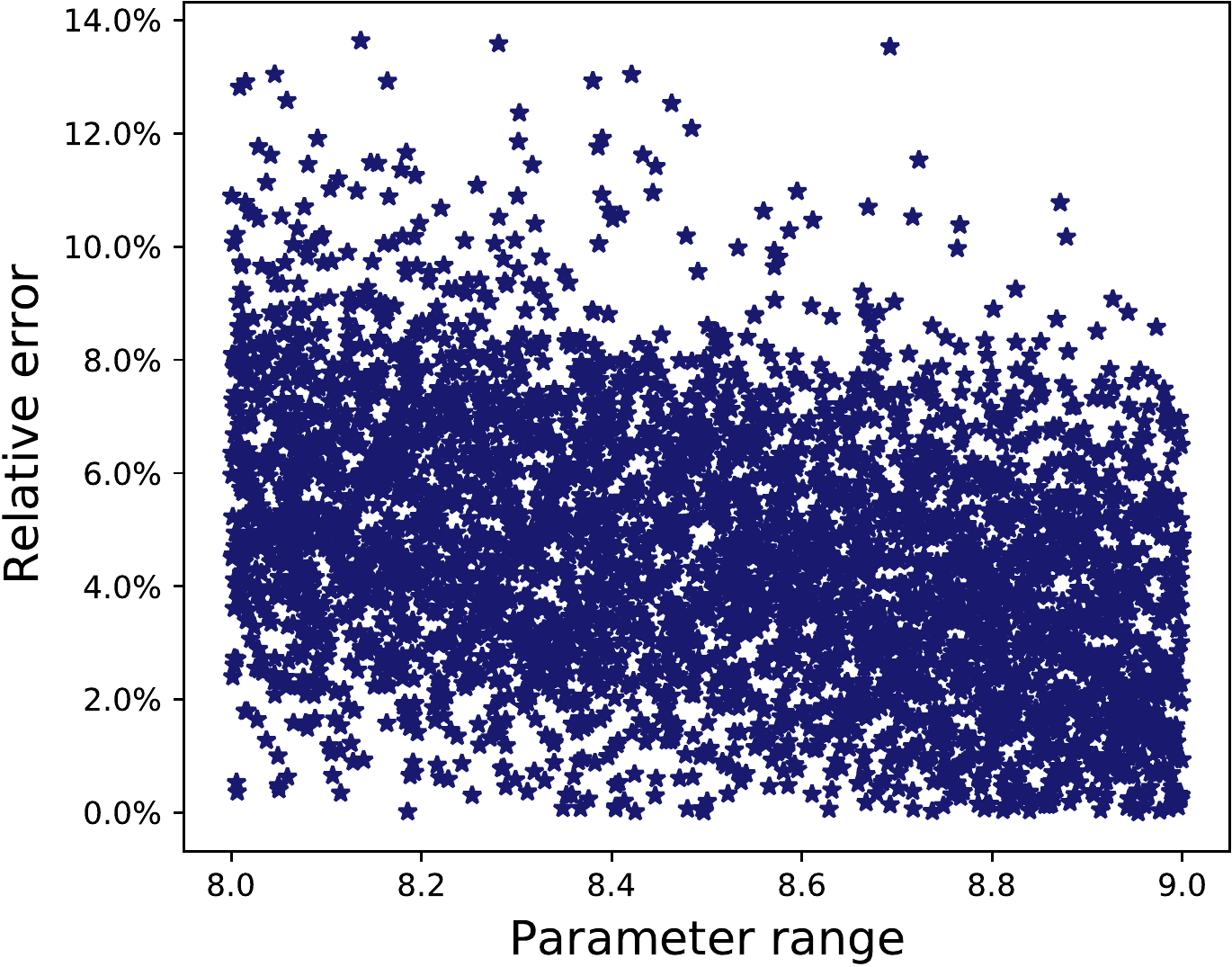}\\
			$\boldsymbol{\alpha_0}$ & $\boldsymbol{\alpha_1}$ &$\boldsymbol{\alpha_2}$\\
			\includegraphics[width=0.33\textwidth]{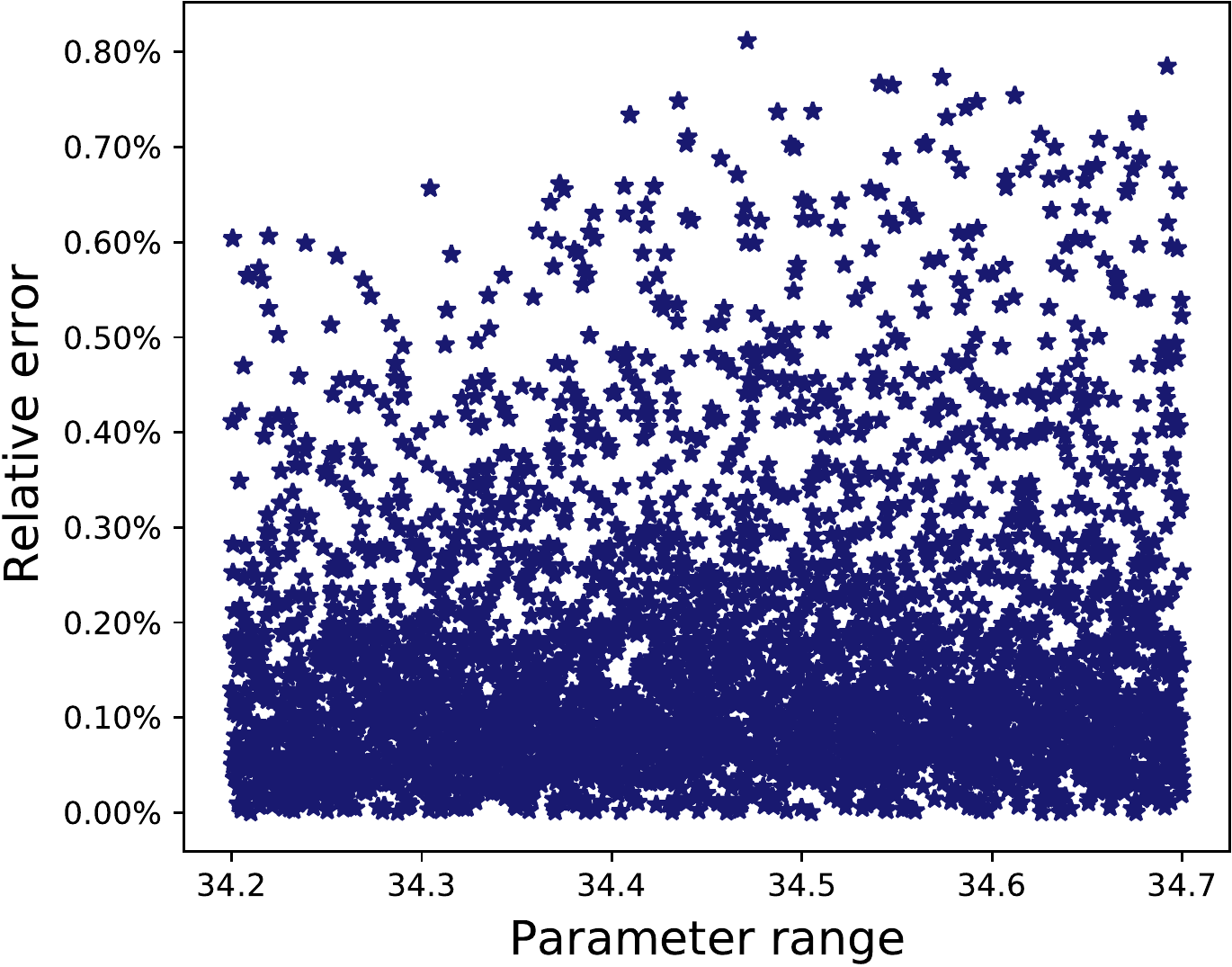} 
			&\includegraphics[width=0.33\textwidth]{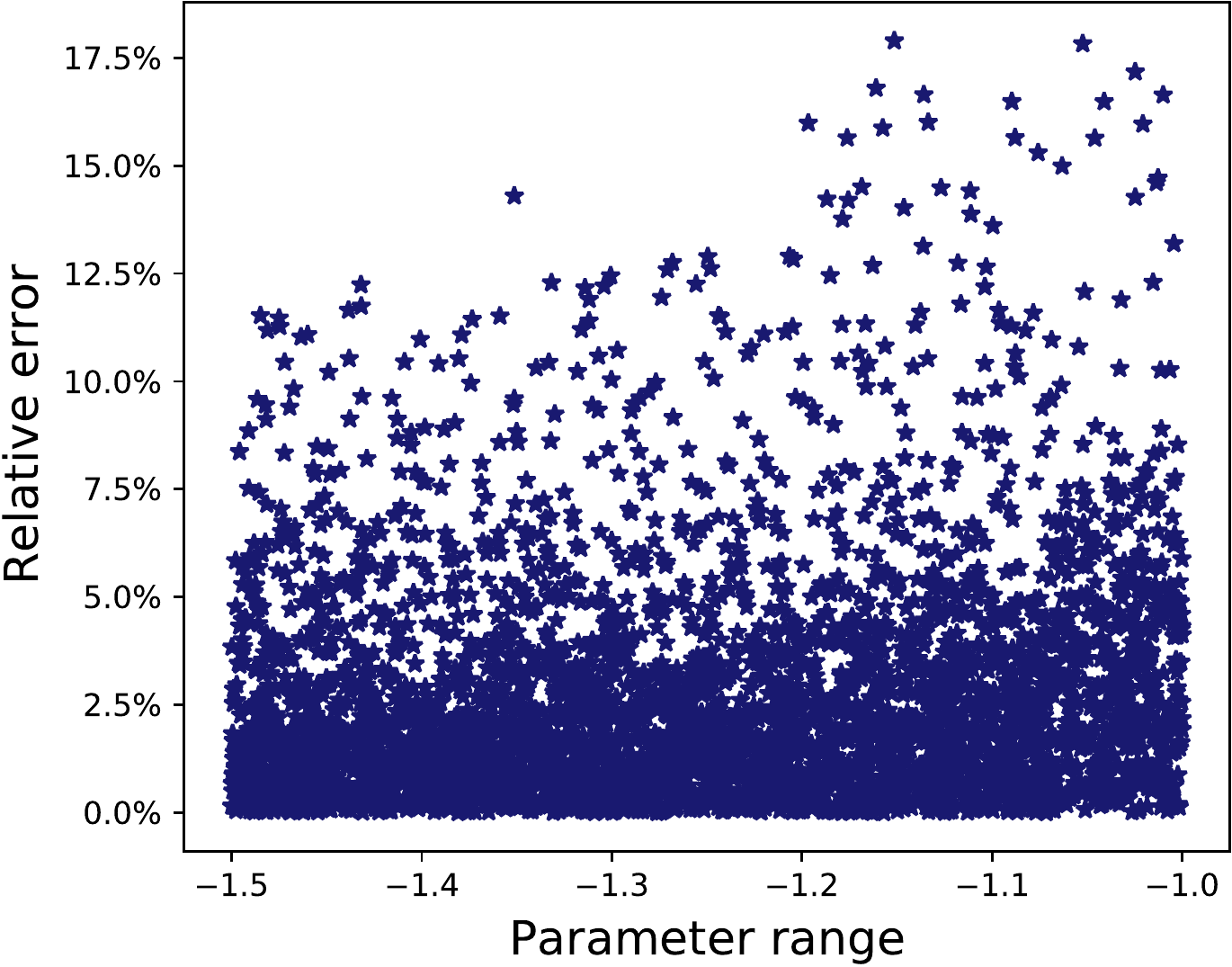}&\includegraphics[width=0.33\textwidth]{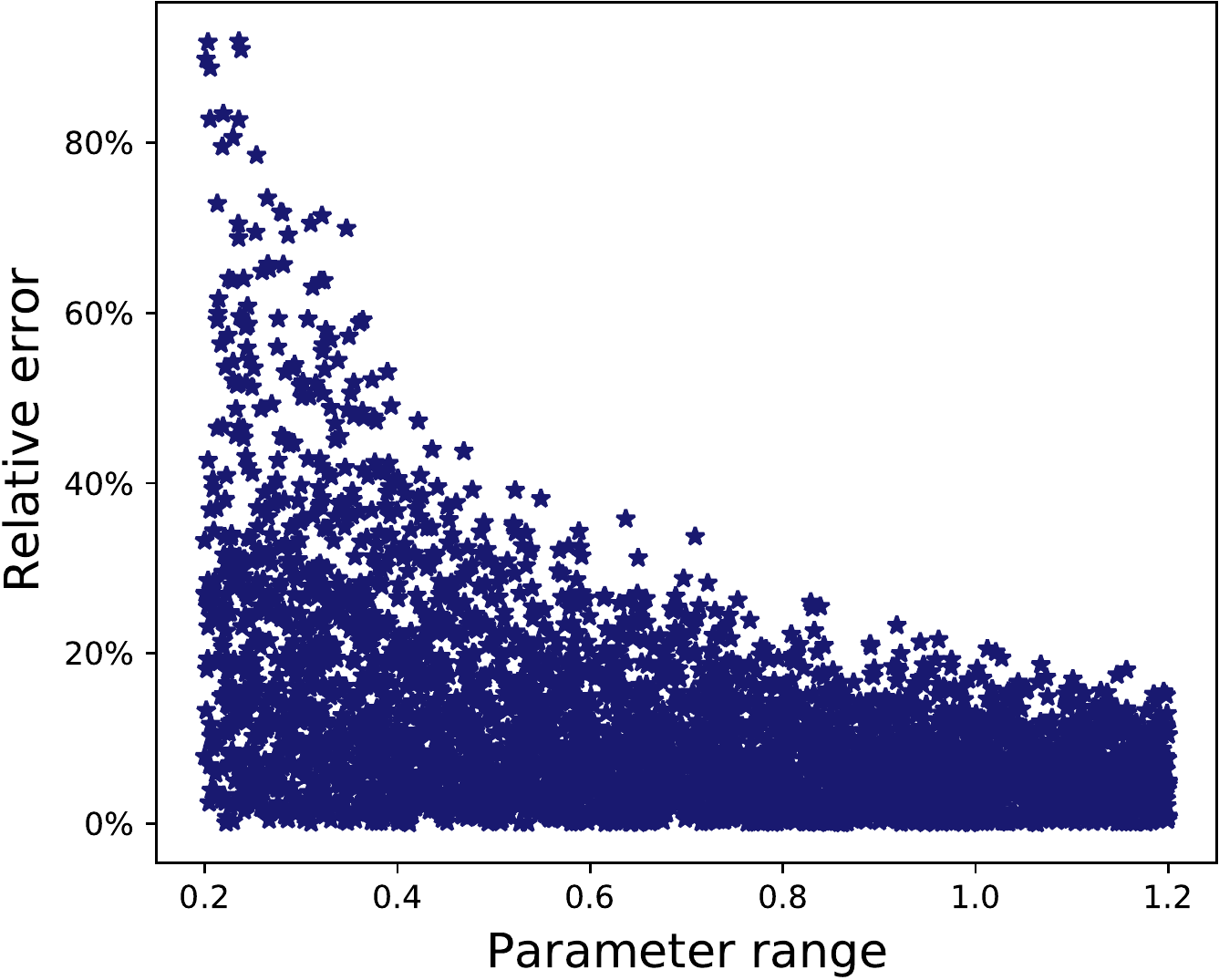}\\
			$\boldsymbol{\alpha_3}$ &  &\\
			\includegraphics[width=0.33\textwidth]{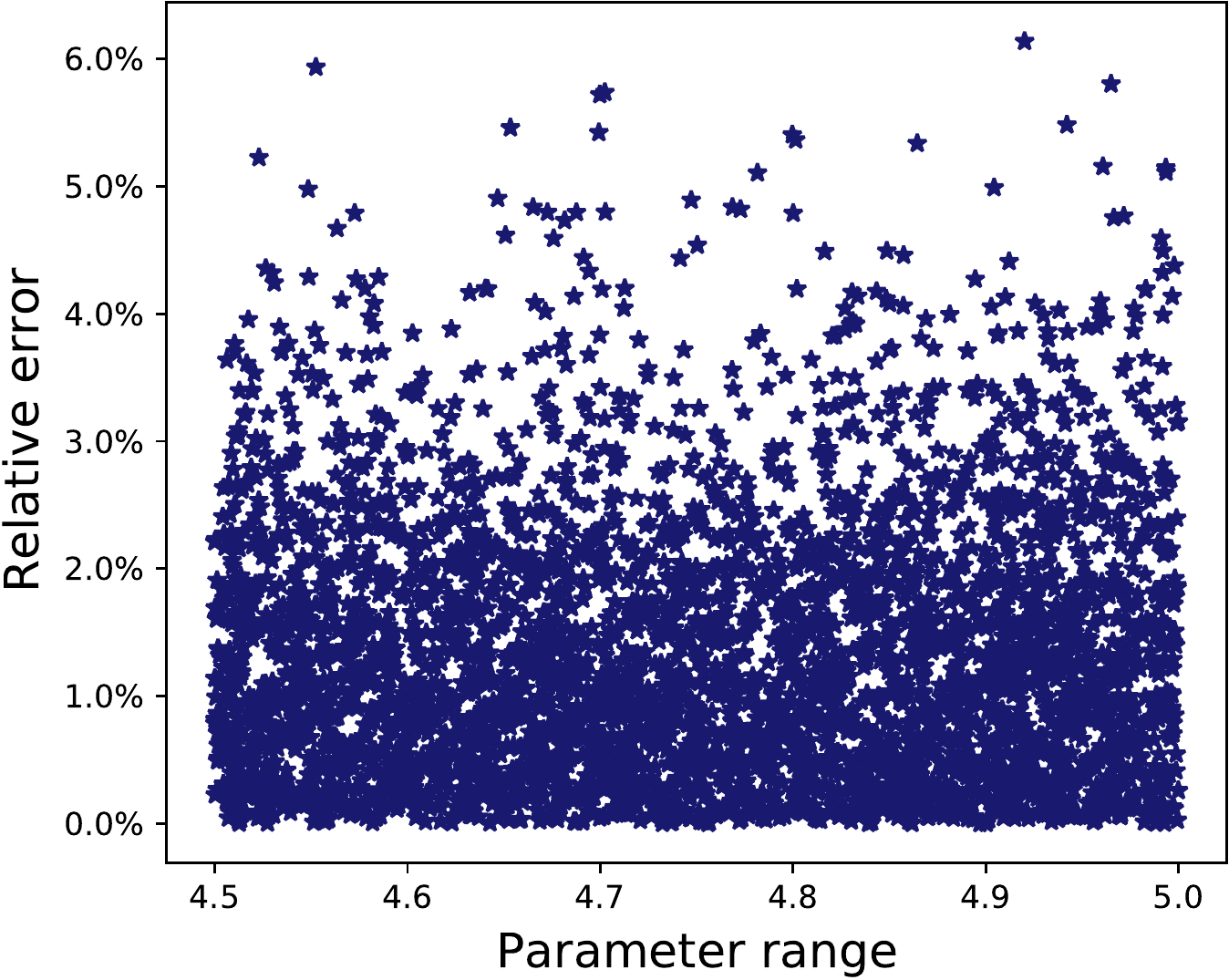} 
			& \vspace{-0.5cm}\begin{tabular}{rrr}
				& \textbf{Mean} & \textbf{Median}\\[3pt]
				$\boldsymbol{a}$: &$46.9\%$ & $47.3\%$\\
				$\boldsymbol{b}$: &$26.3\%$ & $27.6\%$\\
				$\boldsymbol{k}$: &$ 4.59 \%$ & $4.50\%$\\
				$\boldsymbol{\alpha_0}$: &$0.17\%$ & $ 0.12 \%$\\
				$\boldsymbol{\alpha_1}$: &$2.72\%$ & $1.89\%$\\
				$\boldsymbol{\alpha_2}$: &$11.2 \%$ & $7.67\%$\\
				$\boldsymbol{\alpha_3}$: &$1.33\%$ & $1.12\%$\\
			\end{tabular} & 
		\end{tabular}
	}
	\caption{Relative error for the model parameters calibrated with the pointwise learning approach.\label{adam_pointwise}}
\end{figure}

\begin{figure}[!htbp]
	\setlength{\tabcolsep}{10pt}
	\resizebox{1\textwidth}{!}{
		\begin{tabular}{@{}>{\centering}m{0.5\textwidth}>{\centering\arraybackslash}m{0.5\textwidth}@{}}
			\textbf{Average relative error} & \textbf{Maximum relative error}\\
			\includegraphics[width=0.5\textwidth]{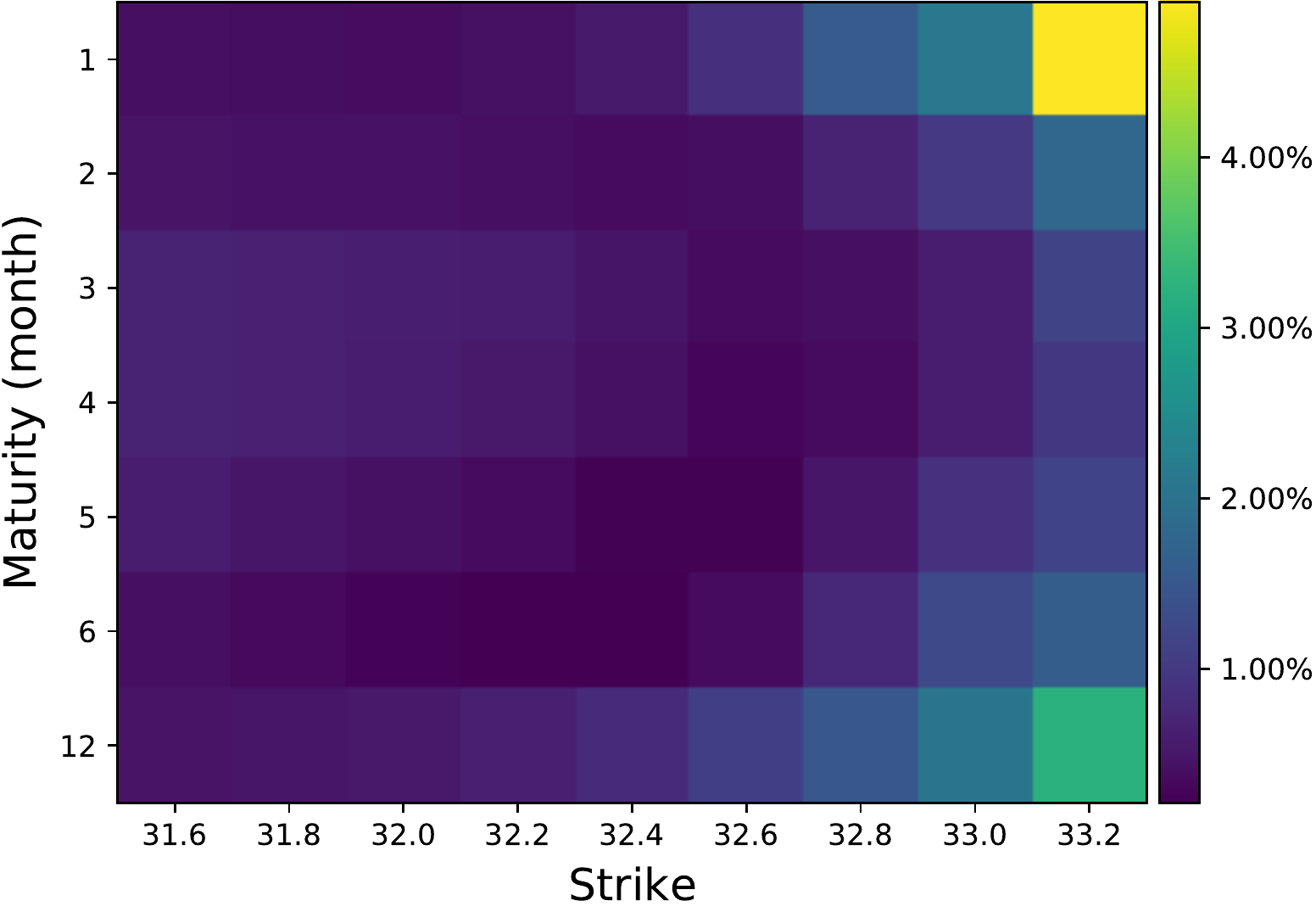} 
			&\includegraphics[width=0.5\textwidth]{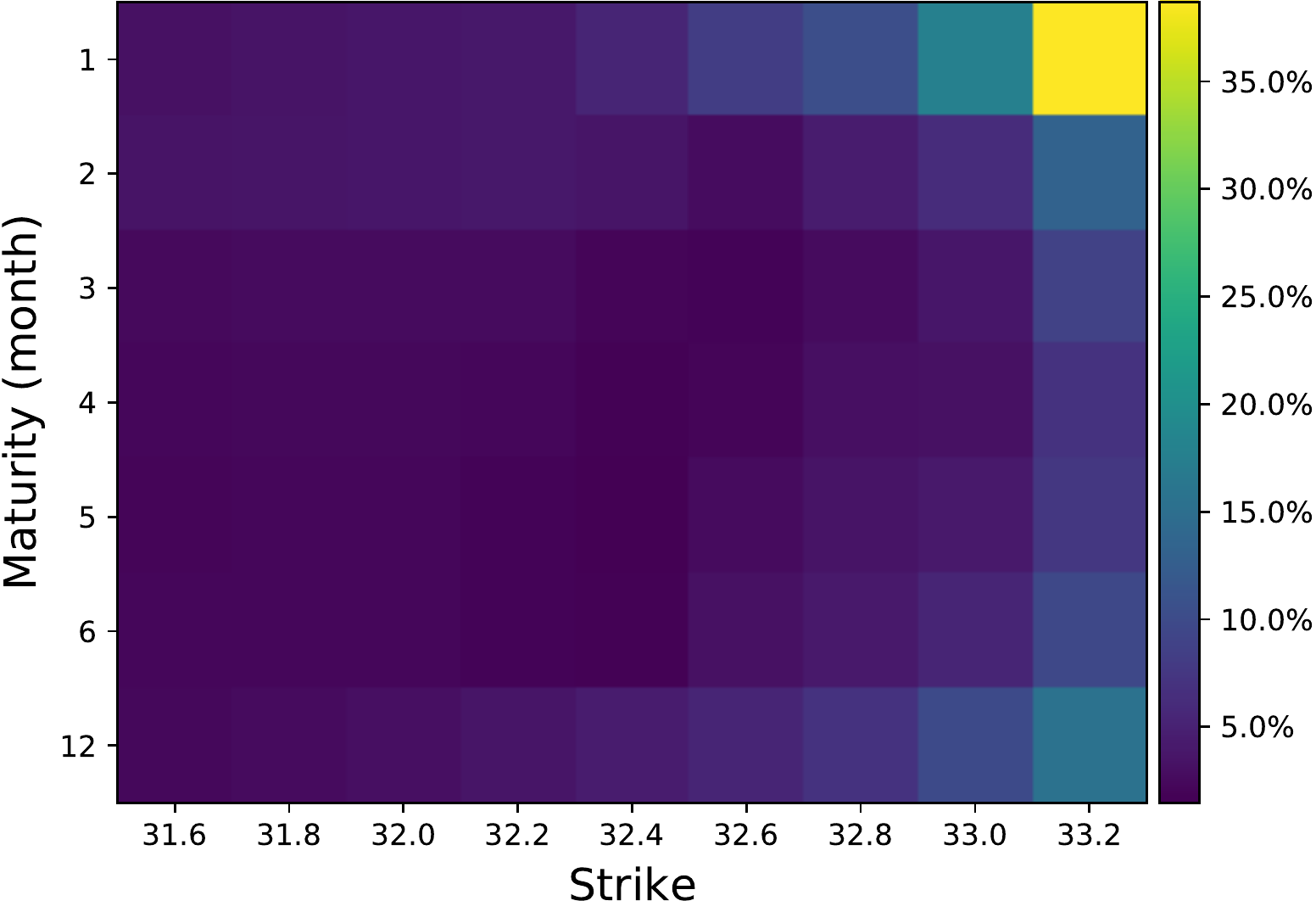}
		\end{tabular}
	}
	\caption{Average relative error and Maximum relative error after calibration with the pointwise learning approach.\label{step2_adam_pointwise}}
\end{figure}

\subsection{Bid-ask constraints}
In the setting of the grid-based learning approach, we test the bid-ask loss function defined in Section \ref{bidasksection}. More precisely, we take the previously trained neural network and use the bid-ask loss function for the calibration step.
Starting from the test set of the grid-based approach, we obtain bid and ask prices by computing $90\%$ and $110\%$, respectively, of the original prices. In Figure \ref{adam_BA} we report the relative error for the model parameters calibrated, which are a bit worse than the values found in the experiment with the same neural network but exact prices. However, considering that the information given to the network is much weaker due to the relatively wide bid-ask range, the results are surprisingly good and the overall error is of similar magnitude as in the rest of the experiments. 

We calculate the rate of mismatched prices, namely the percentage of prices which do not lie within the bid-ask constraints. In Figure \ref{step2_adam_BA}, on the left we have the plot of the mismatched price rate that we observe before calibration, namely the rate we obtain by plugging into the neural network the starting parameters that we need in order to initialize the optimization routine. On the right, the plot shows the percentage of mismatched prices after calibration. The final rate is $0\%$ for almost all the contracts, except for $(\tau, K) = (1, 33.2)$ and for $(\tau, K) = (1/12, 33.2)$. A random sub-sample of $100$ prices is reported in Figure \ref{bidask_example} where, in the top panel (corresponding to $(\tau, K) = (1, 33.2)$), we can notice several prices being outside the constraints (marked with the symbol $\scriptstyle\bigstar$), in the mid panel (corresponding to $(\tau, K) = (1/12, 33.2)$) only some of the prices are outside the constraints, while in the bottom panel (corresponding to $(\tau, K) = (4/12, 32.2)$) all the prices are within the constraints (marked with the symbol $\times$), referring to a rate of mismatching equal to $0\%$. However, the prices lying outside the bid-ask interval are indeed very close to the boundaries, which confirms the suitability of the bid-ask loss function whenever exact prices are not available.  We have been testing the bid-ask loss function also with more narrow constraints. In this case, it is sufficient to increase the number of iterations for the optimizing routine to obtain very similar results.

\begin{figure}[!htbp]
	\setlength{\tabcolsep}{5pt}
	\resizebox{1\textwidth}{!}{
		\begin{tabular}{@{}>{\centering}m{0.33\textwidth}>{\centering}m{0.33\textwidth}>{\centering\arraybackslash}m{0.33\textwidth}@{}}
			$\boldsymbol{a}$ & $\boldsymbol{b}$&$\boldsymbol{k}$\\
			\includegraphics[width=0.33\textwidth]{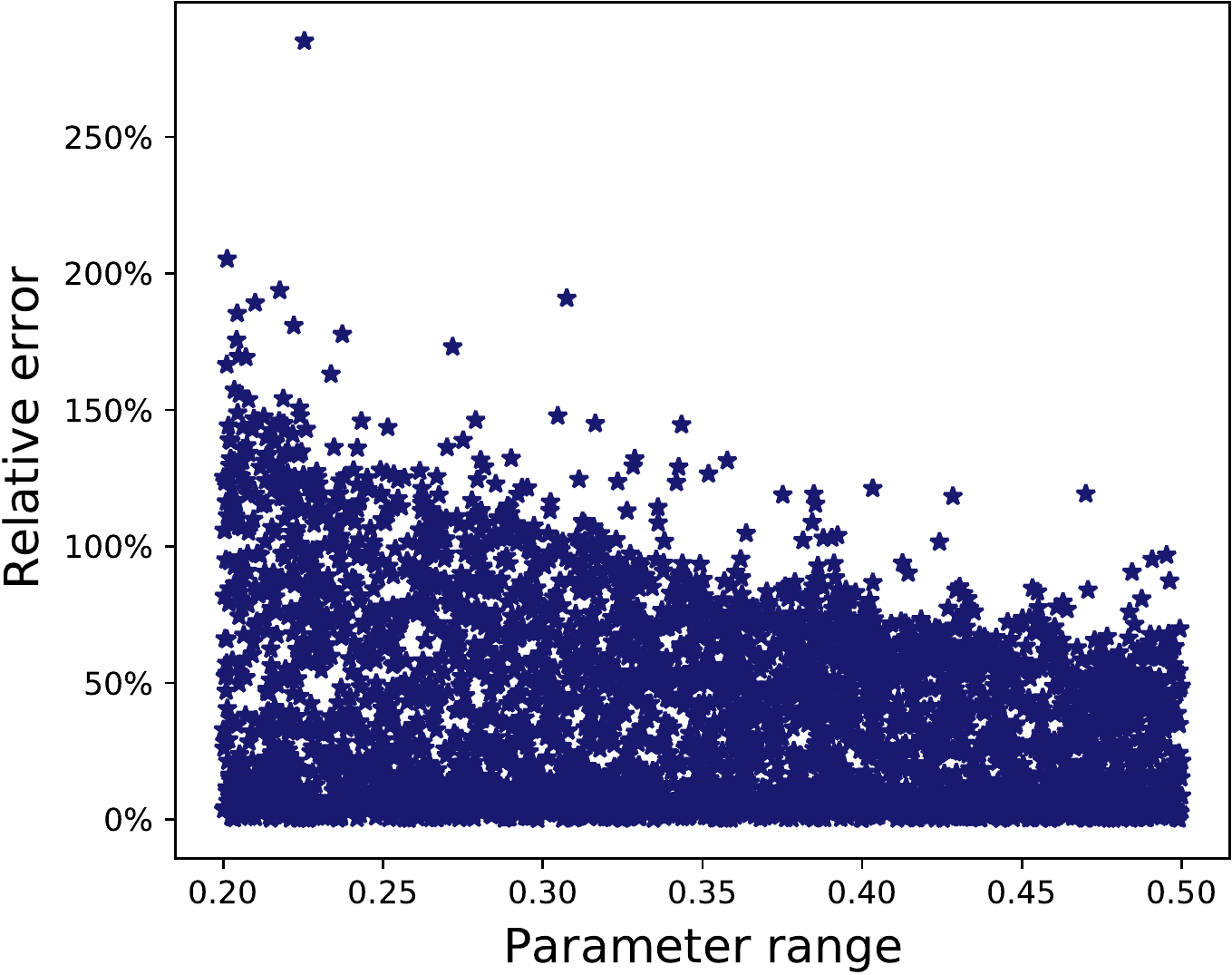} 
			&\includegraphics[width=0.33\textwidth]{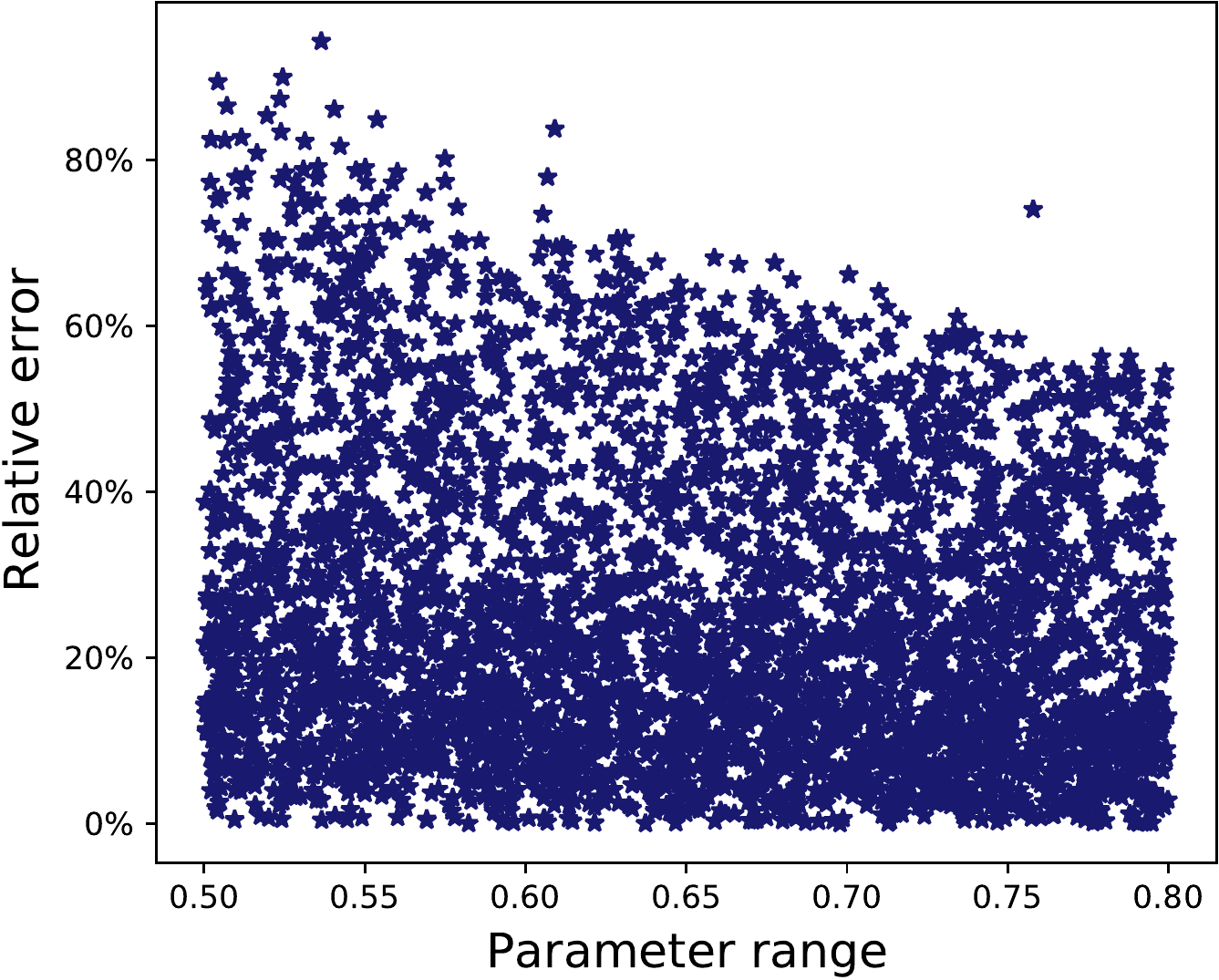}&\includegraphics[width=0.33\textwidth]{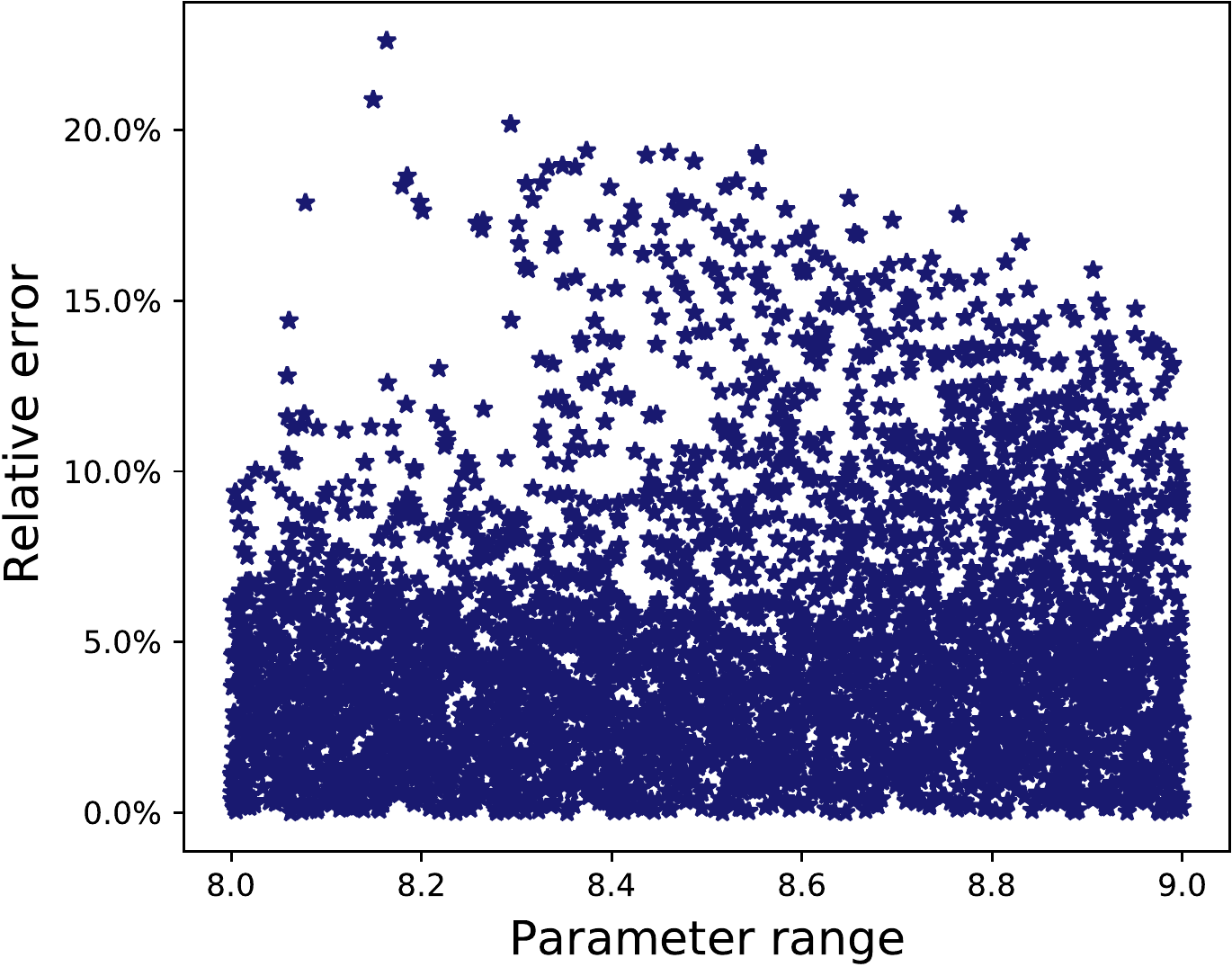}\\
			$\boldsymbol{\alpha_0}$ & $\boldsymbol{\alpha_1}$ &$\boldsymbol{\alpha_2}$\\
			\includegraphics[width=0.33\textwidth]{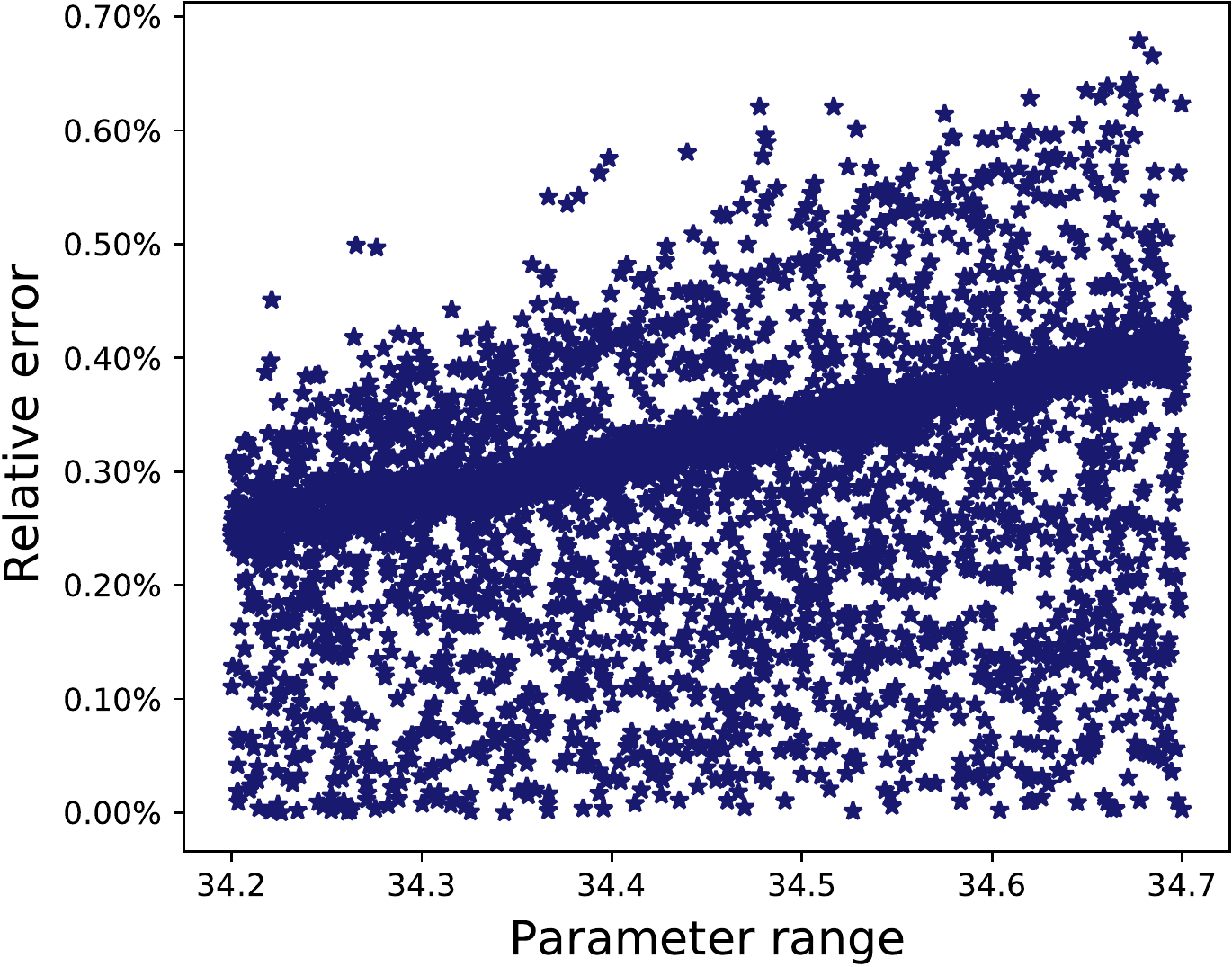} 
			&\includegraphics[width=0.33\textwidth]{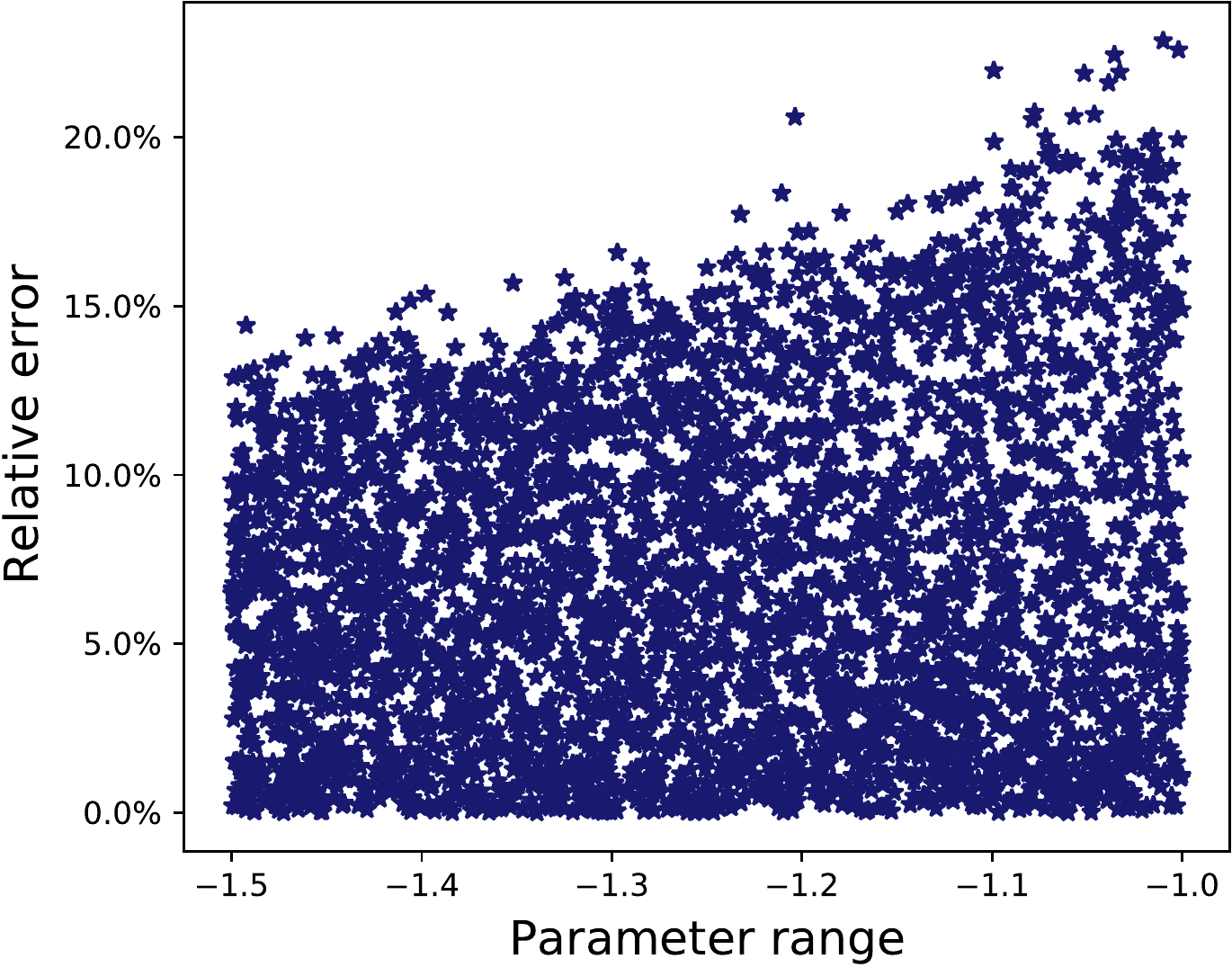}&\includegraphics[width=0.33\textwidth]{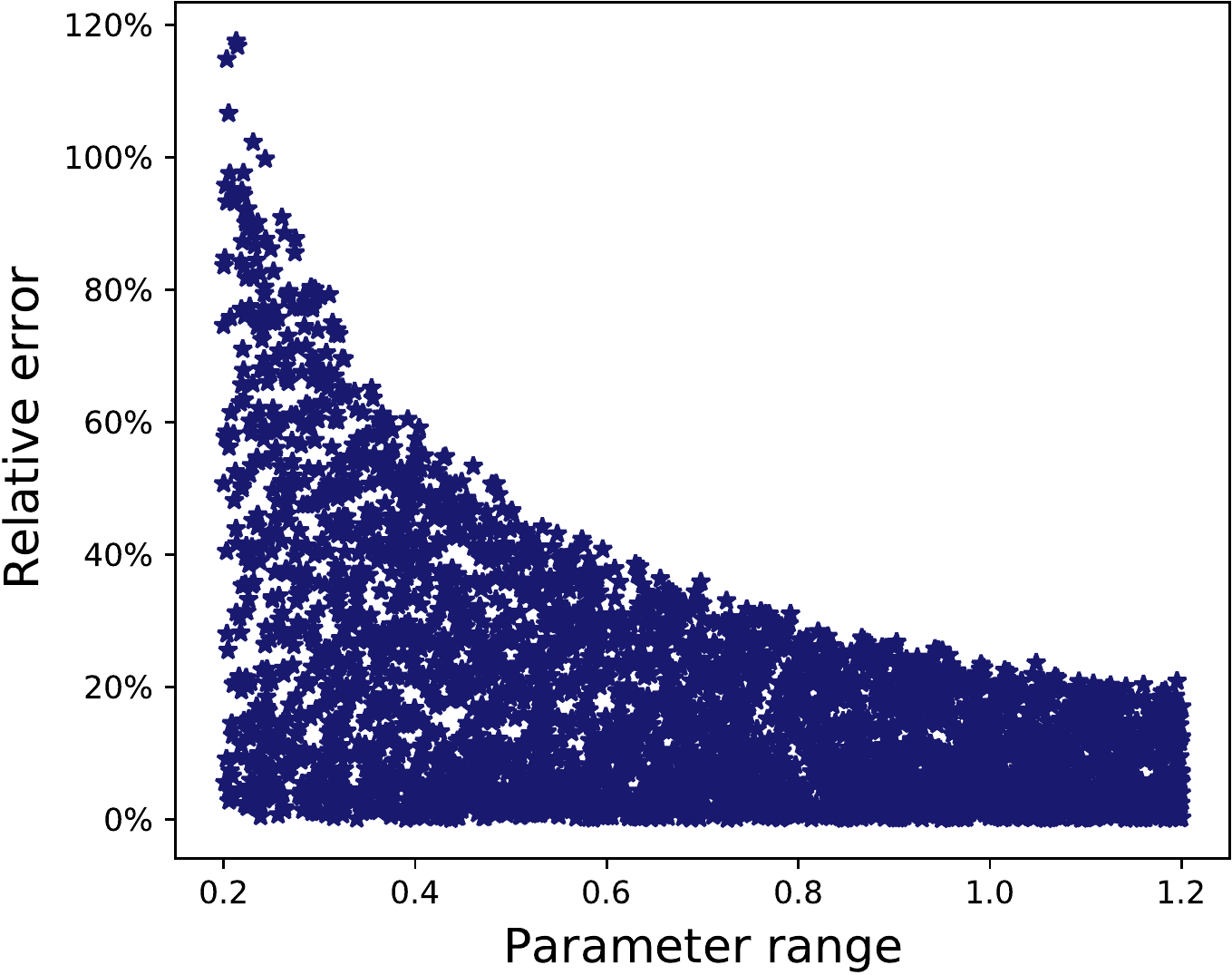}\\
			$\boldsymbol{\alpha_3}$ &  &\\
			\includegraphics[width=0.33\textwidth]{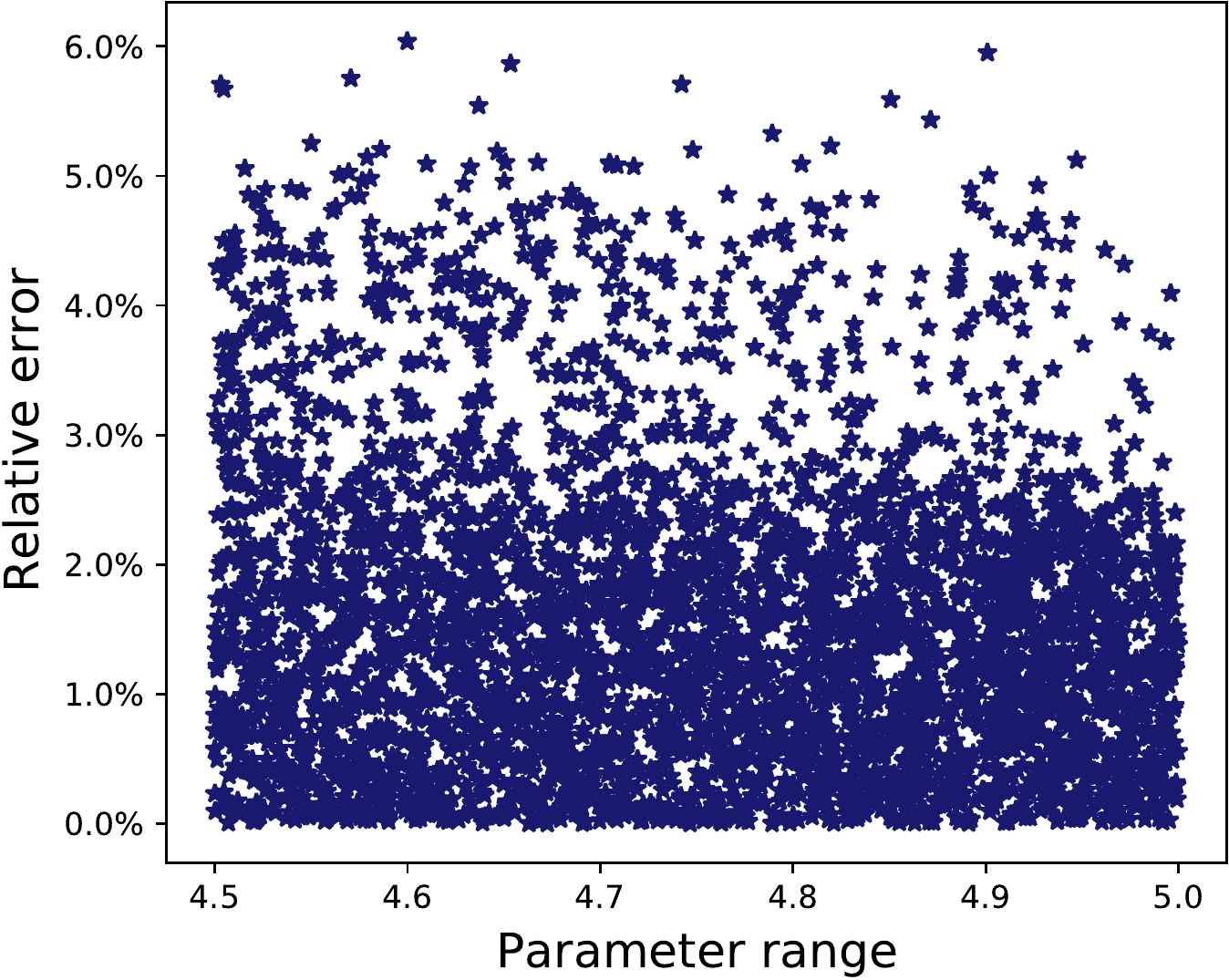} 
			& \vspace{-0.5cm}\begin{tabular}{rrr}
				& \textbf{Mean} & \textbf{Median}\\[3pt]
				$\boldsymbol{a}$: &$40.7\%$ & $33.3\%$\\
				$\boldsymbol{b}$: &$26.0\%$ & $21.3\%$\\
				$\boldsymbol{k}$: &$ 4.95 \%$ & $4.07\%$\\
				$\boldsymbol{\alpha_0}$: &$0.29\%$ & $ 0.30 \%$\\
				$\boldsymbol{\alpha_1}$: &$7.06\%$ & $6.47\%$\\
				$\boldsymbol{\alpha_2}$: &$17.5 \%$ & $12.5\%$\\
				$\boldsymbol{\alpha_3}$: &$1.57\%$ & $1.37\%$\\
			\end{tabular} & 
		\end{tabular}
	}
	\caption{Relative error for the model parameters calibrated in the grid-based learning approach with the bid-ask constraints.\label{adam_BA}}
\end{figure}

\begin{figure}[!htbp]
	\setlength{\tabcolsep}{10pt}
	\resizebox{1\textwidth}{!}{
		\begin{tabular}{@{}>{\centering}m{0.5\textwidth}>{\centering\arraybackslash}m{0.5\textwidth}@{}}
			\textbf{Starting mismatch rate} & \textbf{Final mismatch rate}\\
			\includegraphics[width=0.5\textwidth]{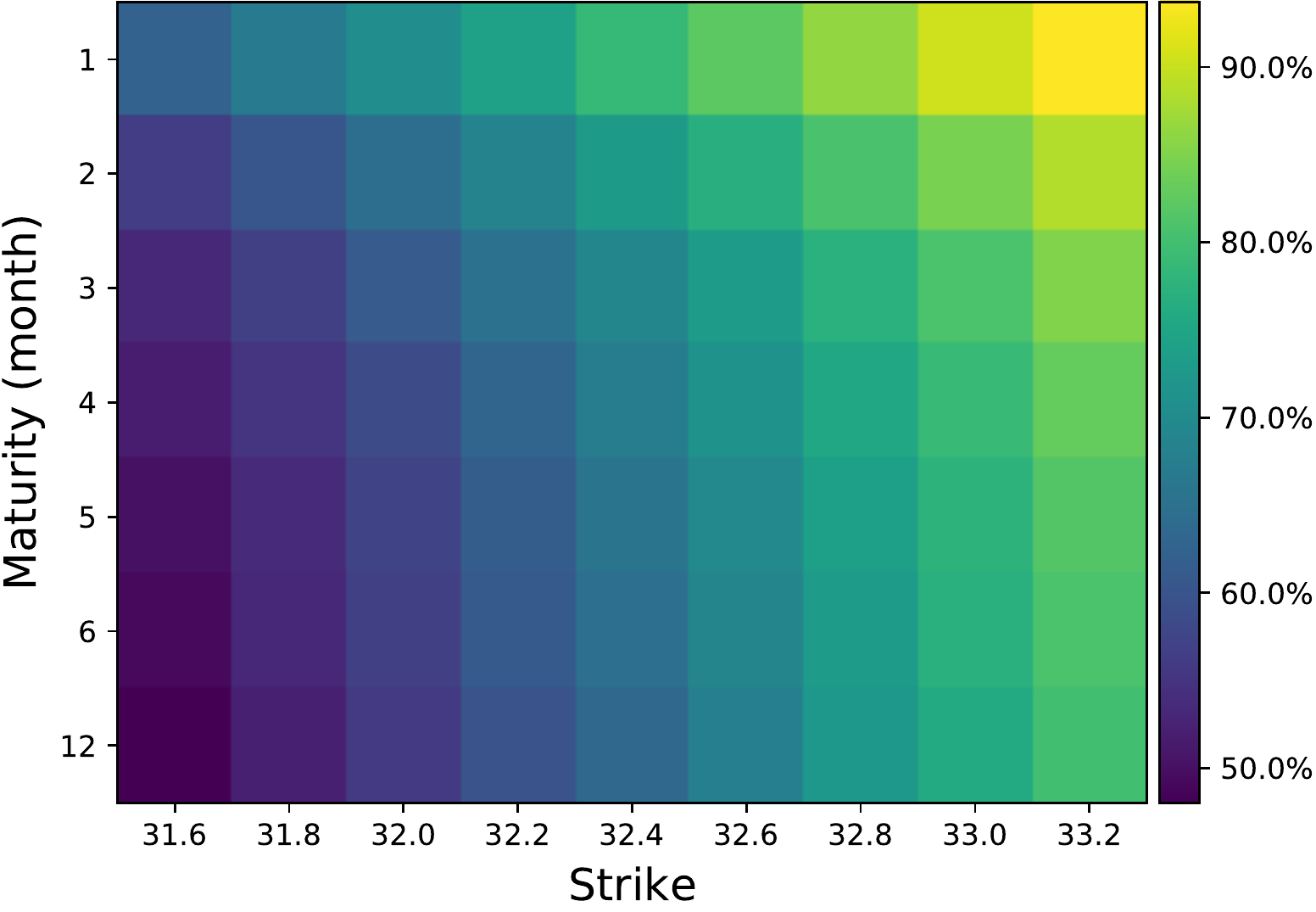} 
			&\includegraphics[width=0.5\textwidth]{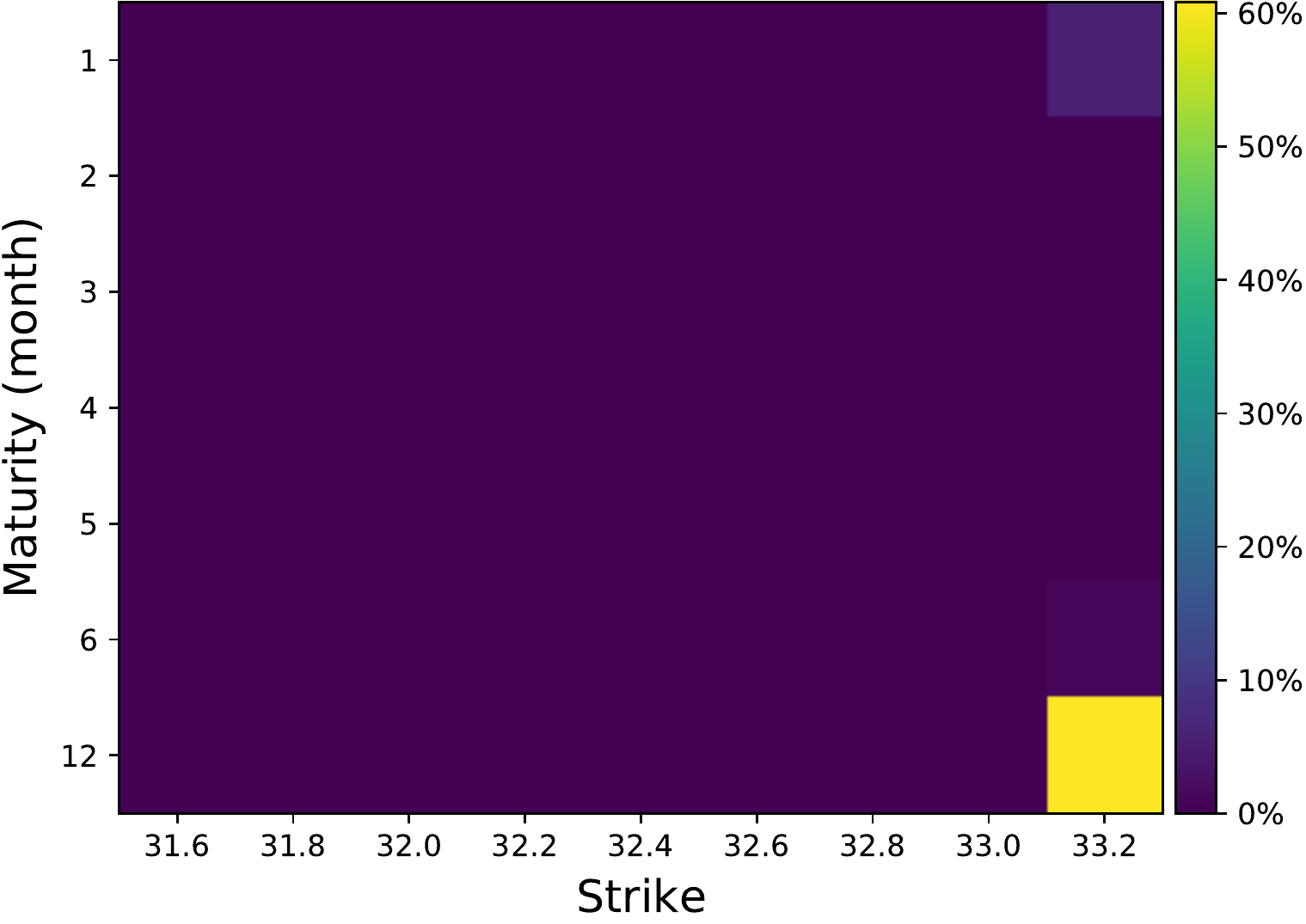}
		\end{tabular}
	}
	\caption{Starting mismatch rate and Final mismatch rate after calibration in the grid-based learning approach with the bid-ask constraints.\label{step2_adam_BA}}
\end{figure}

\begin{figure}[!htbp]
	\setlength{\tabcolsep}{0pt}
	\resizebox{1\textwidth}{!}{
		\begin{tabular}{@{}>{\centering\arraybackslash}m{\textwidth}@{}}
			{\large\textbf{Bid-ask constraint examples}} \\
			$K = 33.2$, $\tau =1$\\ 
			\includegraphics[width=\textwidth, height=4cm,keepaspectratio]{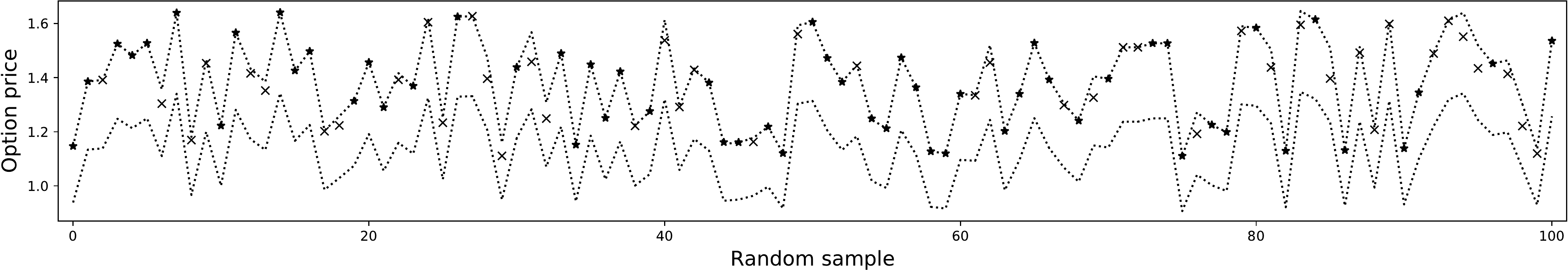} \\
			$K = 33.2$, $\tau =1/12$\\ 
			\includegraphics[width=\textwidth, height=4cm,keepaspectratio]{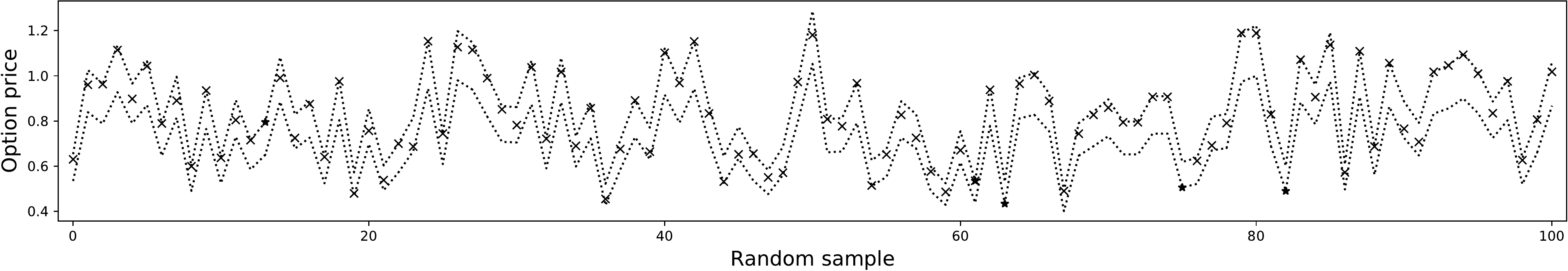} \\
			$K = 32.2$, $\tau =4/12$\\ 
			\includegraphics[width=\textwidth, height=4cm,keepaspectratio]{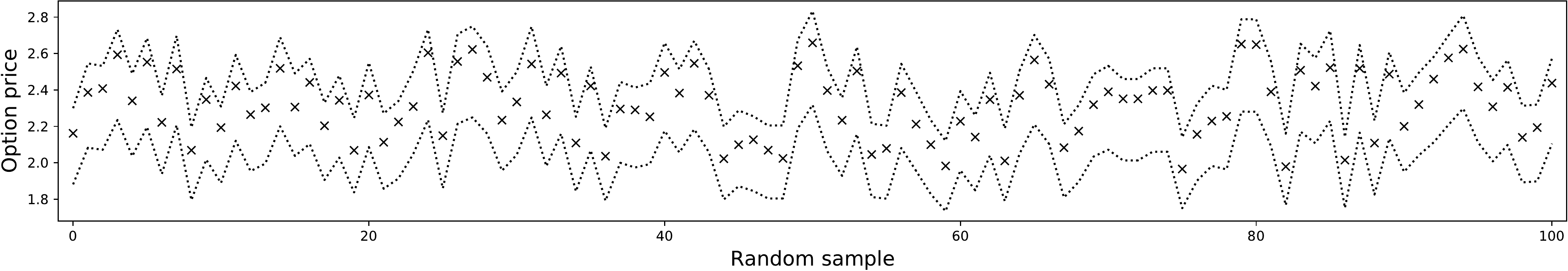}
		\end{tabular}
	}
	\caption{Examples of a $20\%$ bid-ask constraint with the grid-based learning approach. The symbol $\times$ indicates the prices which after calibration are inside the constraints, the symbol $\scriptstyle\bigstar$ the ones outside. \label{bidask_example}}
\end{figure}

\section{Conclusions, remarks and further ideas}\label{conclusion}
The main contributions of this paper can be summarized as follows. We propose a fully parametrized HJM model for the forward curve in energy markets. This involves the specification of a class of volatility operators that map the square integrable infinite dimensional noise into the state space of the forward curve. We then derive pricing formulas for European options on swaps, forward contracts delivering over a period of time. For calibrating the model, we propose an innovative approach based on deep neural networks adapting the recent work by \cite{nn}. We test the accuracy of deep neural networks for model calibration in the infinite-dimensional HJM framework. 

The numerical experiments are satisfactory and reveal that machine learning techniques are a valuable tool to make infinite-dimensional models more tractable. The prices obtained with the neural network, both, before and after calibration, are indeed quite accurate. Also the calibration for markets with large bid-ask prices works well with a new loss function proposed here. On the other hand, the original meaning of the parameters in the underlying model may get lost during this procedure. This phenomenon may be due to the non-injectivity of the pricing map, but also of the neural network itself. This observation stresses the importance of proper benchmarking of a neural network based approach to avoid overfitting. As stated in \cite{deep1}, computing the distance between the true model parameters and the estimated ones, might not be the optimal way to quantify the accuracy of the two-step approach. In \cite{deep1} the authors therefore suggest to switch to a Bayesian viewpoint and observe the posterior distribution of the model parameters, considered as a random variable. We explore the issue of non-injectivity in more detail in Appendix \ref{injectivity}.

Although non-injectivity is an issue somehow intrinsic to the problem, one might still be able to improve the accuracy in calibration. Since forward contracts are traded themselves, it may be interesting to divide the calibration step in two parts. By standard techniques for interpolation, one could use the forward prices to calibrate the initial forward curve parameters. These can then be considered as fixed in the main calibration step, so that the vector of parameters to recover with the neural network is smaller. 
One might also decide to fix the initial forward curve already at the level of the approximation step, and then train the neural network with a lower dimensional input vector $\theta$. 
However, this would require to train the network more frequently to be updated with the current market information. 

Another possibility without changing the setup at all, is to include forward contracts as options with strike $K=0$ and time to maturity $\tau=t$ equal to the evaluation time. These contracts most efficiently allow to recover the parameters for the Nelson-Siegel curve. Given these considerations, our approach to calibrate the entire parameter vector $\theta$ only based on call options, is somehow likely to lead to a larger error than the one obtained in practice. In a fully fledged option pricing platform results are likely to be much better than in this case study which serves merely as a proof of concept. For instance, when pricing standard options in equity markets based on a Black Scholes model, only the implied volatility is calibrated, while the current stock price, dividends and interest rate are obtained from other sources.

\appendix
\section{Proofs of the main results}
\label{appendix:proofs}
We report in this section the proofs to the main results in the order they appear in the paper.

\subsection{Proof of Theorem \ref{kernel:cond}}
\label{kernel:cond:proof}
For every $x\in \R_{+}$ and $f \in \h_{\alpha}$, by the Cauchy-Schwarz inequality we can write that
\begin{equation*}
	\int_{\set} \left| \kappa_t (x,y,f) h(y) \right| dy \leq \left( \int_{\set}  \kappa_t (x,y,f)^2 dy \right)^{1/2} \left( \int_{\set}  h ( y)^2 dy \right)^{1/2} < \infty,
\end{equation*}
which is bounded, since $\kappa_t (x,\cdot,f) \in \h$ for every $x\in  \mathbb{R}_+$ and every $f \in \h_{\alpha}$ by Assumption 1, and because $h\in \h$. Thus $\sigma_t(f) h$ is well defined for all $h \in \h$. 

We need to show that $\sigma_t(f) h \in \h_{\alpha}$ for every $f \in \h_{\alpha}$. We start by noticing that for every $x\in \R_{+}$ the following equality holds:
\begin{equation*}
	\frac{\partial \sigma_t(f) h (x)}{\partial x} = \int_{\set} \frac{\partial \kappa_t (x,y,f) }{\partial x} h(y) dy,
\end{equation*} 
where the differentiation under the integral sign is justified by Dominated Convergence because of Assumption 2 and $\int_{\set}  \bar{\kappa}_x (y) h(y) dy < \infty$ . Moreover, by Assumption 3 and the Cauchy-Schwarz inequality, we find that 
$$ \int_{\mathbb{R}_+} \left( \int_{\set} \frac{\partial \kappa_t (x,y,f) }{\partial x} h(y) dy  \right)^2 \alpha (x) dx \leq  \norm{h}^2 \int_{\mathbb{R}_+} \norm{\frac{\partial \kappa_t (x,\cdot,f) }{\partial x}}^2  \alpha (x) dx < \infty ,$$
which shows that $\sigma_t(f) h\in \h_{\alpha}$ and boundedness of the operator $\sigma_t(f)$ for each $f \in \h_{\alpha}$. 

\subsection{Proof of Theorem \ref{dualthrm}}
\label{dualthrm:proof}
We first observe that for every $h\in \h$ and every $f, f_1\in \h_{\alpha}$, it holds that
\begin{align*}
	& \int_{\R_{+}}\int_{\R_{+}}\left|\frac{\partial \kappa_t (x,y,f) }{\partial x} f_1'(x)\alpha(x)h(y)\right|dydx =  \int_{\R_{+}} | f_1'(x)\alpha(x) | \int_{\R_{+}} \left|\frac{\partial \kappa_t (x,y,f) }{\partial x} \right| \left| h(y)\right| dydx \\
	&\!\leq \!  \int_{\R_{+}} | f_1'(x) | \alpha^{1/2}(x)  \norm{ \frac{\partial \kappa_t (x,\cdot, f) }{\partial x} }\alpha^{1/2}(x)  \norm{h} dx \leq   \norm{h} \norm{f_1}_{\alpha}  \!\left(\!\int_{\R_{+}}  \norm{ \frac{\partial \kappa_t (x,\cdot,f) }{\partial x} }^2\alpha (x)   dx\!\right)^{1/2}
\end{align*} 
where we used the Cauchy-Schwarz inequality twice. By Assumption 3. this is bounded and allows us to apply the Fubini Theorem and calculate as follows:
\begin{align*}
	\langle \sigma_t(f) h, f_1\rangle_{\alpha}  & = f_1(0) \int_{\set}\kappa_t (0,y,f) h(y)dy +\int_{\R_{+}}\frac{\partial \sigma_t(f) h (x)}{\partial x}  f_1'(x)\alpha(x)dx\\
	& =f_1(0) \int_{\set}\kappa_t (0,y,f) h(y)dy +\int_{\R_{+}}   \int_{\set} \frac{\partial \kappa_t (x,y,f) }{\partial x} h(y) dy   f_1'(x)\alpha(x)dx \\
	& = \int_{\set} \left( f_1(0)\kappa_t (0,y,f)   +\int_{\mathbb{R}_+} \frac{\partial \kappa_t (x,y,f) }{\partial x}    f_1'(x)\alpha(x) dx  \right) h(y) dy\\
	& =  \int_{\set}\sigma_t(f)^{*}f_1(y) h(y)dy = \langle h, \sigma_t(f)^{*}f_1\rangle,
\end{align*}
for $\sigma_t(f)^{*}f_1$ defined by 
\begin{equation*}
	\sigma_t(f)^{*}f_1(y):= f_1(0)\kappa_t (0,y,f) +\int_{\R_{+}}\frac{\partial \kappa_t (x,y,f) }{\partial x}  f_1'(x)\alpha(x)dx = \langle \kappa_t (\cdot,y,f), f_1\rangle_{\alpha}.
\end{equation*}
From \cite[Theorem 6.1]{funcanal}, $\sigma_t(f)^{*}$ is the unique adjoint operator of $\sigma_t(f)$, for $f \in \h_{\alpha}$. 

\subsection{Proof of Theorem \ref{kernel:lipcond}}
\label{kernel:lipcond:proof}
We start with the growth condition. For $h\in \h$ and $f_1\in\h_{\alpha}$ we can write that
\begin{align*}
	\norm{\sigma_{t}(f_1)h}^2_{\alpha}& =\left(\sigma_{t}(f_1) h (0)\right)^2 + \int_{\mathbb{R}_+} \left( \frac{\partial \sigma_t (f_1)  h (x)}{\partial x} \right)^2 \alpha (x) dx  \\
	&= \left( \int_{\set} \kappa_{t} (0,y,f_1) h(y) dy \right)^2 + \int_{\mathbb{R}_+} \left( \int_{\set} \frac{\partial \kappa_{t} (x,y, f_1) }{\partial x} h(y) dy  \right)^2 \alpha (x) dx \\
	& \leq  \norm{ \kappa_{t} (0,\cdot, f_1) }^2 \norm{h}^2 + \int_{\mathbb{R}_+} \norm{\frac{\partial \kappa_{t} (x,\cdot, f_1) }{\partial x}}^2 \norm{h}^2 \alpha (x) dx \nonumber\\
	&\leq C(t)^2 (1+ |f_1(0)|)^2 \norm{h}^2 + \int_{\mathbb{R}_+} C(t)^2 f_1'(x)^2 \norm{h}^2 \alpha (x) dx\\
	&\leq  2C(t)^2 (1+ \norm{f_1}_{\alpha})^2 \norm{h}^2,
\end{align*}
where we have used the Cauchy-Schwarz inequality, together with the inequality $\abs{f_1(0)}\leq \norm{f_1}_{\alpha}$ and Assumption 2. With some abuse of notation, it follows that $\norm{\sigma_{t}(f_1)}_{\mathcal{L}(\h, \h_{\alpha})} \leq C(t)(1+\norm{f_1}_{\alpha})$ for a suitably chosen constant $C(t)$. Similarly, from Assumption 1 it follows that 
\begin{align*}
	&\norm{(\sigma_{t}(f_1)-\sigma_{t}(f_2))h}^2_{\alpha} = \left( \int_{\set} \left(\kappa_{t} (0,y, f_1) - \kappa_{t} (0,y, f_2)\right)h(y) dy \right)^2 + \\
	& \qquad\qquad \qquad\qquad \qquad + \int_{\mathbb{R}_+} \left( \int_{\set} \left(\frac{\partial \kappa_{t} (x,y,f_1) }{\partial x} - \frac{\partial \kappa_{t} (x,y, f_2) }{\partial x} \right)h(y) dy  \right)^2 \alpha (x) dx \\
	& \leq  \norm{ \kappa_{t} (0,\cdot, f_1) - \kappa_{t} (0,\cdot, f_2)}^2 \norm{h}^2 +\int_{\mathbb{R}_+} \norm{\frac{\partial \kappa_{t} (x,\cdot, f_1) }{\partial x}- \frac{\partial \kappa_{t} (x,\cdot, f_2) }{\partial x}}^2 \norm{h}^2 \alpha (x) dx \\
	&\leq  C(t)^2  \abs{f_1(0) -f_2(0)}^2  \norm{h}^2 + \int_{\mathbb{R}_+} C(t)^2 (f_1'(x)- f_2'(x) )^2 \norm{h}^2 \alpha (x) dx\\
	&\leq  2C(t)^2 \norm{f_1 -f_2}_{\alpha}^2 \norm{h}^2,
\end{align*}
from which $\norm{\sigma_{t}(f_1)-\sigma_{t}(f_2)}_{\mathcal{L} (\h , \h_{\alpha})}\leq C(t) \norm{f_1 -f_2}_{\alpha} $ for a suitably chosen $C(t)$, which proves the Lipschitz continuity of the volatility operator, and concludes the proof. 

\subsection{Proof of Proposition \ref{weight:cond}}
\label{weight:cond:proof}
In order for the volatility operator $\sigma_t$ to be well defined, we need to check that the function $\kappa_t$ introduced in equation \eqref{kappa} satisfies the assumptions of Theorem \ref{kernel:cond}. We start by observing that $\kappa_t(x,\cdot) \in \h$ if and only if $\omega\in \h$. Then, we can calculate the derivative
\begin{equation*}
	\frac{\partial \kappa_t(x,y)}{\partial x} = a(t)e^{-bx}\left(\omega '(x-y)-b\omega(x-y)  \right),
\end{equation*} 
which, in particular, by Assumption 2 is bounded by
\begin{equation*}
	\abs{\frac{\partial \kappa_t(x,y)}{\partial x}} \leq a(t)e^{-bx} \bar{\omega}_x (y).
\end{equation*} 
For the $\h$-norm we then have that
\begin{equation*}
	\norm{	\frac{\partial \kappa_t(x,\cdot)}{\partial x} }^2 = 
	\int_{\set}\left(\frac{\partial \kappa_t(x,y)}{\partial x}\right)^2 dy \le a(t)^2 e^{-2bx} C_1^2 < \infty,
\end{equation*} 
where we have used that $\norm{\bar{\omega}_x}\leq C_1$, which implies that Assumption 3 in Theorem \ref{kernel:cond} is satisfied for $\alpha$ such that $\int_{\R_+} e^{-2bx} \alpha(x) <\infty$.
Finally, the Lipschitz condition is trivially satisfied and the growth condition is fulfilled because $a(t)$ is bounded. 

\subsection{Proof of Lemma \ref{deliveryfunction}}
\label{deliveryfunction:proof}
For $w$ in equation \eqref{wswap}, we get that $w_{\ell}(v)= \frac{1}{\ell}$ and $\mathcal{W}_{\ell}(u) = \frac{u}{\ell}$. Then $$q_{\ell}^{w}(x, y) =  \frac{1}{\ell}\left(\ell -y + x\right)\I_{[0,\ell]}(y-x),$$ and from equations \eqref{Dell} and \eqref{int} we can write that
\begin{equation*}
	\D_{\ell}^w(g_t)(x) = g_t+ \frac{1}{\ell}\int_{0}^{\infty}\left(\ell -y + x\right)\I_{[0,\ell]}(y-x)g_t'(y)dy = g_t+ \frac{1}{\ell}\int_{x}^{x+\ell}\left(\ell -y + x\right)g_t'(y)dy.
\end{equation*}
Integration by parts gives the result.

\subsection{Proof of Proposition \ref{sigma:integral:prop}}
\label{sigma:integral:prop:proof}
Let $f := \D_{\ell}^{w*}\delta_{T_1-s}^*(1)$. We start by applying the covariance operator to $h := \sigma_s(g_s)^*f$:
\begin{align*}
	\left(\q\sigma_s(g_s)^*f\right)(x) &= \int_{\set}q(x,y)\sigma_s(g_s)^*f(y)dy \\&= \int_{\set}q(x,y)\left\langle \kappa_s (\cdot,y, g_s), f\right\rangle_{\alpha}dy = \left\langle \int_{\set}q(x,y) \kappa_s (\cdot,y, g_s)dy, f\right\rangle_{\alpha},
\end{align*}
where we used Theorem \ref{dualthrm} and the linearity of the scalar product. Further, we apply $\sigma_s(g_s)$:
\begin{align*}
	\left(\sigma_s(g_s)\q\sigma_s(g_s)^*f\right)(x) &= \int_{\set} \kappa_s(x,z, g_s) \left(\q\sigma_s(g_s)^*f\right)(z) dz  \\
	&=  \left\langle \int_{\set} \int_{\set}\kappa_s(x,z,g_s)q(z,y) \kappa_s (\cdot,y, g_s)dy dz , f\right\rangle_{\alpha}= \left\langle\Psi_s(x, \cdot) , f\right\rangle_{\alpha},
\end{align*}
for $\Psi_s(x, \cdot) :=  \int_{\set} \int_{\set}\kappa_s(x,z, g_s)q(z,y) \kappa_s (\cdot,y, g_s)dy dz$. We go now back to the definition of $f$:
\begin{align*}
	&\left(\sigma_s(g_s)\q\sigma_s(g_s)^*\right) \left(\D_{\ell}^{w*}\delta_{T_1-s}^*(1)\right)(x) = \left\langle\Psi_s(x, \cdot) , \D_{\ell}^{w*}\delta_{T_1-s}^*(1)\right\rangle_{\alpha} \\
	&=  \left\langle \D_{\ell}^{w}\Psi_s(x, \cdot) ,\delta_{T_1-s}^*(1)\right\rangle_{\alpha} = \delta_{T_1-s}\left(\D_{\ell}^{w}\Psi_s(x, \cdot)\right) =\left(\D_{\ell}^{w}\Psi_s\right)(x, T_1-s).
\end{align*}
By Lemma \ref{deliveryfunction} we can write that
\begin{align*}
	\left(\D_{\ell}^{w}\Psi_s\right)(x, T_1-s) &= \int_{\R_{+}}d_{\ell}(T_1-s, u)\Psi_s(x,u)du \\&= \int_{\R_{+}}\int_{\set} \int_{\set}d_{\ell}(T_1-s, u)\kappa_s(x,z, g_s)q(z,y) \kappa_s (u,y, g_s)dy dzdu,
\end{align*}
to which, finally, we apply the operator $\delta_{T_1-s}\D_{\ell}^{w}$:
\begin{align*}
	&\delta_{T_1-s}\D_{\ell}^{w}\left(\sigma_s(g_s)\q\sigma_s(g_s)^*\right) \left(\D_{\ell}^{w*}\delta_{T_1-s}^*(1)\right) \\&= \int_{\R_{+}}d_{\ell}(T_1-s, v)\left(\sigma_s(g_s)\q\sigma_s(g_s)^*\right) \left(\D_{\ell}^{w*}\delta_{T_1-s}^*(1)\right)(v)dv\\
	&= \int_{\R_{+}}\int_{\R_{+}}\int_{\set} \int_{\set}d_{\ell}(T_1-s, v)d_{\ell}(T_1-s, u)\kappa_s(v,z, g_s)q(z,y) \kappa_s (u,y, g_s)dy dzdudv,
\end{align*}
finalizing the proof. 

\subsection{Proof of Proposition \ref{prop1} }
\label{appendix:vol}
We consider the representation
\begin{equation}
	\label{sigma:app}
	\Sigma^2_s = a^2\int_{\R_{+}}\int_{\R_{+}}e^{-bu}e^{-bv}d_{\ell}(T_1-s, u)d_{\ell}(T_1-s, v) \mathcal{A}(u,v)du dv,
\end{equation}
where we have introduced
$$\mathcal{A}(u,v) := \int_{\R}\int_{\R}\omega(v-z)q(z,y)\omega(u-y)dy dz, \qquad  u,v\in\R_{+}.$$
By applying (repeatedly) the integration by parts, and since $\omega''$ is null, we obtain
\begin{align}
	\mathcal{A}(u,v) &
	=  \int_{\R}\omega(v-z)\left(\int_{\R}e^{-k|z-y|}\omega(u-y)dy\right) dz \notag\\
	&=\int_{\R}\omega(v-z)\left(\int_{-\infty}^ze^{-k(z-y)}\omega(u-y)dy+\int_{z}^{\infty}e^{-k(y-z)}\omega(u-y)dy\right) dz\notag\\
	& = \label{A}
	\frac{2}{k}\int_{\R}\omega(v-z)\omega(u-z)dz.
\end{align}
By substituting equation \eqref{A} into \eqref{sigma:app}, we get that
\begin{align*}
	\Sigma^2_s &=  \frac{2a^2}{k}\int_{\R}\int_{\R_{+}}\int_{\R_{+}}e^{-bu}e^{-bv}d_{\ell}(T_1-s, u)d_{\ell}(T_1-s, v) \omega(v-z)\omega(u-z)dzdu dv\\
	&=  \frac{2a^2}{k}\int_{\R}\left(\int_{\R_{+}}e^{-bu}d_{\ell}(T_1-s, u) \omega(u-z)du\right)^2 dz\\
	&=  \frac{2a^2}{k\ell^2}\int_{\R}\left(\int_{T_1-s}^{T_1-s+\ell}e^{-bu} \omega(u-z)du\right)^2 dz,
\end{align*}
where we used the definition of $d_{\ell}$ in Lemma \ref{deliveryfunction}. By integration by parts, we get
\begin{align*}
	\Sigma^2_s &=  \frac{2a^2}{k\ell^2b^4}\int_{\R}\left( e^{-b(T_1-s)}\left(b\omega(T_1-s-z)+\omega'(T_1-s-z) \right)+\right.\\&\qquad \qquad \qquad \qquad  \left.-e^{-b(T_1-s+\ell)}\left(b\omega(T_1-s+\ell-z)+\omega'(T_1-s+\ell-z) \right)\right)^2 dz\\
	&= \frac{2a^2}{k\ell^2b^4}\left(e^{-2b(T_1-s)}\mathcal{B}_1(s) -2e^{-2b(T_1-s)}e^{-b(T_1-s+\ell)}\mathcal{B}_2(s)+e^{-2b(T_1-s+\ell)}\mathcal{B}_3(s)\right),
\end{align*}
where we introduced
\begin{align*}
	& \mathcal{B}_1(s) := \int_{\R} \left(b\omega(T_1-s-z)+\omega'(T_1-s-z) \right)^2dz,\\
	& \mathcal{B}_2(s) := \int_{\R}\left(b\omega(T_1-s-z)+\omega'(T_1-s-z) \right)\left(b\omega(T_1-s+\ell-z)+\omega'(T_1-s+\ell-z) \right) dz,\\
	& \mathcal{B}_3(s) :=\int_{\R}\left(b\omega(T_1-s+\ell-z)+\omega'(T_1-s+\ell-z) \right)^2 dz.
\end{align*}
By using the definition of $\omega$ in equation \eqref{omega}, we get that
\begin{align*}
	\mathcal{B}_1(s) &= \int_{T_1-s-1}^{T_1-s+1} \left(b(1-|T_1-s-z|)-\mathrm{sgn}(T_1-s-z) \right)^2dz\\
	&= \int_{T_1-s-1}^{T_1-s} \left(b(1-T_1+s+z)-1 \right)^2dz+ \int_{T_1-s}^{T_1-s+1} \left(b(1+T_1-s-z)+1 \right)^2dz \\
	&= \frac{2}{3}\left(b^2+3\right),
\end{align*}
where $\mathrm{sgn}$ denotes the sign function. Similarly,
\begin{equation*}
	\mathcal{B}_2(s) = \frac{b^2}{6}\left(3(\ell-2)\ell^2+4\right)-3\ell+2, \qquad 
	\mathcal{B}_3(s) = \frac{2}{3}\left(b^2+3\right).
\end{equation*}
By substituting these findings and rearranging the terms, we get that
\begin{equation*}
	\Sigma^2_s = \frac{2a^2}{kb^4\ell^2}\,e^{-2b(T_1-s)} \left\{ \frac{2}{3}\left(b^2+3\right)\left(1+e^{-2b\ell}\right)-2e^{-b\ell}\left(\frac{b^2}{6}\left(3(\ell-2)\ell^2+4 \right)-3\ell+2 \right) \right\},
\end{equation*}
which concludes the proof. 

\section{The non-injectivity issue}
\label{injectivity}
From the numerical experiments, we observe that the accuracy achieved in calibration is not particularly convincing, especially for the parameters regarding the volatility and the covariance operator, $a$, $b$ and $k$. Slightly better results were obtained for the Nelson-Siegel curve parameters, $\alpha_0$, $\alpha_1$ and $\alpha_3$, with the exception of $\alpha_2$ (see Figure \ref{adam_grid_dense}, \ref{adam_pointwise} and \ref{adam_BA}). On the other hand, the relative error for the price approximation after calibration shows high degree of accuracy (Figure \ref{step2_adam_grid_dense}, \ref{step2_adam_pointwise} and \ref{step2_adam_BA}). We may conclude that the original meaning of the model parameters is lost in the approximation step. Indeed, as pointed out in \cite{deep1}, it is somehow to be expected that the neural network is non-injective in the input parameters on large part of the inputs domain. We shall briefly analyse this. 

The pricing formula \eqref{priceVexp}, once fixed the strike $K$ and the time to maturity $\tau$, crucially depends on $\xi$ and $\mu(g_{t})$ as derived, respectively, in Proposition \ref{prop2} and equation \eqref{NSdrift}:
\begin{equation*}
	\Pi(t) = e^{-r(\tau-t)}\left\{\xi \phi\left(\frac{\mu(g_t)-K}{\xi}\right) + \left(\mu(g_t)-K\right)\Phi\left(\frac{\mu(g_t)-K}{\xi}\right)\right\}.
\end{equation*}
However, $\xi$ is only a scale, while $\mu(g_{t})$ is more influential on the final price level since it defines the distance from the strike price $K$. Let us first focus on $\xi$:
\begin{equation*}
	\xi^2 = \frac{a^2}{kb^5\ell^2}\left(e^{-2b(T_1-\tau)}-e^{-2b(T_1-t)}\right)B(b^2, e^{-b\ell}),	
\end{equation*}
where $B(b^2, e^{-b\ell})$ simply indicates a term proportional to $b^2$ and $e^{-b\ell}$. In the front coefficient, a decrease in $a$ might be, for example, compensated by an increase in $b$ or $k$, and vice versa, meaning that several combinations of values for $a$, $b$, and $k$ lead to the same overall $\xi$. Thus we may suspect that it can be hard for the neural network to identify the right vector of parameters despite reaching good level of accuracy for the price. 

In Figure \ref{inj_grid_dense} we report an example of non-injectivity with respect to the parameters $a$, $b$ and $k$ that we have observed in the grid-based learning approach. Here the neural network is not injective when all the parameters, except one, are fixed, and is only little sensitive to the change in the parameters. This also explains the struggle in calibration.

\begin{figure}[!tbp]
	\setlength{\tabcolsep}{5pt}
	\resizebox{1\textwidth}{!}{
		\begin{tabular}{@{}>{\centering}m{0.33\textwidth}>{\centering}m{0.33\textwidth}>{\centering\arraybackslash}m{0.33\textwidth}@{}}
			{\large $\boldsymbol{a}$} & {\large$\boldsymbol{b}$}&{\large$\boldsymbol{k}$}\\
			\includegraphics[width=0.33\textwidth]{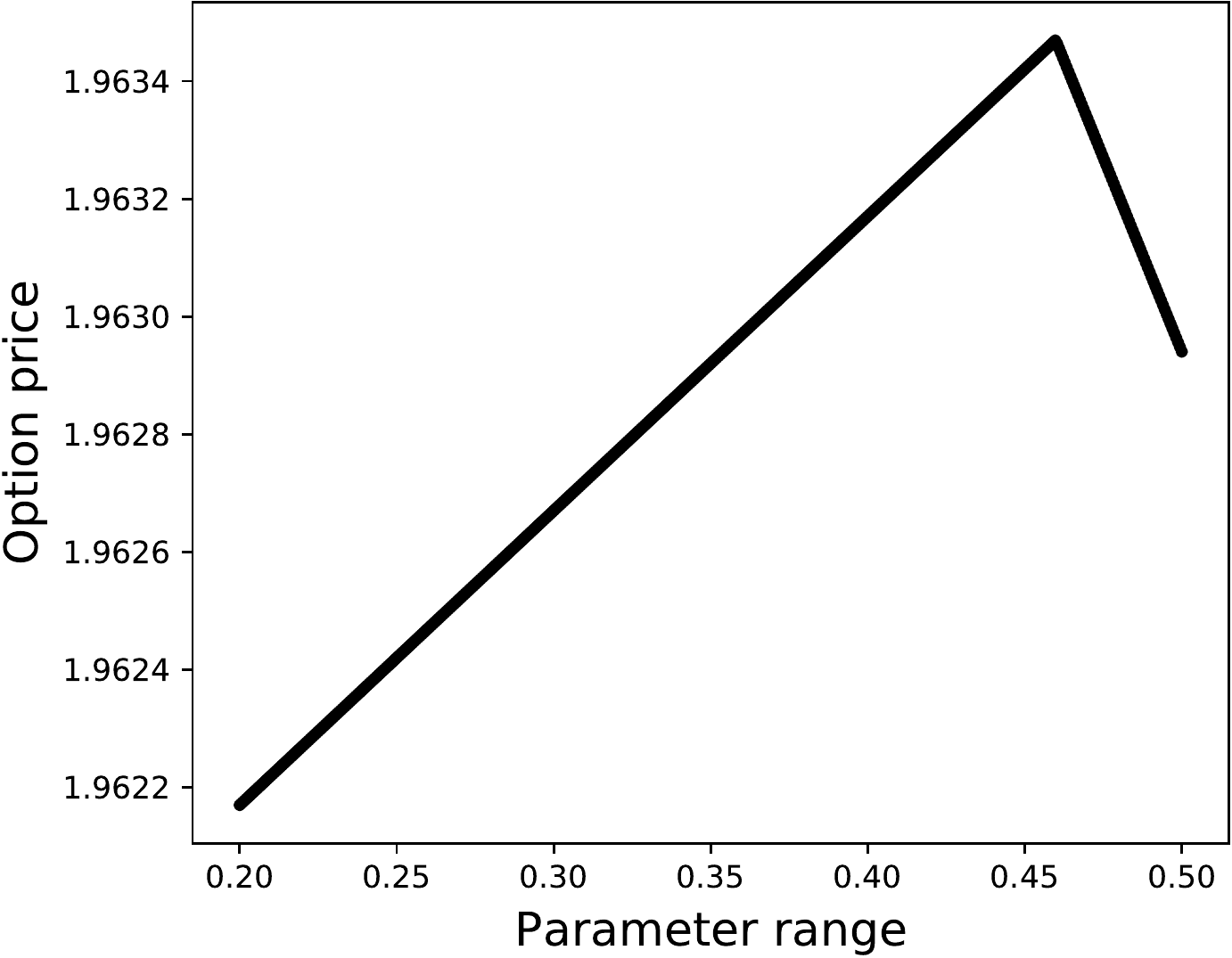} 
			&\includegraphics[width=0.33\textwidth]{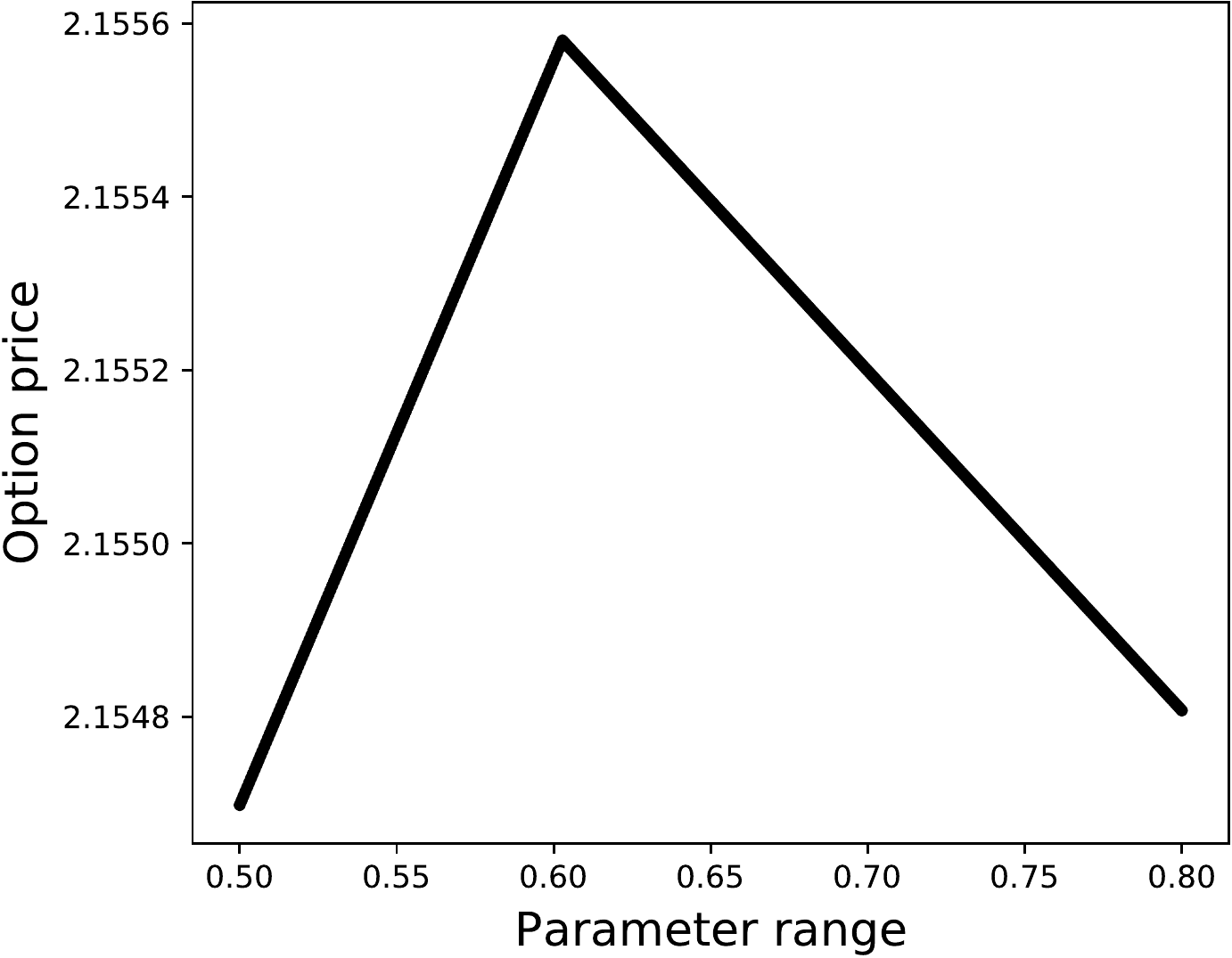}&\includegraphics[width=0.33\textwidth]{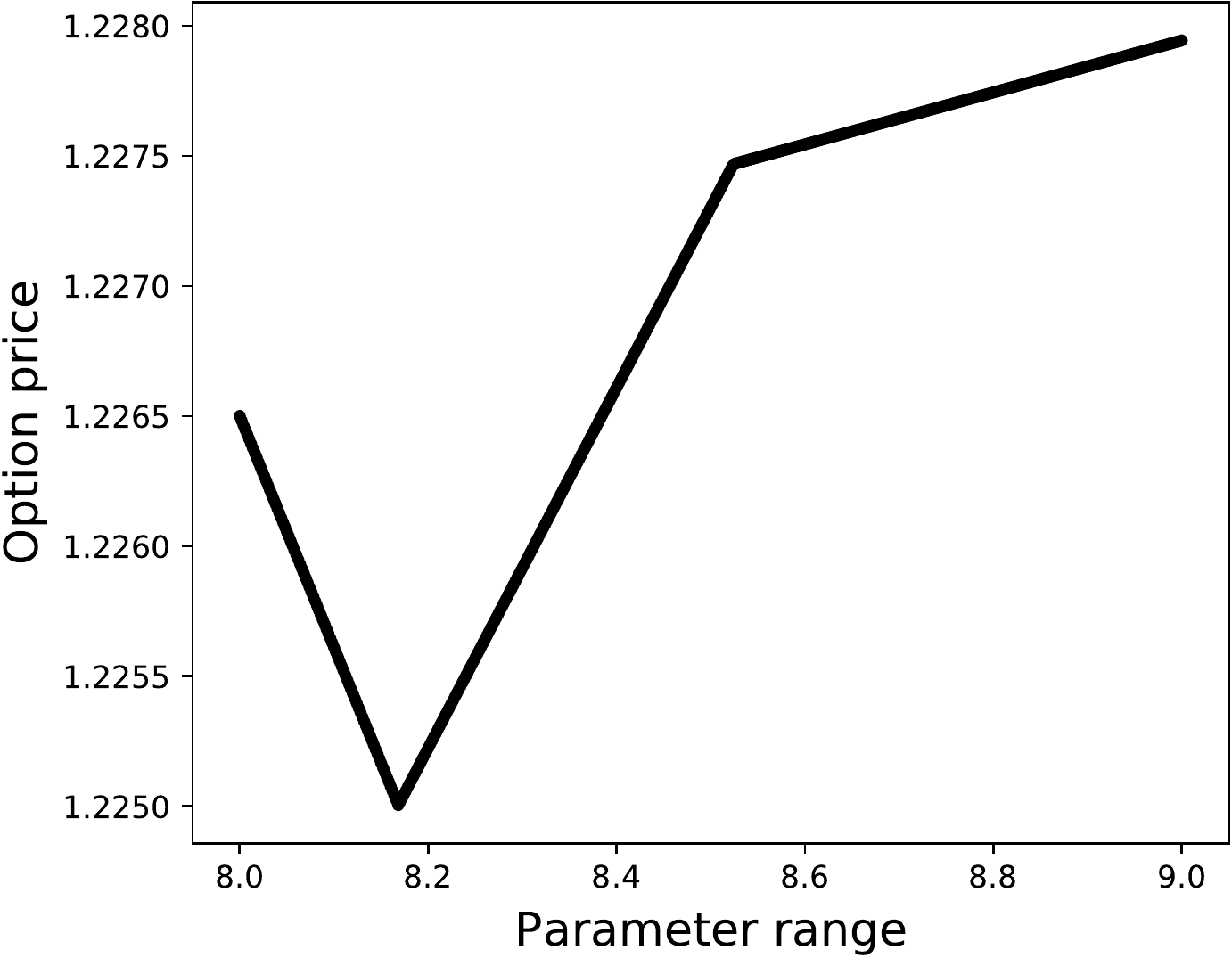}\\
			$K = 32.4 $, $\tau =3/12$& $K = 33.2$, $\tau = 1$ & $K = 31.8 $, $\tau = 1/12$ \\
			{\footnotesize$\theta=[a, 0.7, 8.0, 34.4, -1.0, 0.5, 5.0]$} & {\footnotesize$\theta=[0.45, b, 8.5, 34.3, -1.0, 0.5, 5.0]$} & {\footnotesize$\theta=[0.5, 0.8,  k, 34.4, -1.0, 0.5, 5.0]$}
		\end{tabular}
	}
	\caption{Examples of non injectivity in the grid-based learning approach. In each image, only one of the parameters is varying, while the rest is fixed.\label{inj_grid_dense}}
\end{figure}

Similar observations can be done for the drift:
\begin{equation*}
	\mu(g_t) = \alpha_0+\frac{e^{-\alpha_3(T_1-t)}}{\alpha_3\ell} \left(\alpha_1+\alpha_2+\alpha_2\alpha_3(T_1-t) \right)-\frac{e^{-\alpha_3(T_1+\ell-t)}}{\alpha_3\ell}\left(\alpha_1+\alpha_2+\alpha_2\alpha_3(T_1+\ell-t) \right).
\end{equation*}
Here the role of $\alpha_0$ is specific since it defines the starting level of the curve, and indeed $\alpha_0$ is the parameter that gets the best accuracy in estimation. However, $\alpha_2$ appears first added to $\alpha_1$ and then multiplied by $\alpha_3$, making it hard for the neural network to outline its role. In the Nelson-Siegel curve in equation \eqref{NS}, $\alpha_2$ defines the position of the "bump", but the drift $\mu(g_t)$ is obtained by integrating the curve within the delivery period of the contract. This integration smoothens the curve and makes it hard to locate the "bump".  This might explain why the accuracy in estimating $\alpha_2$ is worse than for the other Nelson-Siegel parameters.

We conclude the article with the following theorem showing that it is possible to construct ReLU neural networks which act as simple linear maps.
\begin{thrm}
	\label{th:inj}
	Let $A \in \mathbb{R}^{p \times d}$. Then for any $L \geq 2$ and any $\mathbf{n}= (d, n_1, \ldots, n_{L-1}, p)$ with $n_i \geq 2d$, $i=1,\ldots, (L-1)$, there exists an $L$-layer  ReLU neural network $\mathcal{N} \colon \mathbb{R}^d \to \mathbb{R}^p$ with dimension $\mathbf{n}$, which satisfies 
	\[  \mathcal{N}(x) = Ax, \quad\quad \text{for all } x \in \mathbb{R}^d.  \]
\end{thrm}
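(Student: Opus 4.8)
The plan is to exploit the elementary identity $t = \rho(t) - \rho(-t)$, valid for every $t \in \mathbb{R}$. Read coordinatewise, it says that the identity map on $\mathbb{R}^d$ is recovered by a \emph{linear} map from the doubled vector $(\rho(x), \rho(-x)) \in \mathbb{R}^{2d}$, and moreover this doubled vector is stable under a further rectification since $\rho$ is idempotent. This is exactly why the hypothesis $n_i \geq 2d$ appears: the idea is to carry the rectification-robust encoding $(\rho(x), \rho(-x))$ — padded with zeros in the remaining $n_i - 2d$ coordinates — through all the hidden layers, and to collapse it back to $Ax$ only in the final, activation-free affine map.

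Concretely, I would write a generic vector $z \in \mathbb{R}^{n_i}$ in block form $z = (z^{(1)}, z^{(2)}, z^{(3)})$ with $z^{(1)}, z^{(2)} \in \mathbb{R}^d$ and $z^{(3)} \in \mathbb{R}^{n_i - 2d}$ (the last block empty when $n_i = 2d$), take all biases $v_i = 0$, and define the weights as follows. First I set $H_1(x) = (x, -x, 0)$, so that $\rho(H_1(x)) = (\rho(x), \rho(-x), 0)$. Next, for each hidden-to-hidden transition $i = 1, \ldots, L-2$, I set $H_{i+1}(z) = (z^{(1)} - z^{(2)}, \, z^{(2)} - z^{(1)}, \, 0)$, a map that in particular annihilates the padding block $z^{(3)}$. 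Finally I set $H_L(z) = A\bigl(z^{(1)} - z^{(2)}\bigr)$. Each of these is an affine map of exactly the dimensions prescribed by $\mathbf{n}$, and the construction uses only that $n_i \geq 2d$.

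The verification step is a one-line induction on the layer index: I claim the post-activation vector at layer $i$ equals $(\rho(x), \rho(-x), 0)$ for $i = 1, \ldots, L-1$. The base case $i = 1$ is the computation of $\rho(H_1(x))$ above; for the inductive step, $\rho(x) - \rho(-x) = x$ gives $H_{i+1}(\rho(x), \rho(-x), 0) = (x, -x, 0)$, and applying $\rho$ again restores $(\rho(x), \rho(-x), 0)$. Feeding the layer-$(L-1)$ output into $H_L$ then yields $A(\rho(x) - \rho(-x)) = Ax = \mathcal{N}(x)$; for $L = 2$ there are no hidden layers and one composes $H_1$, $\rho$, $H_2$ directly to the same conclusion. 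I do not expect a real obstacle here — the single idea is the positive/negative-part encoding — so the only thing to be careful about is that the invariant $(\rho(x), \rho(-x), 0)$ is preserved \emph{exactly} at every hidden layer, not merely up to the linear decoding, which is guaranteed by the idempotence of $\rho$ together with the fact that every weight matrix in the chain kills the padding coordinates.
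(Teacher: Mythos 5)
Your construction is correct and is essentially the paper's own proof: the weights $V_1=[I_d,-I_d,O_1]^{\top}$, $V_i=[I_d,-I_d,O_i]^{\top}[I_d,-I_d,O_{i-1}]$ and $V_L=A[I_d,-I_d,O_{L-1}]$ used there implement exactly your encode/decode maps $x\mapsto(x,-x,0)$, $z\mapsto(z^{(1)}-z^{(2)},z^{(2)}-z^{(1)},0)$ and $z\mapsto A(z^{(1)}-z^{(2)})$, resting on the same identity $x=\rho(x)-\rho(-x)$. Your explicit induction on the invariant $(\rho(x),\rho(-x),0)$ is, if anything, slightly more careful than the paper's appeal to ``straightforward calculation.''
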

\begin{proof}
	We follow a similar approach to \cite[Section 8.5]{nina}. Let $\nu_i\ge 0$ be such that $n_i = 2d+\nu_i$ for $i=1, \dots,(L-1)$. For $I_d$ the identity matrix of dimension $d$, we define the following weights:
	\begin{align*}
		&V_1 := \begin{bmatrix}
			I_d& -I_d & O_1
		\end{bmatrix}^{\top},\\
		& V_i := \begin{bmatrix}
			I_d& -I_d & O_i \end{bmatrix}
		\begin{bmatrix}
			I_d& -I_d & O_{i-1}
		\end{bmatrix}^{\top}, \quad i=2, \dots, (L-1),\\
		& V_L := A   \begin{bmatrix}
			I_d& -I_d & O_{L-1} \end{bmatrix},
	\end{align*}
	where $\top$ denotes the transpose operator. Here $O_i \in \R^{d\times \nu_i}$ are matrices with all entries equal to $0$ to compensate the matrix dimension in such a way that $V_i \in \R^{n_{i}\times n_{i-1}}$ for $i=1,\dots,(L-1)$. By considering zero-biases vectors $v_i$, the linear maps $H_i$ introduced in the neural network definition in equation \eqref{NNdef} coincide then with the matrices $V_i$.
	
	We observe that for every $x\in \R^d$, the ReLU activation function satisfies
	$$x = \rho(x)-\rho(-x) = \begin{bmatrix}
		I_d & -I_d
	\end{bmatrix}
	\rho\left(
	\begin{bmatrix}
		I_d &-I_d
	\end{bmatrix}^{\top}
	x
	\right),$$
	where the activation function is meant to act component wise.
	By straightforward calculation, one can then see that the neural network defined here satisfies the equality $\mathcal{N}(x) = Ax$ for every $x\in \R^d$, which means that it acts on $x$ as a linear map. 
\end{proof}

Theorem \ref{th:inj} proves that we can construct a ReLU $L$-layer neural network which corresponds to a linear map. As there are infinitely many non-injective linear maps (the zero-map being a trivial example), it is then possible to construct infinitely many non-injective ReLU neural networks. Obviously, this does not show that a non-injective network, such as the one constructed in the proof of Theorem \ref{th:inj}, will also minimize the objective function used for training. It might however give a glimpse to understand that neural networks are not very likely to be injective in their input parameters.

\paragraph*{Acknowledgements}
	The authors would like to thank Vegard Antun for precious coding support and related advice, and Christian Bayer for useful discussions. The authors are also grateful to two anonymous referees for their valuable comments which helped to improve the exposition of the paper with the goal to reach a larger audience.

\section*{Conflict of interest}
The authors declare that they have no conflict of interest.


\begin{thebibliography}{40}
	\bibitem{andre10} Andresen, Arne, Steen Koekebakker and Sjur Westgaard. Modeling electricity forward prices using the multivariate normal inverse Gaussian distribution. \emph{The Journal of Energy Markets}, 3(3), 3 (2010).
	
	\bibitem{barth12a} Barth, Andrea and Annika Lang. Simulation of stochastic partial differential equations using finite element methods. \emph{Stochastics An International Journal of Probability and Stochastic Processes}, 84(2-3): 217-231 (2012).
	
	\bibitem{barth12} Barth, Andrea and Annika Lang. Multilevel Monte Carlo method with applications to stochastic partial differential equations. \emph{International Journal of Computer Mathematics}, 89(18): 2479-2498 (2012).
	
	\bibitem{barth13} Barth, Andrea, Annika Lang and Christoph Schwab. Multilevel Monte Carlo method for parabolic stochastic partial differential equations. \emph{BIT Numerical Mathematics}, 53(1): 3-27 (2013).
	
	\bibitem{fredbarth} Barth, Andrea and Fred E. Benth. The forward dynamics in energy markets – infinite-dimensional modelling and simulation. \emph{Stochastics An International Journal of Probability and Stochastic Processes}, 86(6), 932-966 (2014).
	
	\bibitem{deep1} Bayer, Christian and Benjamin Stemper. Deep calibration of rough stochastic volatility models. arXiv preprint arXiv:1810.03399 (2018).
	
	\bibitem{deep2} Bayer, Christian, Blanka Horvath, Aitor Muguruza, Benjamin Stemper and Mehdi Tomas. On deep calibration of (rough) stochastic volatility models. arXiv preprint arXiv:1908.08806 (2019).
	
	\bibitem{krigingfred} Benth, Fred E.. Kriging smooth energy futures curves. \emph{Energy Risk}  (2015).
	
	\bibitem{fredsteen} Benth, Fred E. and Steen Koekebakker. Stochastic modeling of financial electricity contracts. \emph{Energy Economics}, 30(3), 1116-1157 (2008).	
	
	\bibitem{fredbook} Benth, Fred E., J$\mathrm{\bar{u}}$rat$\mathrm{\dot{e}}$ $\mathrm{\check{S}}$. Benth and Steen Koekebakker. \emph{Stochastic Modelling of Electricity and Related Markets.} Vol. 11. World Scientific (2008).
	
	\bibitem{fredkru1} Benth, Fred E. and Paul Kr\"{u}hner. Representation of infinite-dimensional forward price models in commodity markets. \emph{Communications in Mathematics and Statistics}, 2(1), 47-106 (2014).
	
	\bibitem{fredkru2} Benth, Fred E. and Paul Kr\"{u}hner. Derivatives pricing in energy markets: an infinite-dimensional approach. \emph{SIAM Journal on Financial Mathematics}, 6(1), 825-869 (2015).
	
	\bibitem{fredflor} Benth, Fred E. and Florentina Paraschiv. A space-time random field model for electricity forward prices. \emph{Journal of Banking \& Finance}, 95, 203-216 (2018).
	
	\bibitem{teich19} B\"{u}hler, Hans, Lukas Gonon, Josef Teichmann and Ben Wood. Deep hedging. \emph{Quantitative Finance}, 19(8), 1271-1291 (2019).
	
	\bibitem{carmona} Carmona, René and Sergey Nadtochiy. Tangent Lévy market models. \emph{Finance and Stochastics}, 16(1): 63-104 (2012).
	
	\bibitem{dixon} Chataigner, Marc, Stéphane Crépey and Matthew Dixon. Deep Local Volatility. \emph{Risks}, 8(3): 82 (2020).
	
	\bibitem{clew} Clewlow, Les and Chris Strickland. \emph{Energy Derivatives: Pricing and Risk Management.} Lacima Publications, London (2000).
	
	\bibitem{christa} Cuchiero, Christa, Wahid Khosrawi and Josef Teichmann. A generative adversarial network approach to calibration of local stochastic volatility models. arXiv preprint arXiv:2005.02505 (2020).
	
	\bibitem{sdebook} Da Prato, Giuseppe and Jerzy Zabczyk. \emph{Stochastic Equations in Infinite Dimensions.} Cambridge University Press (2014).
	
	\bibitem{madan18} De Spiegeleer, Jan, Dilip B. Madan, Sofie Reyners and Wim Schoutens. Machine learning for quantitative finance: fast derivative pricing, hedging and fitting. \emph{Quantitative Finance}, 18(10), 1635-1643 (2018).
	
	\bibitem{EngelNagel} Engel, Klaus-Jochen and Rainer Nagel. A Short Course on Operator Semigroups. Springer (2006)
	
	\bibitem{green} Ferguson, Ryan and Andrew Green. Deeply learning derivatives. arXiv preprint arXiv:1809.02233 (2018).
	
	\bibitem{filipovic} Filipovi\'{c}, Damir. Consistency Problems for Heath-Jarrow-Morton Interest Rate Models. Lecture notes in Mathematics, vol. 1760. Springer, Berlin (2001).
	
	\bibitem{filipovic2}Filipovic, Damir. Term-Structure Models. A Graduate Course. Springer, 2009.
	
	\bibitem{frestad08} Frestad, Dennis. Common and unique factors influencing daily swap returns in the Nordic electricity market, 1997–2005. \emph{Energy Economics}, 30(3): 1081-1097 (2008).
	
	\bibitem{Bengio16} Goodfellow, Ian, Yoshua Bengio and Aaron Courville. \emph{Deep learning.} MIT press (2016).
	
	\bibitem{nina} Gottschling, Nina M., Vegard Antun, Ben Adcock and Anders C. Hansen. The troublesome kernel: why deep learning for inverse problems is typically unstable. arXiv preprint arXiv:2001.01258 (2020).
	
	\bibitem{HJM} Heath, David, Robert Jarrow and Andrew Morton. Bond pricing and the term structure of interest rates: A new methodology for contingent claims valuation. \emph{Econometrica: Journal of the Econometric Society}, 77-105 (1992).
	
	\bibitem{Hernandez} Hernandez, Andres. Model calibration with neural networks. Available at SSRN 2812140 (2016).
	
	\bibitem{deeplearn} Higham, Catherine F. and Desmond J. Higham. Deep learning: An introduction for applied mathematicians. \emph{SIAM Review} 61(4): 860-891 (2019).
	
	\bibitem{nn} Horvath, Blanka, Aitor Muguruza, and Mehdi Tomas. Deep learning volatility: a deep neural network perspective on pricing and calibration in (rough) volatility models. \emph{Quantitative Finance} (2020): 1-17.
	
	\bibitem{Hutchinson} Hutchinson, James M., Andrew W. Lo, and Tomaso Poggio. A nonparametric approach to pricing and hedging derivative securities via learning networks. \emph{The Journal of Finance} 49.3 (1994): 851-889.
	
	\bibitem{kallsen} Kallsen, Jan and Paul Kr\"{u}hner. On a Heath – Jarrow – Morton approach for stock options. \emph{Finance and Stochastics}, 19(3): 583-615 (2015).
	
	\bibitem{adam} Kingma, Diederik P., and Jimmy Ba. Adam: A method for stochastic optimization. arXiv preprint arXiv:1412.6980 (2014).
	
	\bibitem{koeke05} Koekebakker, Steen and Fridthjof Ollmar. Forward curve dynamics in the Nordic electricity market. \emph{Managerial Finance}, 31(6): 73-94 (2005).
	
	\bibitem{kondratiev} Kondratyev, Alexei. Learning curve dynamics with artificial neural networks. Available at SSRN 3041232 (2018).
	
	\bibitem{kov10} Kovács, Mihály, Stig Larsson and Fredrik Lindgren. Strong convergence of the finite element method with truncated noise for semilinear parabolic stochastic equations with additive noise. \emph{Numerical Algorithms} 53(2-3): 309-320 (2010).
	
	\bibitem{NS87} Nelson, Charles R. and Andrew F. Siegel. Parsimonious modeling of yield curves. \emph{Journal of Business}: 473-489 (1987).
	
	\bibitem{infdimbook} Peszat, Szymon and Jerzy Zabczyk. \emph{Stochastic Partial Differential Equations with Lévy Noise: An evolution equation approach.} (Vol. 113). Cambridge University Press (2007).
	
	\bibitem{funcanal} Rynne, Bryan and Martin A. Youngson. \emph{Linear Functional Analysis.} Springer Science \& Business Media (2013).
	
	\bibitem{T2012} Tappe, Stefan. Some refinements of existence results for SPDEs driven by Wiener processes and Poisson random measures.  \emph{International Journal of Stochastic Analysis} (2012).
	
\end{thebibliography}


\end{document}